\renewcommand{\theequation}{\arabic{section}.\arabic{equation}}
\newlength\fullwidth
\newcommand\mpar[2][]{\marginpar{\parbox{\marginparwidth}{\raggedleft #1}%
  \rlap{\hspace*{\fullwidth}{\parbox{\marginparwidth}{#2}}}}}
\newcommand{\Otilde}{{\tilde O}}
\newcommand{\Var}{\ensuremath{\mathrm{Var}}}
\newcommand{\E}{\ensuremath{\mathrm{E}}}
\newcommand{\zset}{{\mathbb Z}}
\newcommand{\mss}{{\mathcal M}^{\mathrm{ss}}_n}
\def\mssm#1{{\mathcal M}^{\mathrm{ss}}_{n,#1}}
\def\mutilde{\tilde\mu}
\def\Atilde{\tilde A}
\def\numax{\nu^{\max}}
\def\numin{\nu^{\min}}
\def\1{\ifmmode {1\hskip -3pt \rm{I}} \else {\hbox {$1\hskip -3pt \rm{I}$}}\fi}
\def\inte#1{\lfloor #1 \rfloor}
\renewcommand{\top}{\etamax}
\renewcommand{\bot}{\etamin}
\def\etamax{\eta^{\max}}
\def\etamin{\eta^{\min}}
\def\mcol{{\mathcal M}^{\mathrm{col}}_n}
\def\mpar{{\mathcal M}^{\mathrm{par}}_n}
\def\moe{{\mathcal M}^{\mathrm{OE}}_n}
\def\meo{{\mathcal M}^{\mathrm{EO}}_n}
\def\mc{{\mathcal M}}
\def\taumix{\tau^{\mathrm mix}}
\def\lessthan{\preceq}
\def\half{{\textstyle{\frac{1}{2}}}}
\def\tstar{{t^*}}
\def\partialb{\frac{\partial}{\partial b}}
\def\partiala{\frac{\partial}{\partial a}}
\def\plus{\hbox{\bf +}}
\def\minus{\hbox{\bf --}}
\def\polylog{\mathrm{polylog}}
\def\nset{\mathbb{N}}
\def\rset{\mathbb{R}}
\def\h{\eta}
\def\bbP{\Pr}
\def\nep#1{e^{#1}}
\newcommand{\nupar}{\nu^{\mathrm{par}}}
\def\veci{{\mathbf i}}
\def\vecoe{{\mathbf{OE}}}
\def\veceo{{\mathbf{EO}}}
\def\vecw{{\mathbf w}}
\def\tc{\mid}
\def\gap{\hbox{\rm gap}}
\newtheorem{theorem}{Theorem}[section]
\newtheorem{lemma}[theorem]{Lemma}
\newtheorem{proposition}[theorem]{Proposition}
\newtheorem{claim}[theorem]{Claim}
\newtheorem{corollary}[theorem]{Corollary}
\newtheorem{remark}[theorem]{Remark}
\def\square{\vbox{\hrule height.2pt\hbox{\vrule width.2pt height5pt \kern5pt
                                   \vrule width.2pt} \hrule height.2pt}}
\def\bigpar{\bigbreak\@afterindentfalse\@afterheading\ignorespaces}
\def\medpar{\medbreak\@afterindentfalse\@afterheading\ignorespaces}
\def\smallpar{\smallbreak\@afterindentfalse\@afterheading\ignorespaces}
\begin{document}
\title{Mixing time for the solid-on-solid model
}
\author[Fabio Martinelli]{Fabio Martinelli}
\address{Fabio Martinelli: Department of Mathematics, University
of Roma Tre, Largo San Murialdo~1, 00146~Roma, Italy.  Email: {\tt martin@mat.uniroma3.it}}
\author[Alistair Sinclair]{Alistair Sinclair}
\address{Alistair Sinclair: Computer Science Division, University
        of California, Berkeley CA~94720-1776, U.S.A.  Email: {\tt sinclair@cs.berkeley.edu}}
\thanks{An extended abstract of this paper appeared in {\it Proceedings of the
41st ACM Symposium on Theory of Computing (STOC)}, 2009, pp.~571--580.}        
\thanks{FM was supported in part by the European Research Council
        through AdG ``PTRELSS'' 228032.  AS was supported in part by
        NSF grant CCF-0635153.}
\date{\today}
\maketitle
\thispagestyle{empty}

\maketitle
\begin{abstract}
We analyze the mixing time of a natural local Markov chain
(the  Glauber dynamics) on configurations of the solid-on-solid
model of statistical physics.  This model has been proposed,
among other things, as an idealization of the behavior of contours
in the Ising model at low temperatures.  Our main result is an upper
bound on the mixing time of $\Otilde(n^{3.5})$, which is tight within
a factor of $\Otilde(\sqrt{n})$.  The proof, which in addition gives
some insight  into the actual evolution of the contours,
requires the introduction
of a number of novel analytical techniques that we conjecture will
have other applications.
\end{abstract}

\newpage
\setcounter{page}{1}


\section{Introduction}\label{sec:intro}

In the $n\times n$ {\it solid-on-solid (SOS)\/} model~\cite{Privman1,Privman2}, a configuration
is an assignment of an integer {\it height\/} $\eta(i)\in[0,n]$\footnote{Throughout
the paper, $[a,b]$ will denote the integer points in the interval $[a,b]$.} to each of $n$
positions $i\in[1,n]$, with fixed boundary conditions $\eta(0)=\eta(n+1)=0$.
The probability of a configuration is given by the {\it Gibbs distribution}
\begin{equation}\label{eq:gibbs}
    \mu(\eta) = Z_\beta^{-1} \exp\left\{{ -\beta\sum\nolimits_{i=1}^{n+1} |\eta(i-1) - \eta(i)|}\right\}.
\end{equation}
Here $\beta$ is a positive parameter and $Z_\beta$ is a
normalizing factor (the ``partition function").  Thus a configuration $\eta=\{\eta(i)\}$ of
the SOS model may be pictured as an interface or contour with fixed endpoints $(0,0)$ and
$(n+1,0)$ (see Fig.~\ref{fig1}(a)).  Notice that the Gibbs distribution favors contours that are
``smooth'' (i.e., have no large jumps in height), this bias being more pronounced for larger
values of~$\beta$. Moreover, the contour can be thought of as the path of an $n$-step
random walk with independent geometric increments, conditioned to be positive and smaller than $n$ and to 
return to the origin at time~$n$. Therefore, its typical maximum height will be of order $\sqrt{n}$. 

In this paper we analyze the (discrete time) {\it Glauber dynamics\/} for the SOS
model.  This is a natural local Markov chain on configurations whose transitions 
update the  height at a randomly chosen position~$i$ from~$\eta(i)$
to $\eta(i)\pm 1$; the transition probabilities are chosen so that the dynamics is
reversible w.r.t.\ the Gibbs distribution~$\mu$ and thus converges to it from any initial
configuration.  Our goal is to determine the {\it mixing time}, i.e., the number of
steps until the law of dynamics is  close to its equilibrium distribution~$\mu$ in variation distance.

\begin{figure}[b]
\begin{center}
\includegraphics[width=0.9\textwidth]{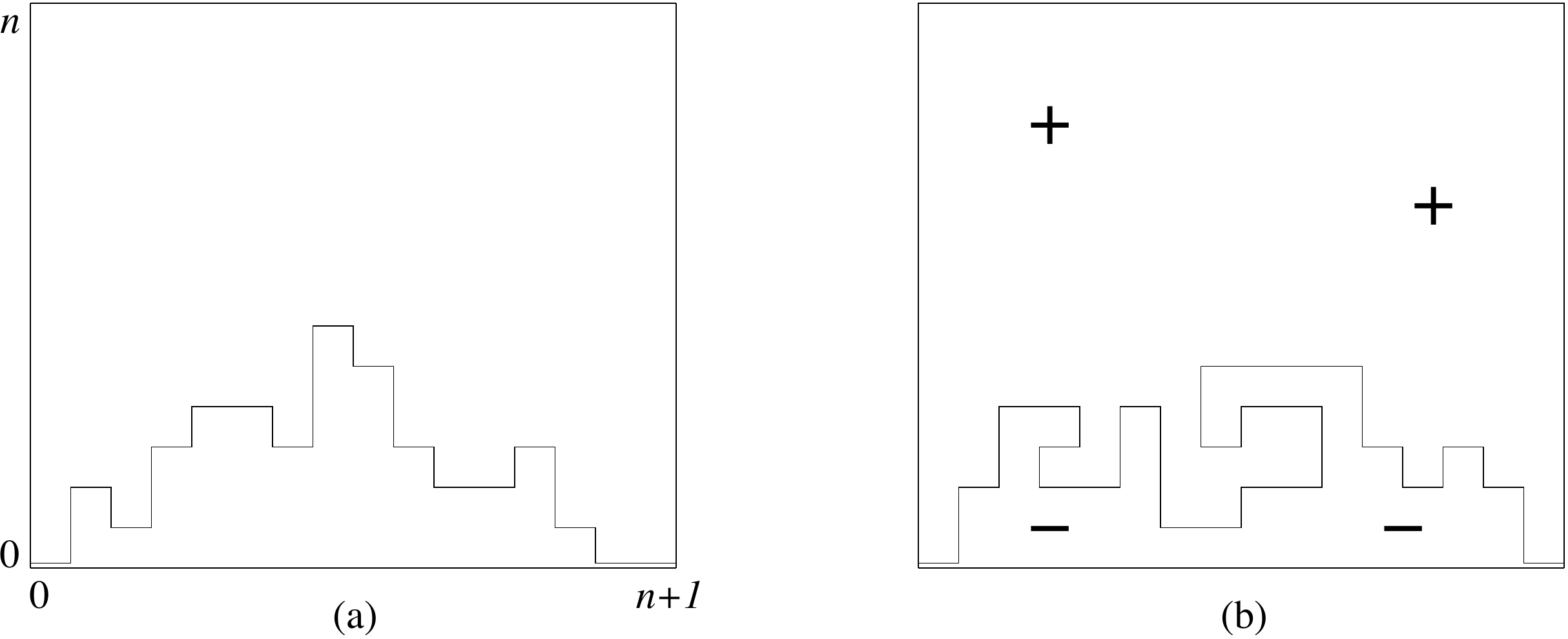}
\caption{(a) A contour in the SOS model.  (b) A contour in the Ising model.}
\label{fig1}
\end{center}
\end{figure}

Although Markovian dynamics for the SOS and related models have been studied extensively
in many contexts connected with the behavior of random 
surfaces (see, e.g.,~\cite{Dunlop,Funaki,Giacomin,Posta}), to the best of our
knowledge the mixing time has not been rigorously analyzed.  
There are at least three motivations for studying this question, which we now describe.

The first motivation comes from
the tight connection with the more familiar (two dimensional) Ising model, whose Glauber dynamics has
been the focus of much attention in both statistical physics and computer science
(see, e.g., \cite{BKMP,Cesi,MO1,MSW,Mart,BM,SZ}).

In the Ising model in an $n\times n$ box $\Lambda_n\subseteq\zset^2$, the configurations
are assignments~$\sigma$ of spin values $\{\plus,\minus\}$ to the vertices of~$\Lambda_n$.  
The Gibbs distribution is $\mu(\sigma) = Z_\beta^{-1}\exp(-\beta D(\sigma))$, where $D(\sigma)$
is the number of neighbors in~$\Lambda_n$ whose spins differ and $\beta$ is 
inverse temperature.  The (heat-bath) Glauber dynamics runs as follows: at each time step a random vertex $i\in \Lambda_n$ is chosen and its current spin value is replaced by a new value sampled from the equilibrium distribution at $i$ given the neighboring spins. A variety of techniques have been
introduced in order to analyze, at increasing levels of
sophistication, the typical time scales of the relaxation process to
the reversible Gibbs measure (see, e.g., \cite{Mart,Cesi,MSW,Peresmeanfield}).
These techniques have proved to be quite successful in the so-called
``one-phase" region, corresponding to the case when the
system has a unique Gibbs state. When instead the thermodynamic
parameters of the system correspond to a point in the ``phase
coexistence" region, a whole class of new dynamical phenomena appear
(such as coarsening, phase nucleation, and motion of interfaces between different
phases) whose mathematical analysis at a microscopic level is
still far from complete. 

One of the most important and fundamental open problems is that of 
proving a polynomial (in~$n$) upper bound on the mixing time at low temperatures (large~$\beta$),
when the boundary conditions around the edges of~$\Lambda_n$ are fixed to be~$\plus$ 
and hence force the system into the $\plus$~phase\footnote{It is worth mentioning that 
when the underlying graph ${\mathbf Z}^2$ is replaced by a regular tree or hyperbolic graph, 
then optimal $O(n\log n)$ bounds on the mixing time~\cite{MSW} or on the spectral gap~\cite{Bianchi} have been established.}. 
(We remark that even the proof of a \emph{lower bound}, usually a much 
simpler task, requires all the heavy technology of the Wullf construction 
and the associated large deviation theory~\cite{BM}.)
The above question is easily reduced to the following problem: 
if  the box is initially filled with $\minus$ spins,
how long does it take until this large region of~$\minus$ is destroyed under the influence of
the $\plus$ boundary conditions and replaced by an equilibrium configuration?  
This in turn is equivalent to the question of how the outer contour
of the $\minus$ region contracts towards the center of the box.  For large~$\beta$, it is 
conjectured~\cite{HF} that the contour evolves according to a mean curvature motion 
and therefore  should disappear in polynomial time~$O(n^4)$ 
(independent of~$\beta$)\footnote{In continuous time the corresponding scaling 
should be $O(n^2)$, apart from possible logarithmic corrections.}; 
however, until very recently only very weak upper bounds of the form 
$\exp(O(n^{1/2+\epsilon}))$ were known~\cite{Mart} (except in the qualitatively different
zero temperature case, which is analyzed in~\cite{Chayes}). 

The SOS model has been proposed~\cite{Privman2}
as an idealized model of this Ising contour, in which we think of the sites above and below
the SOS contour as being~$\plus$ and~$\minus$ respectively.  
(Note that the sum $\sum_i|\eta(i-1)-\eta(i)|$ in the Gibbs distribution~(\ref{eq:gibbs}) is,
up to an additive constant, exactly $D(\sigma)$ under this interpretation.)
The mixing time is essentially the number of steps until the maximal contour
(i.e., with $\eta(i)=n$ for $1\le i\le n$) drops down close to the bottom of the box under
the influence of the boundary conditions of height~0.
The main simplification here is that, unlike the Ising model, the SOS contour has
no ``overhangs" (see Fig.~\ref{fig1}(b)).  However, for large~$\beta$ one may hope that
overhangs are rare, so the approximation should give useful insight into the behavior of
the true Ising contour (see~\cite{DKS} for much more on this point).
One of our principal motivations in this paper is to introduce techniques that
may find application to the Ising model.  Indeed, this has already occurred, as the
very recent paper~\cite{FabioFabio} builds on some of the ideas and techniques of
the present paper to obtain an upper bound of $\exp(O(n^{\epsilon}))$ (for
arbitrary $\epsilon>0$) on the mixing time
of the Ising model at low temperatures with $\plus$ boundaries, a substantial 
improvement on the $\exp(O(n^{1/2+\epsilon}))$ bound mentioned earlier
though still quite far from polynomial.

The second motivation comes from general polymer models~\cite{Giacomin},  
and their natural Glauber-type evolutions (e.g., \cite{wilson,LRS,CMT}). In these 
models the ``polymer'' is just the $n$-step path of some type of random walk, 
starting and ending at zero and constrained to stay non-negative. The associated 
Gibbs distribution is simply that induced by the probability distribution of the random 
walk. An additional interaction, or \emph{pinning}, between the polymer and the line 
or ``wall''  at height zero can also be included, and the nature of this interaction 
(attractive, repulsive or even random) plays a crucial role. The Glauber dynamics 
can be defined in analogous fashion to 
the one studied in this paper. When the increments of the random walk are 
i.i.d.~$\pm 1$ random variables (with or without a bias), the mixing time of the 
associated Glauber chain has been analyzed quite precisely in various cases 
using the so called ``Wilson method''~\cite{CMT}.  When instead the increments 
are no longer uniformly bounded, as is the case for the SOS interface in this paper, 
a rigorous analysis of the associated Glauber dynamics apparently becomes much 
more challenging.

This brings us to our third motivation, which stems from the challenge that the SOS model poses to standard techniques.  The two most natural approaches to estimating the mixing time seem to be the following:
\par\medskip\noindent
1.\ {\it Coupling.}  One might hope that, under the natural monotone coupling of the SOS
model (see Section~\ref{sec:prelims}), the expected Hamming distance between two coupled 
copies of the dynamics is non-increasing.  This would lead to a mixing time bound of
$\Otilde(n^5)$,\footnote{Throughout the paper
the notation $\Otilde(\cdot)$ hides factors of $\polylog(n)$.} which as we shall see is
rather weak and also gives little insight into the evolution of the contour.  In fact
even this is not always true (the distance may increase in expectation in some cases),
and a direct approach based on monotone coupling remains elusive.
\par\smallskip\noindent
2.\ {\it Comparison.}  Another standard approach is to first analyze a 
``non-local" dynamics in which transitions are allowed to update the height~$\eta(i)$ to
any value in $[0,n]$.  Typically, non-local dynamics are easier to analyze precisely (see,
e.g., \cite{LRS, wilson}).  One can then use the machinery of Diaconis and Saloff-Coste~\cite{DS-C}
to relate the mixing time of the local dynamics to that of the non-local one, as was done,
for example, by Randall and Tetali~\cite{RT} for the related ``lozenge tilings" model.
However, since such comparisons proceed via the spectral gap, they are usually quite
wasteful; in particular, for the SOS model this approach leads to a mixing time of $\Otilde(n^8)$.
\par\medskip\noindent
In this paper we aim for a more refined analysis that gives, in addition to an 
almost tight bound, greater insight into the actual evolution of the contour 
in the SOS model.  Our main result is the following:
\begin{theorem}\label{thm:main}
For the $n\times n$ SOS model at any inverse temperature $\beta>0\,$, the mixing time is 
$\Otilde(n^{3.5})$.
\end{theorem}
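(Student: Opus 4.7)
The plan is to bound the mixing time via the coalescence time under the monotone (grand) coupling of the Glauber dynamics, using the extremal initial conditions: the maximal contour $\top$ with $\eta(i)=n$ for all $i$, and the minimal contour $\bot$ with $\eta(i)=0$. By monotonicity it suffices to show that starting from $\top$, the chain reaches a configuration with maximum height $O(\sqrt{n})$ --- the typical equilibrium scale --- within $\tilde O(n^{3.5})$ steps, after which a standard argument shows the two chains coalesce in at most the same additional time. Since the naive monotone coupling argument provably fails to contract Hamming distance (as noted in the introduction), the core of the proof must replace this contraction by a direct analysis of the \emph{descent} of the top contour.

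My first step is to introduce a multi-scale censoring scheme: fix a geometric sequence of heights $h_k = 2^{-k} n$ from $n$ down to $\sqrt{n}$, and at scale $h_k$ censor (in the sense of Peres--Winkler) all updates that would raise the height above $h_k$ inside suitably chosen subintervals. Censoring preserves monotone ordering and only increases the variation distance from equilibrium, so it is legitimate to prove an upper bound against the censored dynamics. Within each scale, I would partition $[1,n]$ into blocks of length $\ell_k$ chosen so that the equilibrium fluctuations $\sqrt{\ell_k}$ of the SOS contour on a block match the scale $h_k$ one is trying to establish (so $\ell_k \asymp h_k^2$); inside each block, the ``front'' of the plateau at height $h_k$ can then be analyzed as an effectively one-dimensional random walk, and by a Wilson-type $L^2$ argument one shows that the maximum height drops from $h_k$ to $h_{k+1}$ in time $\tilde O(n \cdot h_k^2)$ per block. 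Geometric summation over scales, with the bottleneck at $h_k \asymp \sqrt{n}$, then produces the bound $\tilde O(n \cdot n) \cdot \sqrt{n}=\tilde O(n^{3.5})$.

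The final step is the ``near-equilibrium'' phase: once the maximum height is $O(\sqrt{n})$, the effective state space is $[0,\sqrt{n}]^n$ with bounded typical gradients, and one can run a coupling/comparison argument in the spirit of \cite{CMT,wilson} (comparing with a non-local block dynamics on blocks of logarithmic size, whose mixing is controlled by standard spectral techniques) to bring the two chains to coalescence in $\tilde O(n^{3.5})$ further steps. Combining the descent phase and the near-equilibrium phase gives the theorem.

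The hard part is unquestionably the descent analysis of the middle paragraph. The difficulty is twofold: first, monotone Hamming contraction can fail, so the argument must work with an auxiliary observable (the maximum height, or an integrated area functional) rather than with the natural distance; second, to get $\tilde O(n^{3.5})$ rather than $\tilde O(n^5)$ one must capture the fact that the relevant dynamical scale is not $n$ but $\sqrt{n}$, which forces a scale-dependent block decomposition whose size $\ell_k \asymp h_k^2$ must be large enough for equilibrium estimates to be valid yet small enough for the block-level Wilson argument to close. Matching these two constraints at every scale, and simultaneously controlling the (rare) excursions above the current scale via the censoring inequality, is where the novel analytical input of the paper will be required.
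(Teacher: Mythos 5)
Your high-level strategy --- bound coalescence time of the extremal chains, recognize that naive Hamming contraction fails, use censoring and a Wilson-type analysis to control the descent of the top contour --- is the right one and matches the paper's. But the central mechanism you propose for the descent has a genuine gap, and it is exactly at the point you yourself flag as the hard part.

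You propose a single geometric sequence of heights $h_k = 2^{-k}n$ running from $n$ down to $\sqrt{n}$, with block size $\ell_k \asymp h_k^2$. For $h_k \gg \sqrt{n}$ this gives $\ell_k \gg n$, so the blocks do not fit in $[1,n]$; more fundamentally, the whole conditioning/censoring apparatus you are invoking requires the event ``contour is at height $\ge h_k$'' to have equilibrium probability at least $\exp(-\mathrm{polylog}\, n)$ (so that the overhead $D = \lceil\log(1/\mu(A))\rceil$ in the censoring reduction is polylogarithmic), but for $h > \sqrt{n}$ that probability is $\exp(-\Theta(h^2/n)) = \exp(-\mathrm{poly}(n))$. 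So the geometric descent cannot be started at $h_0 = n$: the conditioning breaks down immediately. Relatedly, your time accounting does not actually compute: a geometric sum of $\tilde O(n\cdot h_k^2)$ over $h_k = n, n/2, \dots, \sqrt n$ is dominated by the top term $\tilde O(n^3)$, not by the bottom, and the extra $\sqrt n$ factor you multiply by at the end does not arise from the scheme as described.

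There is also a technical problem with the censoring step as stated. ``Censor all updates that would raise the height above $h_k$'' is an \emph{outcome-dependent} modification of the chain; the Peres--Winkler inequality (Lemma~\ref{lem:censoring}) only licenses dropping updates whose positions are chosen a priori, not dropping updates based on what they would do. Replacing the chain by one with a hard ceiling at $h_k$ is a different dynamics, and moreover a ceiling pushes the process \emph{down}, which is the wrong monotonicity direction when you are trying to dominate the descent of $\etamax_t$ from above.

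The paper's resolution is different in a way that removes both problems: instead of a geometric ladder of heights, it performs $\sqrt n$ \emph{arithmetic} stages, at stage $j$ lowering the \emph{floor} of the height set from $n-(j-1)\sqrt n$ to $n-j\sqrt n$. Each floored dynamics dominates the original chain by monotonicity (floors, not ceilings, give the right direction), and within a single stage the contour only needs to travel $O(\sqrt n)$ above the floor, so the conditioning event $A_h = \{\eta(i)\ge h\ \forall i\}$ with $h\le\sqrt n$ has equilibrium probability $\exp(-O(\log n))$, making $D = O(\log n)$ in Lemma~\ref{first key lemma}. The geometric-halving idea you had (Lemma~\ref{second key lemma}, Corollary~\ref{key corollary}) is used \emph{inside} each $\sqrt n$-stage, where it is valid, and the blow-up of the interval $\ell \asymp h^2/\log n$ never exceeds $n$. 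Each stage costs $\tilde O(n^3)$ (inherited from the column-dynamics mixing time of Theorem~\ref{thm:col} through censoring), and $\sqrt n$ stages give $\tilde O(n^{3.5})$. Finally, the ``near-equilibrium phase'' you gesture at is handled separately in the paper by a different decomposition: the ascent from $\bot$ is controlled by successively unpinning every $2^j$-th site from height~$0$ (Lemma~\ref{lem:minmain}), which costs only $\tilde O(n^3)$ and is not the bottleneck.
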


The bound on mixing time is tight up to a factor of~$\sqrt{n}$ (and logarithmic factors),
as a lower bound of $\Omega(n^3)$ follows from straightforward arguments (see
Theorem~\ref{thm:lower} below).
\par\medskip
The high level strategy of our analysis is as follows:
\par\medskip\noindent
(a) We first prove (see section~\ref{from n to equilibrium}) that in $\tilde O(n^{3.5})$
steps the maximal configuration (i.e., the one in which the contour
has height $n$ everywhere) reaches equilibrium.  This analysis in turn is split
into $O(\sqrt{n})$ repetitions of a basic {\it key result\/} which says
that, starting in equilibrium but conditioned to be above height
$\Otilde(\sqrt{n})$, in time~$\Otilde(n^3)$ the system reaches
equilibrium (see Theorem \ref{root(n) mix}). This result allows us to
bring the original contour at height~$n$ down to equilibrium in a sequence
of $O(\sqrt{n})$ stages, each of which runs in $\tilde O(n^3)$ steps and decreases
the height by~$\sqrt{n}$.
\par\medskip\noindent
(b) We then (see section~\ref{from zero to equilibrium})
analyze the time to reach equilibrium when the initial configuration
is the minimal one (where the height is~0 everywhere), and show that
$\Otilde(n^3)$ steps suffice.
\par\medskip\noindent
The results of~(a) and~(b) immediately imply, by standard results on monotone
coupling, that the mixing time is $\Otilde(n^{3.5})$.  

\par\medskip
Our analysis in the key intermediate result of part~(a), and also in part~(b),
rests on the following four essential  ingredients:
\par\smallskip\noindent
(i)\ First, we give a tight analysis of the non-local dynamics mentioned above,
showing that its mixing time is $O(n^3\log n)$ (see Theorem~\ref{thm:col}).  
~This analysis, which we believe to be of independent interest,
follows an idea of Wilson, developed
in the context of the lozenge tilings model~\cite{wilson}, in using an eigenvector of
the discrete Laplacian to obtain a contraction in distance.  However, to get this
approach to work in our setting we need to bound a certain ``entropy repulsion" 
effect due to the height barriers at~0 and~$n$ (see Lemma~\ref{lem:newht}).
\par\medskip\noindent
(ii)\ We then relate the local to the non-local dynamics using a recent ``censoring
inequality" of Peres and Winkler~\cite{peres}, which says that {\it censoring\/} (i.e., not applying)
some subset of updates in a monotone dynamics can only increase the distance from
stationarity.  This allows one to simulate a single move of the non-local dynamics,
at position~$i$, by censoring all local moves except those that update~$\eta(i)$; by the
censoring inequality, this can only increase the mixing time.  As a result, the mixing time
of the local dynamics is bounded above by that of the non-local dynamics times a factor
related to the mixing time of the one-dimensional local process within the $i$th ``column".
Essentially, censoring allows us to ``schedule" the updates and thus maintain detailed
control of the shape  of the contour.
\par\smallskip\noindent
(iii)\ A na\"\i ve application of the censoring inequality would entail a substantial overhead
of $O(n^2)$ due to the mixing time within a column, which is essentially the square of the maximum 
height difference between the two neighboring columns.  To overcome this, we need to control 
the height  differences, or ``gradients" along the contour.  For this purpose, we work with
a sequence of  ``bounding dynamics" with gradually decreasing boundary conditions (these
correspond to the $O(\sqrt{n})$ repetitions of the basic result mentioned earlier);
since the boundary conditions are---intuitively at least---the source of large gradients,
this gives us control of the gradients.  As a result, we are able to cut the simulation 
overhead between the local and non-local dynamics to~$\Otilde(\sqrt{n})$.  We note
that this sequence of bounding dynamics captures some of the intuition about 
the actual evolution of the contour.
\par\smallskip\noindent
(iv)\ Making rigorous the  above bound on gradients requires detailed information 
about the non-equilibrium
shape of the contour, which is notoriously difficult to obtain.   We get around this difficulty
by starting the bounding dynamics {\it in equilibrium}, but {\it conditioned on a certain rare
event~$A$.}  (The conditioning is necessary to ensure that the bounding property holds.)
By choosing~$A$ such that its
probability, though tiny, is nonetheless much larger than the probability of large gradients in
equilibrium, we are able to argue that large gradients do not occur during the evolution.
This technique is isolated in Lemma~\ref{first key lemma}.

\section{Preliminaries}
\label{sec:prelims}
\noindent{\bf Gibbs distribution.}  
We denote by $\Omega_n=[0,n]^n$ the
set of all configurations $\eta=\{\eta(i)\}_{i=1}^n$ of the $n\times n$ solid-on-solid
model, as defined in the Introduction.  The probability of a configuration~$\eta$
is given by the Gibbs distribution defined in equation~(\ref{eq:gibbs}).
This distribution induces a conditional distribution on the height $\eta(i)$ at
position~$i$, given the heights $\eta(i\pm 1)$ at its neighbors, as follows.
Let $a=\min\{\eta(i-1),\eta(i+1)\}$, $b=\max\{\eta(i-1),\eta(i+1)\}$.  Then
$\mu_{ab}(j):=\Pr[\eta(i)=j\mid a,b]$ is given by
\begin{equation}\label{eq:trans}
   \mu_{ab}(j) = \begin{cases}
       e^{-\beta(b-a)-2\beta(a-j)} / Z & \hbox{\rm if } 0\le j<a;\\
       e^{-\beta(b-a)} / Z & \hbox{\rm if } a\le j\le b;\\
       e^{-\beta(b-a)-2\beta(j-b)} / Z & \hbox{\rm if } b<j\le n,
       \end{cases}
\end{equation}
where $Z=Z_\beta$ is a normalizing factor.  Note that $\mu_{ab}$ is uniform on the
interval $[a,b]$ and decays exponentially (at a rate depending on~$\beta$)
outside it.

\medskip\noindent
{\bf Single-site dynamics.}  
Our goal is to analyze the {\it single-site Glauber
dynamics\/},\footnote{This dynamics is sometimes called
the ``heat-bath'' dynamics; this distinction is unimportant for our purposes.}
which is a reversible Markov chain $\mss$ on~$\Omega_n$ with
transitions defined as follows, where $\eta=\eta_t$ denotes the
current configuration at time~$t$:
\begin{enumerate}
\item Pick $i\in [1,n]$ u.a.r.
\item Let $\eta^-$, $\eta^+$ be the  configurations obtained from~$\eta$
by replacing $\eta(i)$ by $\max\{\eta(i)-1,0\}$ and $\min\{\eta(i)+1,n\}$ respectively.
Set $\eta_{t+1}$ equal to $\eta^-$ or $\eta^+$ with probabilities
$p^-$, $p^+$ respectively, determined as follows (where $a,b$
are the minimum and maximum heights of the neighbors, as above):
if $\eta(i)\le a$ then $p^- = \frac{1}{4}e^{-2\beta}$, else $p^- = \frac{1}{4}$;
if $\eta(i)\ge b$ then $p^+ = \frac{1}{4}e^{-2\beta}$, else $p^+ = \frac{1}{4}$.
With the remaining probability $1-(p^-+p^+)$, set $\eta_{t+1}=\eta$.
\end{enumerate}
It is standard that $\mss$ is an ergodic, reversible Markov chain that converges
to the stationary distribution~$\mu$ on~$\Omega_n$.  Our goal is to estimate its
{\it mixing time}, i.e., the number of steps required for the distribution to
get close (in variation distance) to~$\mu$ from an arbitrary initial configuration.

\medskip\noindent
{\bf Column dynamics.}  We will analyze $\mss$ by first analyzing a related
Glauber dynamics $\mcol$ that makes {\it non-local\/} moves.  (The term ``column"
refers to the set $[0,n]$ of possible heights at~$i$.)  If the configuration
at time~$t$ is $\eta_t=\eta$, $\mcol$ makes a transition as follows:
\begin{enumerate}
\item Pick $i\in [1,n]$ u.a.r.
\item For each $j\in [0,n]$, let $\eta^j$ denote the configuration obtained
from~$\eta$ by replacing $\eta(i)$ by~$j$.  Set $\eta_{t+1}=\eta^j$ with probability
proportional to $\mu(\eta^j)$.
\end{enumerate}
$\mcol$ is again ergodic and reversible with stationary distribution~$\mu$.  
Note that both $\mss$ and $\mcol$ update the height at a randomly chosen position~$i$
in a manner that is reversible w.r.t.\ the conditional distribution~\eqref{eq:trans}.
The difference is that $\mss$ considers only local moves
(changing the height by $\pm 1$), while $\mcol$ allows the height at~$i$ to be set to
any value.  Accordingly, we call $\mcol$ the ``column dynamics'' and $\mss$ the
``single-site dynamics.''

\medskip\noindent
{\bf Mixing time.}  Let $\mc$ by any reversible Markov chain on~$\Omega_n$ with 
stationary distribution~$\mu$.  Following standard  practice, we measure the convergence
rate of~$\mc$ via the quantity
\begin{equation*}
   \tau_{\mc}(\varepsilon) = \min\{t:\Vert\nu_t^\xi - \mu\Vert \le\varepsilon \;\forall\xi\in\Omega_n\},
\end{equation*}
where $\nu_t^\xi$ denotes the distribution of the configuration at time~$t$ starting
from configuration~$\xi$ at time~0, and $\Vert \cdot\Vert$ denotes variation distance.  
Thus $\tau_{\mc}(\varepsilon)$ is the number of steps until the variation distance 
from~$\mu$ drops to~$\varepsilon$, for an arbitrary initial configuration.  For definiteness
we define the {\it mixing time\/} as $\taumix_{\mc}=\tau_{\mc}(1/2e)$;  it is well known 
(see, e.g., \cite{Aldous}), that 
$\tau_{\mc}(\varepsilon) \le \lceil\ln\varepsilon^{-1}\rceil\times\taumix_{\mc}$
for all $\varepsilon>0$.

\medskip\noindent
{\bf Monotonicity and coupling.}  We define a natural partial order on~$\Omega_n$
as follows: for configurations $\eta,\xi\in\Omega_n$, we say that $\eta\lessthan\xi$
iff $\eta(i)\le\xi(i)$ for all $i\in [1,n]$.  Note that $\lessthan$ has 
unique maximal and minimal elements $\top$ and $\bot$ given by $\top(i)=n$ and
$\bot(i)=0$ for $1\le i\le n$.  We can naturally extend this ordering to probability
distributions as follows: for two distributions $\nu,\mu$ on $\Omega_n$,  we write 
$\nu\lessthan \mu$ if  for any increasing\footnote{A real-valued
function $f$ on~$\Omega_n$ is {\it increasing\/} w.r.t.~$\lessthan$ if $\eta\lessthan\xi$ 
implies $f(\eta)\le f(\xi)$.} function~$f$ the average of~$f$ w.r.t.~$\nu$
is less than or equal to its average w.r.t.~$\mu$.  

A key fact we shall exploit throughout is the existence of a {\it complete coupling\/}
of the Glauber dynamics (single-site or column) that is  monotone w.r.t.~$\lessthan$.
A complete coupling of a Markov chain~$\mc$ on~$\Omega_n$ is a random function
$f:\Omega_n\to\Omega_n$ that preserves the transition probabilities of~$\mc$, i.e.,
$\Pr_f[f(\eta) = \eta'] = \Pr_{\mc}(\eta\to\eta')$ for all $\eta,\eta'\in\Omega_n$.
Note that $f$ simultaneously couples the evolution of the Markov chain at all configurations.
For the column dynamics, we define $f$ as follows.  Suppose the current configuration
is~$\eta$:
\begin{enumerate}
\item Pick $i\in [1,n]$ and a real number $r\in[0,1]$ independently and u.a.r.
\item Let $g(k) = \sum_{j=0}^k\mu_{ab}(j)$ be the cumulative distribution function of 
the height at position~$i$, given neighboring heights $a,b$.  Set $\eta'(i) = \min\{k:g(k) \le r\}$.  
\end{enumerate}
An analogous definition holds for the single-site dynamics.
It is simple to check that these couplings are {\it monotone\/} w.r.t.\ the partial order~$\lessthan$,
in the sense that if $\eta_t\lessthan\xi_t$, and $\eta_{t+1}, \xi_{t+1}$ are the
corresponding configurations at the next time step under the coupling, then
$\eta_{t+1}\lessthan\xi_{t+1}$. 

A further standard fact we will need is that the mixing time of the Glauber dynamics
is bounded above by the time until the coupled evolutions started in the two extremal
configurations, $\top$ and~$\bot$, coincide with constant probability.  More precisely:
\begin{proposition}\label{prop:cfp}{\bf \cite{pw}}
Let $(\etamax_t)$, $(\etamin_t)$ denote the coupled evolutions of two copies of a
monotone Glauber dynamics~$\mc$ on~$\Omega_n$ started in configurations 
$\top$, $\bot$ respectively.  
Then $\tau_{\mc}(\varepsilon) \le \min\{t:\Pr[\etamax_t \ne \etamin_t] \le \varepsilon\}$.
\end{proposition}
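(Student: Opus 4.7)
The plan is to combine three ingredients already assembled just above the statement: the complete monotone coupling $f$, the coupling inequality for variation distance, and stationarity of $\mu$ under $\mc$. I would not need any probabilistic estimate specific to the SOS model; the conclusion should follow purely from the abstract framework of monotone coupling.

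First I would iterate $f$ on a single probability space to produce simultaneously the whole family of trajectories $\{(\xi_t)_{t\ge 0}:\xi\in\Omega_n\}$, one for each initial configuration. Monotonicity of $f$, which was checked at the single-step level in the preceding paragraph, propagates to every $t$ by a straightforward induction, giving the sandwich
\[
   \etamin_t \lessthan \xi_t \lessthan \etamax_t \qquad \text{for every } \xi\in\Omega_n,\ t\ge 0.
\]
Hence, on the event $\{\etamax_t=\etamin_t\}$, all the coupled chains collapse to a common configuration at time $t$.

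Next I would introduce an auxiliary chain $(\zeta_t)$ whose initial value $\zeta_0$ is drawn from~$\mu$ independently of the randomness in $f$, and is then evolved using the same $f$. By stationarity, $\zeta_t\sim\mu$ for every $t$, and by monotonicity $\etamin_t\lessthan\zeta_t\lessthan\etamax_t$, so $\zeta_t=\xi_t$ for every $\xi$ whenever the two extremal chains coincide. The coupling inequality, applied to the pair $(\xi_t,\zeta_t)$ for a fixed but arbitrary starting configuration $\xi$, then yields
\[
   \|\nu_t^\xi-\mu\| \;\le\; \Pr[\xi_t\ne\zeta_t] \;\le\; \Pr[\etamax_t\ne\etamin_t].
\]
Maximizing the left-hand side over $\xi\in\Omega_n$ and comparing with the definition of $\tau_{\mc}(\varepsilon)$ gives exactly the claimed bound: any $t$ with $\Pr[\etamax_t\ne\etamin_t]\le\varepsilon$ satisfies $\|\nu_t^\xi-\mu\|\le\varepsilon$ for every starting configuration, whence $\tau_{\mc}(\varepsilon)\le t$.

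There is no genuine obstacle in this argument; the proof is essentially one line once the complete coupling and its monotonicity are in hand. The only point deserving brief attention is making sure that the stationary chain $(\zeta_t)$ can indeed be constructed on the same probability space as $(\etamax_t)$ and $(\etamin_t)$, which amounts to enlarging the probability space by an independent $\mu$-distributed initial value and is harmless. Everything else is bookkeeping.
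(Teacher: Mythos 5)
Your argument is correct and is precisely the standard ``grand monotone coupling'' proof underlying this proposition, which the paper simply cites from Propp and Wilson~\cite{pw} without reproducing. The sandwich $\etamin_t\lessthan\zeta_t\lessthan\etamax_t$ for a stationary companion chain $\zeta_t$, combined with the coupling inequality, is exactly the intended mechanism, and your observation that the stationary chain lives on the same probability space after enlarging by an independent $\mu$-distributed seed is the only point that needs stating.
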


\medskip\noindent
{\bf Censoring.}
In our analysis of the single-site dynamics, we shall also need a useful tool from recent
work of Peres and Winkler, which says that {\it censoring\/} (i.e., not applying)
any subset of updates in
the dynamics can only increase the distance from stationarity.  This so-called ``censoring
inequality" applies to any monotone single-site dynamics.
\begin{lemma}\label{lem:censoring}{\bf \cite{peres}}
Suppose a monotone single-site dynamics is started in a random initial configuration
with distribution $\nu_0$ such that $\nu_0/\mu$ is increasing w.r.t.~$\lessthan$.  Let
$\nu$ denote the distribution after updates at positions $i_1,i_2,\ldots,i_m$, and $\nu'$
the distribution after updates at a subsequence of these positions
$i_{j_1}, i_{j_2},\allowbreak\ldots,i_{j_{m'}}$ (chosen a priori).
Then $\nu/\mu$ is increasing and $\Vert\nu-\mu\Vert \le \Vert \nu'-\mu\Vert$.
\end{lemma}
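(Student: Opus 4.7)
The plan is to reduce the lemma to two local statements about a single update step, then combine them by induction. The local claims are: (i) (preservation) if $\nu/\mu$ is increasing then so is $(\nu P_i)/\mu$, where $P_i$ denotes the single-site update kernel at position $i$; and (ii) (contraction) under the same hypothesis, $\|\nu P_i - \mu\| \le \|\nu - \mu\|$. Iterating (i) along the full sequence $i_1,\ldots,i_m$ immediately yields the first conclusion of the lemma (that $\nu/\mu$ remains increasing); the variation-distance bound then follows by combining (i) and (ii) through an induction on the number of censored updates.

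For (i), the key observation is that for the heat-bath kernel $P_i$ (as in the column dynamics $\mcol$) the density of the pushed-forward measure is $(\nu P_i)/\mu = P_i(\nu/\mu)$, where on the right $P_i$ is the Markov operator on functions, coinciding with the conditional expectation $\mathbb{E}_\mu[\,\cdot\mid\mathcal{F}_{-i}]$ with respect to the sigma-algebra of heights off site $i$. The explicit form~(\ref{eq:trans}) shows that $\mu_{ab}$ is stochastically increasing in $(a,b)$, so $P_i$ sends increasing functions to increasing functions, proving (i); the same conclusion holds for the local single-site dynamics $\mss$ by the monotonicity of its update kernel. For (ii), use the identity $\|\nu-\mu\| = \mathbb{E}_\mu[(\nu/\mu-1)_+]$ and apply Jensen's inequality for the conditional expectation to the convex function $x\mapsto(x-1)_+$, giving $\|\nu P_i - \mu\| = \mathbb{E}_\mu[(\mathbb{E}_\mu[\nu/\mu\mid\mathcal{F}_{-i}] - 1)_+] \le \mathbb{E}_\mu[(\nu/\mu - 1)_+] = \|\nu-\mu\|$.

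The crux is combining (i) and (ii) into the global comparison $\|\nu-\mu\|\le\|\nu'-\mu\|$. It suffices to prove the single-insertion statement: reinserting one censored update anywhere in the sequence can only decrease the distance from $\mu$; the general bound then follows by reinserting censored updates one at a time. Insertion at the very end is exactly (ii) applied to the distribution produced by running the censored sub-sequence, whose density has been kept increasing by iterated (i). Insertion in the middle is the subtle case, because Jensen alone yields only $\|\nu P_i R - \mu\|\le\|\nu-\mu\|$, not the sharper $\|\nu P_i R - \mu\|\le\|\nu R - \mu\|$ that we need. To bridge this gap I would exploit the monotone grand coupling of Section~\ref{sec:prelims}: simultaneously run the $\nu_0$-started and $\mu$-started chains so that dominance $\xi_t \morethan \zeta_t$ (which holds initially because $\nu_0/\mu$ increasing implies $\nu_0 \morethan \mu$ stochastically) is preserved at every step, and argue that each inserted heat-bath update weakly contracts the coupling discrepancy $\mathbb{P}(\xi_t\ne\zeta_t)$ that upper-bounds variation distance. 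Careful bookkeeping of which updates are present in the full chain but absent in the censored one then produces the required inequality; this mid-sequence insertion step is where all the subtlety of the censoring inequality lies.
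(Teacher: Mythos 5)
The paper does not supply its own proof here: it cites \cite[Thm~16.5]{peres}, and the accompanying remark only observes that the cited argument (stated for $\nu_0=\delta_\top$) goes through verbatim under the weaker hypothesis that $\nu_0/\mu$ is increasing. So what your attempt should be measured against is the Peres--Winkler argument itself.

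Your step (i) is correct and is in fact one of the two halves of the key lemma in that argument: reversibility gives $(\nu P_i)/\mu = P_i(\nu/\mu)$, and the heat-bath kernel $P_i$, viewed as conditional expectation given the heights off site~$i$, carries increasing functions to increasing functions because $\mu_{ab}$ is stochastically monotone. Your step (ii) is also a correct inequality, but note that it is mere data processing: $\|\nu P_i-\mu\|=\|\nu P_i-\mu P_i\|\le\|\nu-\mu\|$ holds for \emph{any} $\mu$-preserving kernel, with no monotonicity used. That is already a warning sign that (ii) cannot carry the theorem, since without monotonicity censoring can genuinely increase the distance to equilibrium.

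You correctly flag the mid-sequence insertion $\|\nu P_iR-\mu\|\le\|\nu R-\mu\|$ as the crux, but the coupling sketch you give does not close it. The bound $\|\nu_t-\mu\|\le\Pr[\xi_t\ne\zeta_t]$ is one-sided: comparing the coalescence probabilities of the censored and uncensored chains, each against a $\mu$-started copy, does not by itself compare the two variation distances, and no concrete mechanism has been given by which ``bookkeeping of which updates are present'' produces the required inequality. I do not see how to make that route rigorous; this is exactly the point where your proposal has a genuine gap.

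What is actually needed is a \emph{stochastic-domination} statement, not a variation-distance one: if $\nu/\mu$ is increasing and $P_i$ is a heat-bath update, then $\nu P_i\preceq\nu$. To see it, write $f=\nu/\mu$; for any increasing $g$ one has
\begin{equation*}
\int g\,d\nu-\int g\,d(\nu P_i)=\E_\mu\bigl[f\,(g-P_ig)\bigr]
=\E_\mu\Bigl[\Cov_\mu\bigl(f,\,g-P_ig\;\big|\;\mathcal F_{-i}\bigr)\Bigr]\ge 0,
\end{equation*}
since conditionally on $\mathcal F_{-i}$ both $f$ and $g-P_ig$ are increasing functions of $\eta(i)$, $g-P_ig$ has conditional mean zero, and monotone functions of a single real variable are positively correlated. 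Combining this with your (i) and the monotonicity of each $P_j$, one shows by induction on $k$ that the uncensored distribution $\nu_k$ is dominated by the censored one $\nu'_k$ (defined by applying the identity at censored steps): at an uncensored step both sides apply the same monotone kernel, and at a censored step $\nu_{k+1}=\nu_kP_i\preceq\nu_k\preceq\nu'_k=\nu'_{k+1}$. Finally, since $\nu_m\preceq\nu'_m$ and $\nu_m/\mu$ is increasing, the set $A=\{\nu_m>\mu\}$ is increasing and
$\|\nu_m-\mu\|=\nu_m(A)-\mu(A)\le\nu'_m(A)-\mu(A)\le\|\nu'_m-\mu\|$. This global induction replaces your one-insertion-at-a-time scheme and is the piece your proposal is missing.
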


\begin{remark}
\cite[Thm~16.5]{peres} states this result for the special case in which $\nu_0$
is concentrated on the maximal state~$\top$.  However, it is easy to see that the proof 
requires only the weaker assumption that $\nu_0/\mu$ is increasing.  Moreover,
by symmetry the lemma clearly also holds with ``increasing" replaced by ``decreasing."
\end{remark}

\smallskip\noindent
The censoring inequality can be used to relate the single-site and column dynamics 
via the following observation.  If we censor all moves of the single-site dynamics
except for those that update a certain position~$i$, then after some fixed number of steps~$T$ 
(which depends on the mixing time of the single-site dynamics just within the $i$th column,
with its neighbors fixed) we will, up to small error,
have simulated one move of the column dynamics.  By Lemma~\ref{lem:censoring}
the censoring can only slow down convergence of the single-site dynamics, so the mixing
time of $\mss$ is bounded above by roughly $T$ times that of~$\mcol$.  We shall use a 
more sophisticated version of this argument in Section~\ref{sec:single}.

\section{The Column Dynamics}\label{sec:column}
Our goal in this section is to provide a tight analysis of the column dynamics~$\mcol$.
Specifically, we will prove:
\begin{theorem}\label{thm:col}
For any $\beta>0\,$, the mixing time of the column dynamics 
$\mcol$ is $O(n^3\log n)$.
\end{theorem}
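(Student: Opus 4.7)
My plan is to adapt Wilson's eigenvector method---originally developed for the Glauber dynamics on lozenge tilings---to the column dynamics $\mcol$. The basic idea is to exhibit a weight function $\phi$ for which the weighted distance between two monotonically coupled extremal trajectories contracts by a factor $1 - \Theta(n^{-3})$ at every step.

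Fix the monotone complete coupling $(\etamax_t, \etamin_t)_{t \ge 0}$ of two copies of $\mcol$ started in the extremal configurations $\top$ and $\bot$. By Proposition~\ref{prop:cfp} it suffices to show that $\Pr[\etamax_T \ne \etamin_T] \le 1/(2e)$ for some $T = O(n^3 \log n)$. I take $\phi_i = \sin(\pi i/(n+1))$ for $i \in [0, n+1]$, the top Dirichlet eigenvector of the discrete averaging operator on $[1, n]$; it satisfies $\phi_0 = \phi_{n+1} = 0$ and
\begin{equation*}
\frac{\phi_{i-1} + \phi_{i+1}}{2} = \lambda \phi_i,
\qquad
\lambda = \cos\!\Bigl(\frac{\pi}{n+1}\Bigr) = 1 - \Theta(n^{-2}).
\end{equation*}
Let $\delta_t(i) = \etamax_t(i) - \etamin_t(i) \ge 0$ and $\Phi_t = \sum_{i=1}^n \phi_i\, \delta_t(i)$. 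Since $\delta_t(i) \le n$ we have $\Phi_0 = O(n^2)$, and because each $\delta_t(i)$ is a nonnegative integer with $\min_{1 \le i \le n} \phi_i = \phi_1 = \Theta(1/n)$, the event $\Phi_t < \phi_1$ forces $\delta_t \equiv 0$, i.e., the two copies have coalesced.

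The heart of the argument is the one-step contraction $E[\Phi_{t+1}\mid \mathcal{F}_t] \le (1 - (1-\lambda)/n)\, \Phi_t$. Let $m_{a,b}$ denote the mean of the column distribution $\mu_{a,b}$ in~(\ref{eq:trans}), and let $a^{\max}, b^{\max}$ (resp.\ $a^{\min}, b^{\min}$) be the min and max neighbor heights of position $i$ in $\etamax_t$ (resp.\ $\etamin_t$). Under the monotone coupling of $\mcol$, an update at site $i$ satisfies
\begin{equation*}
E[\delta_{t+1}(i) \mid \mathcal{F}_t,\ \text{site } i\text{ chosen}] = m_{a^{\max}, b^{\max}} - m_{a^{\min}, b^{\min}}.
\end{equation*}
The key Lipschitz bound I need is
\begin{equation*}
m_{a^{\max}, b^{\max}} - m_{a^{\min}, b^{\min}} \le \tfrac{1}{2}\bigl[(a^{\max} - a^{\min}) + (b^{\max} - b^{\min})\bigr] = \tfrac{1}{2}\bigl[\delta_t(i-1) + \delta_t(i+1)\bigr].
\end{equation*}
Granting this, summing $\phi_i$ times the per-site drift over the uniform choice of $i \in [1, n]$ and invoking the eigenvector identity with boundary values $\delta_t(0) = \delta_t(n+1) = 0$ yields the advertised contraction. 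Iterating for $T = C n^3 \log n$ steps (with $C$ sufficiently large) gives $E[\Phi_T] = o(\phi_1)$, so by Markov's inequality $\Pr[\Phi_T \ge \phi_1] \le 1/(2e)$; combined with the coalescence criterion and Proposition~\ref{prop:cfp} this yields $\taumix_{\mcol} = O(n^3 \log n)$.

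The main obstacle, and the content of Lemma~\ref{lem:newht}, is establishing the Lipschitz bound on $m_{a,b}$. On the unbounded line the density $\mu_{a,b}$ is symmetric about $(a+b)/2$, so $m_{a,b} = (a+b)/2$ and both discrete partials of $m$ equal $1/2$ trivially. The walls at $0$ and $n$ destroy this symmetry and are the source of the ``entropy repulsion'' effect: truncating the exponential tails of $\mu_{a,b}$ shifts its mean toward the interior of $[0,n]$. To control this I would differentiate the explicit partition-sum formula for $m_{a,b}$ obtained from~(\ref{eq:trans}), or, more conceptually, construct monotone couplings between $\mu_{a,b}$ and each of $\mu_{a+1,b}, \mu_{a,b+1}$ in which every atom of the support is moved up by at most one unit. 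The structural fact to exploit is that truncation by a wall can only reduce---never increase---the amount of mass that must shift when one endpoint of the plateau $[a,b]$ is moved by one, so each discrete partial of $m_{a,b}$ stays pinned in $[0, 1/2]$ and the Wilson contraction goes through unharmed.
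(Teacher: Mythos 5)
Your proposal matches the paper's proof in every essential respect: Wilson's eigenvector method with $\phi_i=\sin(\pi i/(n+1))$, the monotone coupling from the two extremal states, the one-step contraction via the Lipschitz bound on the conditional mean of the column distribution, and the final Markov-inequality step. The Lipschitz inequality $m_{a^{\max},b^{\max}}-m_{a^{\min},b^{\min}}\le\tfrac12[\delta_t(i-1)+\delta_t(i+1)]$ is exactly the content of Corollary~\ref{cor:newht}, and you correctly identify Lemma~\ref{lem:newht} as the key technical step; you sketch it (``each discrete partial of $m_{a,b}$ lies in $[0,1/2]$'') rather than prove it, whereas the paper proves it by explicit differentiation of the partition-sum formula and a case analysis across the line $a+b=n$. (One small caution on your sketch: a monotone coupling in which every atom shifts by at most one unit only yields a partial derivative in $[0,1]$; you still need the separate quantitative fact that at most half the mass shifts, which is what Lemma~\ref{lem:newht}'s monotonicity of $\varepsilon(a,b)$ supplies. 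Also, your $\min_i\phi_i=\phi_1=\Theta(1/n)$ is the correct order, which is fine for the argument.)
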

We believe this result, which we show is tight up to the $\log n$ factor (see
Theorem~\ref{thm:lower} below),
is interesting in its own right.  It will also be a key ingredient in our analysis of the
single-site dynamics later.

Recall that, if the current configuration of $\mcol$ is $\eta_t$ and we choose 
position~$i\in[1,n]$ at the next step, then the new height $\eta_{t+1}(i)$ is drawn
from the conditional distribution~(\ref{eq:trans}), where $a,b$ are the minimum
and maximum heights respectively of the neighbors $\eta_t(i\pm 1)$.
A key observation is that, under such a move, the expected
value of the new height~$\eta_{t+1}(i)$ is close to the average $\frac{a+b}{2}$ of
its two neighbors; moreover, the error term satisfies a natural ordering property w.r.t.~$a,b$.

\begin{lemma}\label{lem:newht}
In the above situation, and assuming $a+b\le n$, the expected value of the new 
height~$\eta_{t+1}(i)$ satisfies 
\begin{equation}\label{eq:newht}
         \E[\eta_{t+1}(i) \mid a,b] = \frac{a+b}{2} + \varepsilon(a,b),
\end{equation}
where $\varepsilon(a,b)\ge 0$.  Moreover, $\varepsilon(a,b)\le\varepsilon(c,d)$ for any
pair $c,d$ with $c\le\min\{a,d\}\le\max\{a,d\}\le b$.
\end{lemma}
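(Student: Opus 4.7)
\emph{Plan.} The plan is to exploit the fact that, before conditioning on $\eta(i)\in[0,n]$, the weights of (2.7) are symmetric about $m:=(a+b)/2$. Let $\tilde\mu_{ab}$ denote the probability distribution on $\zset$ having these piecewise-geometric weights, suitably renormalized; then the reflection $j\mapsto 2m-j$ fixes the weights, so $\tilde\mu_{ab}$ is symmetric about $m$ and a random variable $Y\sim\tilde\mu_{ab}$ satisfies $\E[Y]=m$ and $Y-m\stackrel{d}{=}m-Y$. Since $\mu_{ab}$ is precisely $\tilde\mu_{ab}$ conditioned on $\{0\le Y\le n\}$, one has $\varepsilon(a,b)=\E[Y-m\mid 0\le Y\le n]$.

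For the non-negativity, I would start from $\E[Y-m]=0$, split over the three events $\{Y\le -1\}$, $\{0\le Y\le n\}$, $\{Y\ge n+1\}$, and apply the reflection identity to rewrite the lower-tail contribution as an upper-tail one. This yields
\[
\Pr(0\le Y\le n)\cdot\varepsilon(a,b) \;=\; \E\bigl[(Y-m)\,\mathbf{1}_{2m+1\le Y\le n}\bigr],
\]
and the hypothesis $a+b\le n$ (equivalently $2m+1\le n+1$) ensures the integrand is non-negative on the indicated event; hence $\varepsilon(a,b)\ge 0$.

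For the monotonicity $\varepsilon(a,b)\le\varepsilon(c,d)$, the hypothesis $c\le\min\{a,d\}\le\max\{a,d\}\le b$ means that $(c,d)$ is reachable from $(a,b)$ by a sequence of unit decrements of one of the two coordinates (in some order), with the ordering $x\le y$ and the bound $x+y\le n$ preserved at every intermediate step. It therefore suffices to prove the two one-step inequalities $\varepsilon(a,b)\le\varepsilon(a-1,b)$ (when $a\ge 1$) and $\varepsilon(a,b)\le\varepsilon(a,b-1)$ (when $b\ge a+1$). Evaluating the geometric sums in the numerator and denominator of $\varepsilon$ gives a closed form of the shape
\[
\varepsilon(a,b)=\frac{\sum_{k=a+1}^{n-b}\bigl(k+(b-a)/2\bigr)\,p^k}{W(a,b)},\qquad p=e^{-2\beta},
\]
with $W(a,b)=(b-a+1)+(2p-p^{a+1}-p^{n+1-b})/(1-p)$, and each one-step inequality becomes, after cross-multiplication, a polynomial inequality in $p$ that one verifies by grouping terms.

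The main (and essentially only) obstacle is the algebraic bookkeeping for these one-step monotonicities. A softer FKG-type argument shows that the conditional expectation $\E[Y\mid 0\le Y\le n]$ does move in the correct direction under each decrement, because the likelihood ratio $\tilde\mu_{a,b-1}/\tilde\mu_{a,b}$ is a non-increasing function of $j$ (and analogously for $a$); but this gives only a qualitative one-sided bound, whereas one must also show that the conditional-mean change is at most the $\tfrac12$ shift in the centre $m$. That quantitative gap seems to require the explicit computation. No new conceptual ingredient beyond the symmetry already exploited in the non-negativity step is expected.
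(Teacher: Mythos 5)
Your plan follows essentially the same route as the paper's appendix: compute $\varepsilon$ explicitly, use the symmetry of the unrestricted piecewise-geometric weights about $(a+b)/2$ to establish $\varepsilon\ge 0$, and reduce the two-pair comparison to monotonicity of $\varepsilon$ in each coordinate separately. The reflection identity and the reduction to unit decrements (vs.\ the paper's $\partial_a,\partial_b$ derivatives) are both correct, and the closed form you write down for $\varepsilon(a,b)$ is right.

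The genuine gap is the last step. You write that each one-step inequality ``becomes, after cross-multiplication, a polynomial inequality in $p$ that one verifies by grouping terms,'' and assess the whole thing as mere ``algebraic bookkeeping'' requiring ``no new conceptual ingredient.'' But this is precisely where the substance of the lemma lies, and it is not a grouping-of-terms verification. In the paper's proof, $\partial_b\varepsilon\le 0$ reduces (after the cross-multiplication you describe) to the inequality $a+b+2T_{n-(a+b)}\ge S_a+(b-a+1)+S_{n-b}$, which in turn splits into two sub-claims, $S_m+1\le 2T_m$ and $S_a+S_{n-b}-S_{n-(a+b)}\le 2a$, each proved by a separate convexity/monotonicity argument (e.g.\ the first rearranges to $1-(2m+1)x^m+(2m+1)x^{m+1}-x^{2m+1}\ge 0$ and requires differentiating twice). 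The $\partial_a$ direction is harder still: it needs the bound $-2\beta+\frac{1-S'_a-S'_{n-b}}{S_a+(b-a+1)+S_{n-b}}\le-\beta$, followed by yet another inequality $T'_m+\beta T_m\ge\frac12$. None of these are obtained ``by grouping terms''; a discrete unit-decrement version would face the same nonlinear inequalities. As matters stand you have set up the calculation correctly but have not carried it out, and the monotonicity claim is the part of the lemma the rest of the paper actually leans on (via Corollary~\ref{cor:newht} and the contraction estimate in Theorem~\ref{thm:col}).

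One smaller point: the FKG remark is accurate as stated, but be careful about what it buys. Stochastic domination from the monotone likelihood ratio gives $\E[Y_{a,b-1}\mid 0\le Y\le n]\le \E[Y_{a,b}\mid 0\le Y\le n]$, i.e.\ a \emph{lower} bound of $0$ on the difference $\E[Y_{a,b}]-\E[Y_{a,b-1}]$, whereas what the lemma requires is the \emph{upper} bound by $\tfrac12$. So FKG does not deliver even a one-sided half of the desired inequality; it only confirms the sign of the change, which is not used.
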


We defer the proof of the lemma, which is somewhat technical, to the appendix.  
However, the intuition is as follows.  Note that the distribution
of $\eta_{t+1}(i)$ is uniform on the interval $[a,b]$, and decays symmetrically on
either side except for the effects of the barriers at heights~0 
and~$n$.  Thus we would expect its mean to be close to $\frac{a+b}{2}$.  
The term $\varepsilon(a,b)$ captures the ``entropy repulsion" effect of the
barriers.  This effect is more pronounced for pairs $(a,b)$ that are closer to~0,
as is the case for the pair $(c,d)$ in the lemma.

We can derive from Lemma~\ref{lem:newht} the following 
more symmetrical form that allows us to compare
the heights of two ordered configurations under the monotone coupling 
(again, see the appendix for a proof).
\begin{corollary}\label{cor:newht}
Suppose $\eta_t$ and $\xi_t$ are two configurations satisfying $\eta_t\lessthan\xi_t\,$,
and let $a=\min\{\xi_t(i-1),\xi_t(i+1)\}$, $b=\max\{\xi_t(i-1),\xi_t(i+1)\}$,
$c=\min\{\eta_t(i-1),\eta_t(i+1)\}$, $d=\max\{\eta_t(i-1),\eta_t(i+1)\}$.  Then
\begin{equation}
   0 \le \E[\xi_{t+1}(i) \mid a,b] - \E[\eta_{t+1}(i) \mid c,d] \le \frac{a+b}{2} - \frac{c+d}{2}.
\end{equation}
\end{corollary}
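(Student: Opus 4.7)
The plan is to derive the corollary from Lemma~\ref{lem:newht} combined with the monotone coupling of~$\mcol$. Writing $\varepsilon(a,b) = \E[\eta_{t+1}(i)\mid a,b]-\tfrac{a+b}{2}$ for \emph{any} admissible pair (not only those with $a+b\le n$), the quantity we wish to control decomposes as
\begin{equation*}
\E[\xi_{t+1}(i)\mid a,b]-\E[\eta_{t+1}(i)\mid c,d]=\tfrac{a+b}{2}-\tfrac{c+d}{2}+\varepsilon(a,b)-\varepsilon(c,d).
\end{equation*}
Invariance of the Gibbs measure and the kernel~\eqref{eq:trans} under the involution $\eta\mapsto n-\eta$ gives the antisymmetry $\varepsilon(a,b)=-\varepsilon(n-b,n-a)$, so $\varepsilon(a,b)\ge 0$ whenever $a+b\le n$ and $\varepsilon(a,b)\le 0$ whenever $a+b\ge n$.

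The lower bound is immediate: since $\eta_t\lessthan\xi_t$, the complete monotone coupling of~$\mcol$ introduced in Section~\ref{sec:prelims} forces $\eta_{t+1}(i)\le\xi_{t+1}(i)$ almost surely, hence the same ordering passes to the conditional expectations. For the upper bound, after subtracting $\tfrac{a+b-c-d}{2}$, it suffices to prove $\varepsilon(a,b)\le\varepsilon(c,d)$, which I would handle by splitting into three cases according to the signs of $a+b-n$ and $c+d-n$. If $a+b\le n$, then $c+d\le n$ as well (from $c\le a$ and $d\le b$), and the hypotheses $c\le\min\{a,d\}$ and $\max\{a,d\}\le b$ of the monotonicity part of Lemma~\ref{lem:newht} follow from $c\le a\le b$ and $c\le d\le b$, giving the claim directly. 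If $a+b>n\ge c+d$, the inequality is trivial from the sign pattern of $\varepsilon$. Finally, if both $a+b>n$ and $c+d>n$, I would invoke the symmetry: setting $A=n-b$, $B=n-a$, $C=n-d$, $D=n-c$, one checks $C+D\le n$, $A\le\min\{B,C\}$, and $\max\{B,C\}\le D$, so Lemma~\ref{lem:newht} applied to the pair $(C,D)$ yields $\varepsilon(C,D)\le\varepsilon(A,B)$, which via antisymmetry unfolds to $\varepsilon(a,b)\le\varepsilon(c,d)$.

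The real work has already been done inside Lemma~\ref{lem:newht}: the corollary is a short assembly using the sign pattern of~$\varepsilon$, its antisymmetry under $\eta\mapsto n-\eta$, and the monotone coupling for the lower inequality. The only mild obstacle is the need to handle the regime $a+b>n$, which is not covered directly by the hypotheses of the lemma; the $\eta\mapsto n-\eta$ symmetry dispatches it cleanly.
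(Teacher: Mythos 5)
Your proof is correct and follows essentially the same route as the paper's: the same decomposition into $\tfrac{a+b}{2}-\tfrac{c+d}{2}+\varepsilon(a,b)-\varepsilon(c,d)$, the same three-case analysis based on comparing $a+b$ and $c+d$ with $n$, and the same use of the $\eta\mapsto n-\eta$ symmetry to handle the regime $a+b>n$ not covered directly by Lemma~\ref{lem:newht}. The one place you go slightly beyond the paper is in making the lower bound explicit via the monotone coupling/stochastic domination of $\mu_{ab}$ over $\mu_{cd}$, a step the paper leaves implicit.
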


Armed with Corollary~\ref{cor:newht},  we can now proceed to our analysis of~$\mcol$.
\begin{proof}[Proof of Theorem~\ref{thm:col}]
Following Proposition~\ref{prop:cfp}, it suffices to show that 
two coupled copies of~$\mcol$, started in configurations~$\top$ and~$\bot$, will coincide
with constant probability after $O(n^3\log n)$ steps.
Call these two copies $(\etamax_t)$, $(\etamin_t)$ respectively.  

We will measure the distance between $\etamax_t$ and $\etamin_t$ using the quantity
\begin{equation}\label{eq:dist}
   D(t) = \sum_{i=1}^n w(i) (\etamax_t(i) - \etamin_t(i)),
\end{equation}
where $w(i)\ge 0$ is a suitably chosen weight function.  Note that $\etamax_t(i)\ge\etamin_t(i)$
for all $i,t$ by monotonicity, so all terms in the sum are non-negative; and $D(t)=0$ iff
$\etamax_t=\etamin_t$.  Following an idea
of Wilson~\cite{wilson}, we choose~$w$ as the second eigenvector of the discrete Laplacian
operator~$\Delta$ on $[1,n]$ with zero boundary conditions, defined by
$\Delta g(i) = -\half(g(i+1)+g(i-1)) + g(i)$, $g(0)=g(n+1)=0$.  It is well known (and easy to
verify) that $w(i) = \cos(-\frac{\pi}{2}+\frac{\pi i}{n+1})$ with corresponding eigenvalue
$\lambda = 1-\cos(\frac{\pi}{n+1}) = \Theta(\frac{1}{n^2})$.    

The reason for this choice is that, by Corollary~\ref{cor:newht}, one step of the dynamics
behaves very like the Laplacian, so choosing~$w$ as an eigenvector of~$\Delta$ should
give us a contraction of $(1-\frac{\lambda}{n})$ in~$D$ at every step.  The argument proceeds
as follows:
\begin{eqnarray}
  &&\hskip-0.4in\E[D(t+1)-D(t) \mid \etamax_t,\etamin_t]\nonumber \\ 
  &=& \frac{1}{n}\sum_{i=1}^n w(i) \bigl\{
        \E[\etamax_{t+1}(i)\mid\etamax_t(i-1),\etamax_t(i+1)] \nonumber\\
        && \;\;\;\;{}-\E[\etamin_{t+1}(i)\mid\etamin_t(i-1),\etamin_t(i+1)]
                -(\etamax_t(i) - \etamin_t(i)) \bigr\} \nonumber\\
   &\le& -\frac{1}{n}\sum_i w(i) (\Delta\etamax_t(i) - \Delta\etamin_t(i))\nonumber\\
   &=& -\frac{1}{n} \sum_i \Delta w(i) (\etamax_t(i) - \etamin_t(i)) \label{eq:Delta} 
   = -\frac{\lambda}{n} D(t),
\end{eqnarray}
where in the inequality we have used Corollary~\ref{cor:newht}.

Thus after $t$ steps of the dynamics we have 
$\E[D(t)]\le (1-\frac{\lambda}{n})^tD(0) \le (1-\frac{c}{n^3})^tn^2$
for a constant~$c>0$.  Taking $t=\tstar=c'n^3\log(\frac{n}{\varepsilon})$ for a sufficiently large 
constant~$c'$ ensures that $\E[D(\tstar)]\ll\frac{\varepsilon}{n^2}$.
Finally, we may bound the coupling probability at time~$\tstar$ as follows:
\begin{eqnarray*}
   \Pr[\etamax_\tstar\ne\etamin_\tstar] &\le& \sum_i \Pr[\etamax_\tstar(i) - \etamin_\tstar(i) \ge 1]\\
                                                                  &\le& (\min_i w(i))^{-1} \sum_i w(i)\E[\etamax_{\tstar}(i)-\etamin_\tstar(i)]\\
                                                                  &=& (\min_i w(i))^{-1}\E[D(\tstar)] \le \varepsilon,
\end{eqnarray*}
where in the second line we used Markov's inequality, and in the 
last line the fact that $\min_i w(i) = \cos(-\frac{\pi}{2}+\frac{\pi}{n+1}) = \Theta(\frac{1}{n^2})$.
Thus, by Proposition~\ref{prop:cfp}, $\tau_{\mcol}(\varepsilon)\le\tstar = O(n^3\log(n/\varepsilon))$.
\end{proof}
For our analysis of the single-site dynamics, it will be convenient to introduce a ``parallel"
version~$\mpar$ of the column dynamics in which all odd-numbered (or all even-numbered)
positions are updated simultaneously at each step.  Moreover, since repeated updates of odd 
or even positions have no effect, we may as well assume that odd and even updates alternate.
This leads to the following definition of~$\mpar$, in which $O, E$ denote updates of all odd
and even positions respectively, and the update at any given position is performed as in
the column dynamics:
\begin{enumerate}
\item Flip a single fair coin.
\item If heads, perform $t$ pairs of odd-even updates (i.e., $(OE)^t$), else if tails
perform $t$ pairs of even-odd updates (i.e., $(EO)^t$).
\end{enumerate}
Note that $\mpar$ is a convex combination of two reversible Markov chains, 
one performing the update sequence $(OE)^t$ and the other $(EO)^t$.  We will
call these chains $\moe$ and $\meo$ respectively.

Following our analysis of $\mcol$, it is straightforward to see that $\mpar$ inherits a similar
bound on the mixing time, with a factor~$n$ speedup coming from the parallelization of 
the updates.  
\begin{theorem}\label{thm:par}
The mixing time of $\mpar$ is $O(n^2\log n)$.
\end{theorem}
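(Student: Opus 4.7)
The plan is to mimic the proof of Theorem~\ref{thm:col} (Wilson's method via the first Dirichlet eigenvector $w(i) = \sin(\pi i/(n+1))$ of the discrete Laplacian $\Delta$ on $[1,n]$, with eigenvalue $\lambda = \Theta(1/n^2)$), using the same distance $D(t) = \sum_i w(i)\bigl(\etamax_t(i) - \etamin_t(i)\bigr)$ between monotonically coupled copies started from $\top$ and $\bot$. The key gain over the column dynamics is that a single step of $\mpar$ updates every position exactly once, so the per-step contraction should be of order $1-\Theta(\lambda)$ rather than $1-\Theta(\lambda/n)$ as in the sequential case.

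The central calculation is a one-step contraction bound. Write $f = \etamax_t - \etamin_t$ and split $D(t)=O+E$ with $O = \sum_{i\text{ odd}} w(i) f(i)$ and $E = \sum_{j\text{ even}} w(j) f(j)$. For one OE pair, I would apply Corollary~\ref{cor:newht} first at each odd site (with even neighbors still at their initial values) and then, by conditioning, at each even site (using the freshly updated odd values). A change of summation index together with one application of the eigenvector identity $\half(w(i-1)+w(i+1)) = (1-\lambda)w(i)$ bounds the odd-site contribution to $\E[D(t+1)]$ by $(1-\lambda)\,E$, while a double application (together with the symmetry $w(1)=w(n)$, which causes the apparent boundary discrepancy at the extreme even site to cancel) bounds the even-site contribution by $(1-\lambda)^2 E$. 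Adding the two gives
\begin{equation*}
   \E[D(t+1)\mid f,\,\text{OE}] \;\le\; (1-\lambda)(2-\lambda)\,E,
\end{equation*}
and the symmetric argument with odd and even exchanged yields $\E[D(t+1)\mid f,\,\text{EO}] \le (1-\lambda)(2-\lambda)\,O$. Averaging over the fair coin of $\mpar$,
\begin{equation*}
   \E[D(t+1)\mid f] \;\le\; \half(1-\lambda)(2-\lambda)\,(E+O) \;=\; \half(1-\lambda)(2-\lambda)\,D(t) \;\le\; \bigl(1-\tfrac32\lambda + O(\lambda^2)\bigr)\,D(t),
\end{equation*}
which is the sought-after one-step contraction by $1-\Theta(\lambda)=1-\Theta(1/n^2)$. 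Iterating as in the proof of Theorem~\ref{thm:col} with $D(0) = O(n^2)$ and $\min_i w(i) = \Theta(1/n^2)$, one obtains $\E[D(\tstar)] \ll \varepsilon/n^2$ for $\tstar = O\bigl(n^2 \log(n/\varepsilon)\bigr)$, whereupon Markov's inequality together with Proposition~\ref{prop:cfp} yields the stated $\taumix_{\mpar} = O(n^2 \log n)$.

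The main conceptual subtlety---and the reason the speedup is nontrivial---is that neither $\moe$ nor $\meo$ in isolation is a contraction on $D$. On a state with $f\equiv 0$ at odd sites (so $O=0$ and $E=D$), a single OE step can actually nearly double $D$ (the parallel O update broadcasts the even-site difference onto all odd sites), whereas a single EO step immediately drives $D$ essentially to $0$. It is only after averaging over the fair coin of $\mpar$ that the factor $2-\lambda$ produced by the two-phase parallel interaction is precisely tamed by the factor $\half$ from the coin flip, leaving a genuine contraction by $1-\tfrac32\lambda$; verifying the cancellation of the boundary terms in the double-eigenvector identity via the symmetry $w(1)=w(n)$ is the main technical point that must be checked to make this clean.
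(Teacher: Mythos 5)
Your strategy matches the paper's: you reuse Wilson's weighted distance $D(t)=\sum_i w(i)(\etamax_t(i)-\etamin_t(i))$ with the Dirichlet eigenvector $w$, observe via Corollary~\ref{cor:newht} that each parallel sweep behaves like one application of $I-\Delta$ (restricted to its parity class), and you correctly identify the key structural point that neither $\moe$ nor $\meo$ alone contracts $D$ at each step but the coin average does. Two caveats are worth recording. First, as defined in the paper, $\mpar$ flips the fair coin \emph{once} and then runs the deterministic sequence $(OE)^t$ or $(EO)^t$; your displayed per-step inequality $\E[D(t+1)\mid f]\le\half(1-\lambda)(2-\lambda)D(t)$ is literally valid only for a variant that draws a fresh coin at every step --- in the once-flipped chain the state at time $t$ already reveals the coin, so one cannot re-average at the next step. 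The paper's route is instead to track the odd and even parts $O_t,E_t$ separately along each fixed sequence, getting $\E[D]\le\bigl((1-\lambda)^{2t-1}+(1-\lambda)^{2t}\bigr)E_0$ under $(OE)^t$ and the analogue with $O_0$ under $(EO)^t$, and then average over the single coin only at the very end; the arithmetic is the same, both yield $\E[D(t)]\le(1-\lambda)^{2t-1}D(0)$ up to a constant, and both close the argument at $t=O(n^2\log n)$. Second, the symmetry $w(1)=w(n)$ is not actually needed to control any boundary discrepancy: the change-of-index step in each sweep is exact solely because $w(0)=w(n+1)=0$ and the coupled differences vanish at positions $0$ and $n+1$, while $\half(w(i-1)+w(i+1))=(1-\lambda)w(i)$ already holds at every interior site including $i=1$ and $i=n$; there is nothing to cancel, so this is not the main technical obstacle you frame it as. Apart from these points, the computation and the final bookkeeping via Markov's inequality and Proposition~\ref{prop:cfp} coincide with the paper's.
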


\begin{proof}[Proof of Theorem~\ref{thm:par}]
We use the same distance measure~(\ref{eq:dist}) as in the proof of Theorem~\ref{thm:col}.
From equation~(\ref{eq:Delta}) of that proof, we conclude that under the $t$-step evolution
of the column dynamics this distance satisfies $\E[D(t)]
\le\frac{1}{n}\sum_i (I-\Delta)^t w(i)(\etamax_0(i)-\etamin_0(i))$, where $I$ is the identity
operator $Ig=g$.  An analogous calculation for $\mpar$ leads to
\begin{eqnarray*}
  \E[D(t)] &\le& \half\sum_i ((I-\Delta)^{2t}w + (I-\Delta)^{2t-1}w)(i)(\etamax_0(i)-\etamin_0(i)) \\
                                                             &\le& \half((1-\lambda)^{2t} + (1-\lambda)^{2t-1}) D(0)\\
                                                             &\le& (1-\lambda)^{2t-1} D(0).
\end{eqnarray*}
Using the facts that $\lambda = \Theta(\frac{1}{n^2})$ and $D(0)\le n^2$, and arguing
as in the previous proof, gives $\tau_{\mpar}(\varepsilon) = O(n^2\log(n/\varepsilon))$, 
as claimed.
\end{proof}

\begin{remark}\label{rem:mixingtime}
The proofs of Theorems~\ref{thm:col} and~\ref{thm:par} show the stronger results 
that $\tau_{\mcol}(\varepsilon) = O(n^3\log(n/\varepsilon))$ and 
$\tau_{\mpar}(\varepsilon) = O(n^2\log(n/\varepsilon))$.  We shall use this 
result for $\tau_{\mpar}(\varepsilon)$ in the next section.
\end{remark}

\par\smallskip
We close this section with a lower bound which shows that the above bound on the
mixing time of the column dynamics is tight up to the $\log n$ factor.  This lower bound
also applies to the single-site dynamics, which will imply that our upper bound on its
mixing time derived in the next section is tight within a factor of $\Otilde(\sqrt{n})$, as
claimed in the Introduction.
\begin{theorem}\label{thm:lower}
The mixing times of both $\mcol$ and $\mss$ are at least $\Omega(n^3)$.
\end{theorem}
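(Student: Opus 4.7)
The plan is to establish the $\Omega(n^3)$ lower bound via the variational (Dirichlet-form) characterization of the spectral gap, using the very same principal Laplacian eigenvector that drove the upper bound in Theorem~\ref{thm:col}. Concretely, let $w(i) = \sin(\pi i/(n+1))$ and set
\[
\Phi(\eta) \;=\; \sum_{i=1}^n w(i)\,\eta(i),
\]
so that $w$ is the top eigenfunction of the discrete Dirichlet Laplacian with eigenvalue $\lambda = \Theta(n^{-2})$.

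The first step is to show that the equilibrium variance of $\Phi$ is macroscopically large, i.e.\ $\Var_\mu(\Phi) = \Omega(n^3)$. At any fixed $\beta>0$ the height profile is a random-walk bridge on $[0,n+1]$ with discrete-Laplace increments of variance $\sigma_\beta^2 = \Theta(1)$, conditioned to lie in $[0,n]$. The unconstrained bridge has covariance $\sigma_\beta^2\,\min(i,j)(n+1-\max(i,j))/(n+1)$, which (up to constants) is the Green's function of $-\Delta$; since $w$ is its top eigenfunction with eigenvalue $\lambda^{-1} = \Theta(n^2)$,
\[
\sum_{i,j} w(i)\,w(j)\,\Cov_\mu(\eta(i),\eta(j)) \;=\; \Theta\bigl(\lambda^{-1}\|w\|_2^2\bigr) \;=\; \Theta(n^3).
\]
The positivity and ceiling constraints do not alter this scaling: at bounded $\beta$ the contour lies safely within $[0,n]$ with $\Theta(1)$ probability, and conditioning on such a constant-probability event preserves the bridge estimate.

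The second step is to bound the Dirichlet form $\mathcal{E}(\Phi,\Phi) = O(1)$ for both chains. For $\mss$ a single-site move at $i$ changes $\Phi$ by at most $w(i)$, so
\[
\mathcal{E}_{\mss}(\Phi,\Phi) \;\le\; \frac{1}{2n}\sum_{i=1}^n w(i)^2 \;=\; O(1).
\]
For $\mcol$, reversibility of the single-column update gives $\E_\mu\bigl[\sum_j \mu_{ab}(j)(j-\eta(i))^2\bigr] = 2\,\E_\mu\bigl[\Var(\mu_{a_i b_i})\bigr]$, and the inner variance is $O((b-a)^2 + 1/\beta^2)$ by~\eqref{eq:trans}; since $\E_\mu[(b-a)^2] = O(1)$ at bounded $\beta$ (nearest-next-nearest heights differ by $O(1)$ in equilibrium), the same $O(1)$ bound follows.

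Combining these, the variational principle gives $\gap \le \mathcal{E}(\Phi,\Phi)/\Var_\mu(\Phi) = O(n^{-3})$ for both $\mss$ and $\mcol$, and the standard inequality $\taumix \ge c/\gap$ for reversible chains yields the claimed $\Omega(n^3)$ lower bound. The main obstacle is the variance bound in the first step: although the Brownian-bridge picture is morally the right one, making the bound rigorous requires some care to handle the floor at height~$0$ and the ceiling at height~$n$, for instance by restricting to a constant-probability event on which the profile stays well inside the strip and then invoking the bridge computation there.
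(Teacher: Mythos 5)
Your approach is essentially the paper's: both establish $\gap = O(n^{-3})$ via the Dirichlet-form variational characterization \eqref{eq:gap} using the Laplacian eigenvector $w(i)=\sin(\pi i/(n+1))$, and both then invoke $\taumix\ge c/\gap$. The genuine difference is the choice of test function. You take the ``height'' functional $\Phi(\eta)=\sum_i w(i)\eta(i)$, for which you need $\Var_\mu(\Phi)=\Omega(n^3)$ and $\mathcal{E}(\Phi,\Phi)=O(1)$; the paper instead uses the ``gradient'' functional $f(\eta)=\sum_i w(i)\bigl(\eta(i+1)-\eta(i-1)\bigr)$, for which the numbers to match are $\Var_\mu(f)=\Omega(n)$ and $\mathcal{E}(f,f)=O(1/n^2)$. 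The two are related by summation by parts: an update at position $i$ perturbs $f$ by the factor $w(i-1)-w(i+1)=O(1/n)$, which is exactly what drops the Dirichlet form by $n^{-2}$ and the required variance by the same factor. The ratio, and hence the bound on the gap, is the same in both cases, so the two routes are equivalent up to this renormalization.

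What the paper's choice buys is that the variance bound is somewhat more tractable. Writing $f=\sum_j (w(j)+w(j-1))Z_j$ with $Z_j=\eta(j)-\eta(j-1)$, the target $\Var_\mu(f)=\Omega(n)$ is a statement about a weighted sum of local increments: each $Z_j$ has $\Theta(1)$ variance, and the correlations coming from the bridge constraint and from positivity are controllable in the aggregate, so one is basically invoking a central-limit scaling for weakly dependent $\Theta(1)$ increments. Your target $\Var_\mu(\Phi)=\Omega(n^3)$ instead rests on the full long-range covariance $\Cov_\mu(\eta(i),\eta(j))\asymp \min(i,j)(n+1-\max(i,j))/n$ for the \emph{conditioned} walk. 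Your heuristic (Brownian-excursion scaling, $\eta(\lfloor nx\rfloor)\approx\sqrt{n}\,e(x)$) is correct and gives exactly the $n^3$ order you want, but you rightly flag that justifying the Green's-function covariance estimate under the floor and ceiling constraints is the nontrivial step; replacing the non-negativity conditioning by conditioning on a constant-probability sub-event, as you sketch, does not immediately work because staying non-negative is itself a polynomially small event for the unconditioned bridge, so the conditioning does distort the covariance and you would need to argue it distorts it by only a constant factor. The paper's version faces an analogous issue (exchangeability of the $Z_j$ plus the positivity conditioning) but at the level of local increments, which is the more standard estimate. Note that the paper also leaves both the numerator and denominator bounds as ``straightforward calculations,'' so in the end you are filling in the same gap at a slightly less convenient point.

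Your Dirichlet-form estimates are correct as stated: for $\mss$ the $O(1)$ bound is immediate since a move changes $\Phi$ by $\pm w(i)$ and $\frac{1}{2n}\sum_i w(i)^2=O(1)$, and for $\mcol$ your reversibility identity $\E_\mu\bigl[\sum_j\mu_{ab}(j)(j-\eta(i))^2\bigr]=2\,\E_\mu[\Var(\mu_{a_ib_i})]$ together with $\Var(\mu_{ab})=O((b-a)^2+C_\beta)$ and $\E_\mu[(b-a)^2]=O(1)$ gives the same $O(1)$ bound. These match, after the $n^{-2}$ renormalization, the paper's claim that the Dirichlet form of $f$ is $O(1/n^2)$.
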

\begin{proof}
Recall that the {\it spectral gap\/} of a reversible dynamics~$\mc$ is given by
\begin{equation}\label{eq:gap}
   \gap_{\mc} = \frac{1}{2}\inf_f\frac{\sum_{\eta,\eta'}\mu(\eta) \Pr_{\mc}[\eta\to\eta'](f(\eta) - f(\eta'))^2}{\Var_\mu(f)},
\end{equation}
where the infimum is over all non-constant functions $f:\Omega_n \to \rset$.
As is well known (see, e.g.,~\cite{Aldous}), the mixing time is bounded below by 
$\gap_{\mc}^{-1}$, so it suffices to show that  $\gap_{\mc}\le n^{-3}$.  
Now take the test function $f(\eta) = \sum_i w(i)(\eta(i+1)-\eta(i-1))$,
where $w$ is as in the proof of Theorem~\ref{thm:col}.  Then straightforward calculations
(basically those leading to equation~\eqref{eq:Delta} above) show that, for both 
$\mcol$ and $\mss$, the numerator of~(\ref{eq:gap}) is at most
$c_1/n^2$ and the denominator is at least $c_2n$, for constants $c_1,c_2>0$.
This completes the proof.
\end{proof}

\section{The Single-Site Dynamics}\label{sec:single}
In this section we prove our main result, Theorem~\ref{thm:main} of the Introduction,
which we restate here for convenience.
\begin{theorem}\label{thm:main2}
The mixing time of the single-site dynamics $\mss$ at any inverse temperature
$\beta>0$ is $\Otilde(n^{3.5})$.
\end{theorem}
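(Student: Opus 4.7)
The plan is to follow the two--part strategy outlined in the Introduction. By the monotone complete coupling of $\mss$ (Proposition~\ref{prop:cfp}), it suffices to show that two coupled copies started at $\top$ and $\bot$ coincide with probability $1-o(1)$ after $\Otilde(n^{3.5})$ steps. I will do this by showing separately that each extremal copy reaches a neighborhood of $\mu$ in variation distance within the required time, and then use monotonicity to conclude that the two copies coincide.

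Part~(b), starting from $\bot$, is the easier half, and I expect it to give an $\Otilde(n^3)$ bound directly. The idea is that starting from~$\bot$ there are no ``large gradients'' to worry about: typical heights at equilibrium are $O(\sqrt n)$, and starting from below this, the configuration only has to rise. I plan to use the censoring inequality (Lemma~\ref{lem:censoring}) to simulate one step of the parallel column dynamics $\mpar$ by a block of $\Otilde(1)$ single-site updates within each column: since starting from $\bot$ the heights of the two neighboring columns stay $\Otilde(\sqrt n)$ with high probability, the local one-dimensional chain within a column equilibrates in $\Otilde(n)$ steps, and thus one $\mpar$-step costs $\Otilde(n)$ single-site steps times the $n$~positions. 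Combined with Theorem~\ref{thm:par}, this gives the $\Otilde(n^3)$ bound for~(b).

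Part~(a) is the heart of the argument. I will decompose the descent from $\top$ into $k=O(\sqrt n)$ stages, each of which costs $\Otilde(n^3)$ single-site steps, for a total of $\Otilde(n^{3.5})$. Let $h_j = n - j\sqrt n$, and let $\mu^{(j)}$ denote the equilibrium distribution conditioned on the contour lying above height $h_j$ at every site. The key lemma to prove is: starting from any configuration distributed as $\mu^{(j)}$, the dynamics $\mss$ reaches $\mu^{(j+1)}$ (in variation distance) within $\Otilde(n^3)$ steps; iterating $O(\sqrt n)$ times carries $\top$ down to $\mu^{(O(\sqrt n))}=\mu$. Within each stage, I will use censoring to simulate the column dynamics $\mcol$ (or its parallel version $\mpar$) by $\mss$. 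The single-column overhead is the square of the maximum height jump between neighboring columns; a naive bound of $O(n^2)$ would cost $O(n^5)$ total, but by working above height~$h_j$ the relevant gradients should be $\Otilde(\sqrt n)$, giving a per-step overhead of only $\Otilde(n)$ and, combined with Theorem~\ref{thm:par}'s $\Otilde(n^2)$ mixing time for $\mpar$, a total of $\Otilde(n^3)$ per stage.

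The main obstacle will be step~(iv) of the introduction: rigorously controlling the gradients of the bounding dynamics during its entire run, not merely at equilibrium. Out of equilibrium the contour shape is very hard to analyze directly. I plan to circumvent this exactly as sketched in the introduction: initialize each stage in equilibrium \emph{conditioned on a rare event}~$A$ forcing the contour to lie above~$h_j$, chosen so that (i)~the conditional distribution is stochastically above the true configuration entering the stage (so monotone coupling provides a valid upper bounding process), and (ii)~$\mu(A)$ is tiny but still much larger than the $\mu$-probability of seeing a gradient of size $\omega(\sqrt n)$ anywhere along the contour. Then a union bound over the $\Otilde(n^3)$ steps of the stage shows that no such gradient appears during the bounding evolution, justifying the $\Otilde(\sqrt n)$ column overhead in the censoring argument. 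Verifying that the conditioning inflates the probability of bad gradients by only a polynomial factor (so that the event $A$ really has probability exceeding $n^{-C}$ while bad gradients have probability $e^{-\Omega(n)}$ in each column) is the most delicate point, and is where the full strength of the Gibbs-measure tail bounds from~\eqref{eq:trans} will be needed. Assuming this key lemma is in hand, combining parts~(a) and~(b) via monotone coupling yields the claimed $\Otilde(n^{3.5})$ bound on $\taumix_{\mss}$.
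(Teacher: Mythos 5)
Your high-level plan is right, but there is a genuine gap in the arithmetic and a missing technical device without which the argument gives $\Otilde(n^{4.5})$, not $\Otilde(n^{3.5})$.

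The accounting error: to simulate one parallel column update by single-site moves, each of the $n$ positions must receive $O(G^2)$ local updates (where $G$ is the gradient bound), so the cost per parallel step is $\Otilde(nG^2)$, not $\Otilde(G^2)$. Combined with the $\Otilde(n^2)$ mixing of $\mpar$, a per-stage cost of $\Otilde(n^3 G^2)$ results. With your gradient bound $G=\Otilde(\sqrt n)$ this is $\Otilde(n^4)$ per stage, hence $\Otilde(n^{4.5})$ overall. To get $\Otilde(n^3)$ per stage you need $G=\polylog(n)$, which is much more than ``gradients are $\Otilde(\sqrt n)$.''

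This is tied to your assertion that the conditioning event $A$ ``really has probability exceeding $n^{-C}$.'' It does not, for the natural choice $A=\{\eta(i)\ge h_j\ \forall i\}$: the boundary conditions $\eta(0)=\eta(n+1)=0$ force the contour to jump $\approx h_j$ at the edges, so $\mu(A)=e^{-\Theta(h_j)}$, which is exponentially small in $\sqrt n$ for the relevant stages. Then $D:=\lceil\log(1/\mu(A))\rceil=\Theta(\sqrt n)$, and the bound $\mu(B)/\mu(A)$ only controls gradients of size $\Otilde(\sqrt n)$ — consistent with the factor-$n$ loss above. The paper's fix, absent from your proposal, is to run the bounding dynamics on an \emph{enlarged interval} $[-\ell+1,n+\ell]$ with $\ell\asymp h^2/\log n$; then by Lemma~\ref{Gaussian bound}(b) the floor event has probability $\ge n^{-O(1)}$, giving $D=O(\log n)$ and polylog gradients. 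This in turn forces an additional iterated \emph{halving} $h\to h/2$ (Lemma~\ref{second key lemma}, Corollary~\ref{key corollary}), since $\ell$ must shrink with $h$, before Lemma~\ref{first key lemma} can be applied directly at the bottom scale $h=\Theta(\log n)$.

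For part~(b): your idea of using stochastic domination by $\mu$ to bound heights (and hence gradients) from $\bot$ is a sensible alternative, but it again only gives $G=\Otilde(\sqrt n)$ — since in equilibrium heights genuinely reach $\Theta(\sqrt n)$, the inclusion $B\subseteq\{\max_i\eta(i)\ge d\}$ is uninformative for polylogarithmic~$d$. So you would get $\Otilde(n^4)$ for part~(b) as well. The paper instead pins every $m$-th site to zero ($\mssm{m}$), decomposing the line into independent blocks of length $O(m)$ in which the conditioning event $A_m$ has probability $m^{-O(1)}$ (a random-walk return probability), and doubles $m$ over $O(\log n)$ stages — this is what keeps the gradient threshold polylogarithmic and the total at $\Otilde(n^3)$.

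Finally, a minor point: your plan invokes ``heights stay $\Otilde(\sqrt n)$ with high probability'' and ``the conditional distribution is stochastically above the true configuration entering the stage'' without specifying the coupling, but both can be made precise by the monotonicity arguments in the paper; that part of the sketch is fine.
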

As indicated in the Introduction, we analyze separately the time required for maximal
and minimal contours to reach equilibrium under a monotone complete coupling;  
by Proposition~\ref{prop:cfp} this suffices to bound the mixing time.  We handle
the more challenging case of the maximal contour in Section~\ref{from n to equilibrium}
and the minimal contour in Section~\ref{from zero to equilibrium}.  We begin with
a basic analytical tool that we will use in both parts, which allows us to relate the
single-site dynamics to the column dynamics analyzed previously.

\subsection{Basic building block}
As explained in the Introduction, our main tool for analyzing the evolution of the
single-site dynamics is to relate it to the column dynamics, for which we obtained a
tight mixing time analysis in Section~\ref{sec:column}.  To do this we will use the censoring idea
explained in Section~\ref{sec:prelims}.  As indicated in the Introduction, the overhead
in the mixing time introduced by censoring depends crucially on the {\it maximum gradient\/} 
(or height difference) that arises in the dynamics.  In this subsection, we show that this
overhead can be kept very low (polylogarithmic in~$n$) provided we start the dynamics
in the equilibrium distribution conditioned on a monotone event~$A$ whose probability
is not extremely small (at least $\exp(-\polylog(n))$).  In our subsequent analysis, we
will use this basic building block repeatedly by conditioning on various suitable events~$A$.

\begin{lemma}
\label{first key lemma}
Let $A$ be any increasing or decreasing event, and consider the single-site dynamics started from 
$\nu_0:= \mu(\cdot \tc A)$. Denote by $\nu_t$ its
distribution after $t$ steps.   Let $D := \lceil\log(\frac 1{\mu(A)})\rceil$, and 
$t_{n,D}:= 2n^3D^2\log^8 n$.
Then for any $t\ge t_{n,D}$ and any fixed $b>0$ we have
\begin{equation*}
\|\nu_{t}-\mu\|=o(1/n^b).     
\end{equation*}
\end{lemma}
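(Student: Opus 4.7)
The plan is to use the censoring inequality (Lemma~\ref{lem:censoring}) to reduce the analysis of the single-site dynamics to that of the parallel column chain $\mpar$, for which Remark~\ref{rem:mixingtime} already gives the sharp mixing bound $O(n^2\log(n/\varepsilon))$. Since $A$ is increasing or decreasing, $\nu_0/\mu=\mathbf{1}_A/\mu(A)$ is monotone, so Lemma~\ref{lem:censoring} applies. Writing $N:=t_{n,D}$ and $M:=c_0(b+1)\,n^2\log n$, I would partition the first $N$ single-site updates into $M$ super-phases of equal length $N/M$, and in super-phase $k$ censor every update whose position has parity different from the current parity $\pi_k\in\{\mathrm{odd},\mathrm{even}\}$ (with $\pi_k$ alternating). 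Within one super-phase all retained updates act on independent columns of parity $\pi_k$ with frozen neighbours, so this censored schedule simulates $M$ successive moves of $\moe$/$\meo$, provided that each column is updated enough times for its one-dimensional heat-bath chain to mix.

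The key analytic ingredient is a duality bound that controls gradients at \emph{every} time. Because $\mu$ is invariant under $\mss$ (and also under the censored dynamics, which is a composition of heat-bath moves that preserve $\mu$),
\begin{equation*}
   \nu_t(B)\;=\;\frac{\Pr_\mu[\eta_0\in A,\;\eta_t\in B]}{\mu(A)}\;\le\;\frac{\mu(B)}{\mu(A)}\;\le\;\mu(B)\,e^{D}
\end{equation*}
for every event $B$ and every $t$. Applied to $B_i^g:=\{|\eta(i-1)-\eta(i+1)|\ge g\}$, the exponential tails of the Gibbs measure~(\ref{eq:gibbs}) give $\mu(B_i^g)\le C_0 e^{-c_0\beta g}$, so $\nu_t(B_i^g)\le C_0 e^{-c_0\beta g+D}$. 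Choosing $g:=c_1(D+b\log n)$ for a large enough constant $c_1$ and union-bounding over the $n$ positions and the $N$ time-steps yields
\begin{equation*}
   \Pr\!\bigl[\,\exists\, t\le N,\, i\in[1,n]:\;\eta_t\in B_i^g\,\bigr]\;=\;o(n^{-b}).
\end{equation*}
Call this good event $G$. On $G$, the one-dimensional heat-bath chain in every column (whose frozen neighbours $a\le b$ satisfy $b-a\le g$) has mixing time at most $T:=C_2 g^2\log^5 n$ to within variation distance $n^{-b-5}$.

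I would then verify that the censored schedule faithfully realises $M$ steps of $\mpar$. A Chernoff bound shows that during a super-phase of length $N/M$, each of the $n/2$ columns of the allowed parity receives at least $T$ (un-censored) updates with probability $1-o(n^{-b}/M)$, provided $N/M\ge 3nT$; this is precisely the identity that forces $t_{n,D}=2n^3D^2\log^8 n$. Let $H$ be the event that this Chernoff statement holds in every super-phase, so $\Pr[H^c]=o(n^{-b})$. On $G\cap H$ the output of each super-phase differs from one genuine move of $\mpar$ by at most $n^{-b-5}/M$ in variation distance (summing the per-column simulation error over the $n/2$ columns), and hence by $o(n^{-b})$ after all $M$ super-phases. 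Since $M$ parallel moves drive $\mpar$ started from any distribution within $n^{-b-1}$ of $\mu$ (Remark~\ref{rem:mixingtime}), the censored distribution lies within $o(n^{-b})$ of $\mu$ on $G\cap H$. Integrating out the bad event and applying Lemma~\ref{lem:censoring} pathwise (conditional on the random sequence of chosen positions) and then averaging via convexity of $\|\cdot\|_{\mathrm{TV}}$ transfers the bound to $\nu_N$; monotonicity of $t\mapsto\|\nu_t-\mu\|_{\mathrm{TV}}$ extends it to all $t\ge N$. The decreasing case is handled identically via the symmetric version of the censoring inequality noted after Lemma~\ref{lem:censoring}.

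The principal obstacle is the gradient control: the event ``some gradient exceeds $g$'' is neither increasing nor decreasing, so monotonicity of $\nu_0/\mu$ is of no direct help. The duality identity above is the device that bypasses this—it trades the conditioning on the initial event $A$ for conditioning on the terminal event, paying only the multiplicative factor $e^D$, and this is the precise reason why the final bound is polynomial in $D$. A smaller technical issue is that Lemma~\ref{lem:censoring} is stated for a priori subsequences while our censoring mask ``keep the first $T$ updates in this super-phase at each column of the right parity'' depends on the random positions chosen by $\mss$; this is resolved by applying the lemma conditionally on the realised sequence of positions and then averaging.
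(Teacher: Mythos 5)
Your proposal is correct and follows essentially the same route as the paper's proof: partition $t_{n,D}$ into parity-alternating epochs, use the Peres--Winkler censoring inequality (applied conditionally on the realized update positions and then averaged, exactly as in the paper's inequality analogous to $\|\nu_{t_{n,D}}-\mu\| \le \mathrm{Av}_{\veci}\,\|\half(\nu_{\vecoe(\veci)}+\nu_{\veceo(\veci)})-\mu\|$), exploit stationarity of $\mu$ under the censored heat-bath composition to trade conditioning at time $t$ for conditioning at time $0$ at a cost of $e^D$, control gradients via the equilibrium tail bound, invoke Chernoff to ensure enough in-column updates per epoch, and finish with the $O(n^2\log(n/\varepsilon))$ mixing bound for the parallel column chain. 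The only differences are numerical: you take the gradient threshold to be $\Theta(D+b\log n)$ rather than the paper's $D\log^2 n$, and you union-bound over all $N$ single-site steps rather than only over the $M$ column-chain steps, compensating by a looser choice of constants; these choices balance for large $n$ and do not change the structure of the argument.
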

\begin{remark}
Here and elsewhere in this section, in the
interests of clarity of exposition we make no attempt to minimize the number of
log factors in our bounds.  In particular, we frequently use a log factor in place of
a sufficiently large constant.  Also, we generally ignore issues of rounding throughout.
\end{remark}
\begin{proof}
We consider only the case of an increasing event~$A$; the decreasing
case is entirely symmetrical.
To bound the mixing time of the single-site dynamics, we relate it to the 
corresponding parallel column dynamics using the censoring inequality
(Lemma~\ref{lem:censoring}).  Note that this is valid because 
the initial distribution~$\nu_0$ satisfies the requirement that 
$\nu_0/\mu= \chi_A/\mu(A)$ is increasing w.r.t.~$\lessthan$.

To do this, we split the time $t_{n,D}$ into $M:=n^2\log^2 n$ \emph{epochs} each of
length $m:=2nD^2\log^6 n$. Given $t_{n,D}$ random positions $\veci=(i_1,i_2,\dots,i_{t_{n,D}})$
in~$[1,n]$, the distribution $\nu_{t_{n,D}}$ can be
written as the average over $\veci$ of the distribution $\nu_{\veci}$
obtained by applying, in the given order, $t_{n,D}$ single-site updates at positions
$i_1,i_2,\dots,i_{t_{n,D}}$. Next we write
$\vecw(\veci)=(\vecw_1,\dots,\vecw_{M})$ by grouping together
positions in the same epoch. 
Finally, we define two censored versions of the dynamics as follows.
In the first version, we delete all even positions from the odd epochs and all
odd positions from the even epochs; denote the resulting
censored vector $\vecoe(\veci)=(\vecoe_1,\dots,\vecoe_{M})$ 
and the associated distribution $\nu_{\vecoe(\veci)}$.
In the second version, we reverse the roles of odd and even
and denote the resulting
censored vector $\veceo(\veci)=(\veceo_1,\dots,\veceo_{M})$ 
and the associated distribution $\nu_{\veceo(\veci)}$.

This construction gives us
\begin{equation}\label{eq:12}
  \|\nu_{t_{n,D}}-\mu\|= \|{\rm Av}_{\veci}\, \nu_{\veci}-\mu\|
\le {\rm Av}_{\veci}\, \|\nu_{\veci}-\mu\|
\le {\rm Av}_{\veci}\, \| \half(\nu_{\vecoe(\veci)} + \nu_{\veceo(\veci)})  -\mu\|,
\end{equation}
where the last step relies on the censoring inequality.  Note that the expected
number of times any position~$i$ appears in~$\veci$ is $m/n=2D^2\log^6 n$.
Hence a standard Chernoff bound guarantees that, apart from an error that is
exponentially small in $\log^6 n$ (and hence certainly $o(1/n^b)$),  
the r.h.s.\ of \eqref{eq:12} is bounded above by
\begin{equation}  \label{eq:13}
  \max_{\veci\in \Sigma} \|\half(\nu_{\vecoe(\veci)} + \nu_{\veceo(\veci)})  -\mu\|,
\end{equation}
where $\Sigma$ consists of all~$\veci$ such that the censored 
vectors $\vecoe(\veci)$ and $\veceo(\veci)$ contain at least $D^2\log^6 n$
updates of every position $i\in[1,n]$ in every epoch $k\in\{1,\dots,M\}$.

Now we claim that, for $\veci\in\Sigma$, 
the distribution $\half(\nu_{\vecoe(\veci)} + \nu_{\veceo(\veci)})$
is very close to the distribution at time $M=n^2\log^2 n$ of the parallel column
dynamics~$\mpar$, with the same initial distribution.   To establish
this, we need to show that $D^2\log^6 n$ 
single-site updates at position~$i$, with its neighboring heights fixed,
are enough to simulate (with small error) one column update at~$i$.  This
relies  crucially on the fact that $\mpar$ is unlikely to produce configurations
with large gradients, which we define to be at least~$D\log^2 n$.
Accordingly, define the set of ``bad" configurations $$
B=\{\h : |\h(i+1)-\h(i)|\ge D\log^2 n \text{ for some } i\in [0,n]\}.  $$

\begin{claim}\label{claim:oddeven}
For $\veci\in\Sigma$, we have $$
 \|\half(\nu_{\vecoe(\veci)} + \nu_{\veceo(\veci)}) - \nupar_{M}\|
\le      M\bigl(\max_s\{\nu_s^{\mathrm{OE}}(B)+\nu_s^{\mathrm{EO}}(B)\} + \nep{-\Omega(\log^2 n)}\bigr), $$
where $\nu_s^{\mathrm{OE}}$ and $\nu_s^{\mathrm{EO}}$ denote the distributions
of~$\moe$ and $\meo$ respectively after $s$ steps, starting from~$\nu_0$.
\end{claim}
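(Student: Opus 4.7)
The plan is a hybrid (telescoping) argument combined with a mixing estimate for the one-dimensional birth-death chain describing the single-site dynamics at a single position with frozen neighbor heights. By the triangle inequality applied to the identity $\nupar_M=\half(\nu_M^{\oe}+\nu_M^{\eo})$, it suffices to bound each of $\|\nu_{\vecoe(\veci)}-\nu_M^{\oe}\|$ and $\|\nu_{\veceo(\veci)}-\nu_M^{\eo}\|$ by $M\bigl(\max_s \nu_s^{\bullet}(B)+\nep{-\Omega(\log^2 n)}\bigr)$; I will treat the OE case, the EO case being entirely symmetric.

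Define intermediate distributions $Q_0,Q_1,\dots,Q_M$ as follows: $Q_k$ is the distribution obtained by first running the initial $k$ epochs as pure $\moe$ blocks (thereby producing $\nu_k^{\oe}$ after $k$ blocks) and then completing the evolution by applying the remaining $M-k$ single-site censored epochs dictated by $\vecoe(\veci)$. Thus $Q_0=\nu_{\vecoe(\veci)}$ and $Q_M=\nu_M^{\oe}$. Since the operations applied after the $(k{+}1)$-st epoch are identical in $Q_k$ and $Q_{k+1}$, and total-variation distance is non-increasing under a common Markov kernel, $\|Q_k-Q_{k+1}\|$ is at most the expectation, over $\eta\sim\nu_k^{\oe}$, of the total-variation distance between one block of $\moe$ and one epoch of the censored single-site dynamics, both started from $\eta$.

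For this per-epoch comparison, note that in a block of $\moe$ all positions of a single parity are updated simultaneously while their opposite-parity neighbors remain fixed, and during the corresponding single-site censored epoch the same set of positions is updated while their neighbors again stay frozen. Because positions of the same parity share no neighbors, both the block dynamics and the censored single-site epoch factorize as products of independent one-dimensional processes, one per updated position. At position $i$ with frozen neighbor heights $a,b$, the block dynamics samples exactly from $\mu_{ab}$, whereas the single-site epoch runs a birth-death chain on $[0,n]$ with stationary distribution $\mu_{ab}$ for $T_i\ge D^2\log^6 n$ steps. A routine spectral analysis shows that this chain has relaxation time $O\bigl((b-a+\beta^{-1})^2\bigr)$, hence its distance to stationarity after $T_i$ steps is $\nep{-\Omega(T_i/(b-a+\beta^{-1})^2)}$. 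On the event $\eta\notin B$ one has $b-a\le 2D\log^2 n$ for every column, which yields a marginal error of $\nep{-\Omega(\log^2 n)}$ at each position; summing over the at most $n$ updated positions still leaves a total-variation error of $\nep{-\Omega(\log^2 n)}$.

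Using the trivial bound of~$1$ when $\eta\in B$, we obtain $\|Q_k-Q_{k+1}\|\le \nu_k^{\oe}(B)+\nep{-\Omega(\log^2 n)}$. Summing from $k=0$ to $M-1$ gives $\|\nu_{\vecoe(\veci)}-\nu_M^{\oe}\|\le M\bigl(\max_s \nu_s^{\oe}(B)+\nep{-\Omega(\log^2 n)}\bigr)$, and combining with the symmetric EO estimate yields the claim. The point that requires the most care is the single-column mixing estimate: one must verify that the exponential tails of $\mu_{ab}$ outside $[a,b]$ do not inflate the mixing time beyond $O\bigl((b-a+\beta^{-1})^2\log(n/\epsilon)\bigr)$. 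This can be handled by standard birth-death techniques (canonical paths, or direct comparison with a uniform chain on the effective support of $\mu_{ab}$), exploiting the fact that beyond $[a,b]$ the drift towards the interval is of order~$\beta$.
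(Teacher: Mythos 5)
Your hybrid decomposition into the intermediate distributions $Q_k$, the factorization of a parallel block and a censored single-site epoch into independent one-dimensional birth--death chains on columns of a fixed parity, and the use of $\eta\notin B$ to control gradients all track the paper's proof closely. The gap is in the single-column mixing estimate. You assert that because the birth--death chain at a column has relaxation time $\tau_{\mathrm{rel}}=O\bigl((b-a+\beta^{-1})^2\bigr)$, its distance to stationarity after $T_i$ steps is $e^{-\Omega(T_i/\tau_{\mathrm{rel}})}$. That conclusion does not follow from a cold start: the spectral bound reads $\|P^{T_i}(x,\cdot)-\mu_{ab}\|\le\tfrac12\,\mu_{ab}(x)^{-1/2}e^{-T_i/\tau_{\mathrm{rel}}}$, and when the initial height is at distance $\ell$ from $[a,b]$ the warm-start prefactor $\mu_{ab}(x)^{-1/2}$ is $e^{\Theta(\beta\ell)}$ up to polynomial factors. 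On $\eta\notin B$ you have $\ell\le D\log^2 n$ and $b-a\le 2D\log^2 n$, so $\tau_{\mathrm{rel}}=O(D^2\log^4 n)$; with $T_i\ge D^2\log^6 n$ the spectral bound becomes $e^{O(D\log^2 n)-\Omega(\log^2 n)}$, which is vacuous once $D$ exceeds a constant. A pure spectral argument would force $T_i$ to be of order $D^3\log^6 n$ to absorb the warm-start factor, degrading $t_{n,D}$ and hence the final mixing time by a factor of $D$.

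The missing step is the \emph{additive} coupling-time bound $O\bigl((b-a)^2+\ell\bigr)$ that the paper establishes directly: the uniform drift of order $\beta$ outside $[a,b]$ brings the walker into the interval in $O(\ell)$ steps, after which a coupling with a stationary copy inside $[a,b]$ coalesces in a further $O\bigl((b-a)^2\bigr)$ steps. Boosting by $O(\log^2 n)$ repetitions yields error $e^{-\Omega(\log^2 n)}$ within $O\bigl(((b-a)^2+\ell)\log^2 n\bigr)=O(D^2\log^6 n)\le T_i$ steps, which is exactly the available budget. Your final sentence does mention the drift, but routing it through canonical-path or comparison arguments only recovers the spectral gap and cannot separate the $\ell$ and $(b-a)^2$ contributions additively; the drift has to be used via a hitting-time or coupling bound, as in the paper.
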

The intuition for this Claim, which is proved formally in the appendix, is the following.
The first term on the r.h.s.\ bounds the probability of seeing a bad configuration
in~$\mpar$, so we may assume that $\eta\notin B$.
A sequence of single-site updates at position~$i$ (with its neighboring heights
$a,b$ fixed) can be viewed as a nearest-neighbor random walk on column~$i$
with stationary distribution equal to the distribution of a column update.
This distribution (see~\eqref{eq:trans})  is uniform on the interval $[a,b]$ and 
decays exponentially outside it.  Hence the mixing time of this random walk, 
starting from a position at distance~$\ell$ from the interval $[a,b]$, is
$O((b-a)^2 + \ell)$.  But since $\eta\notin B$, both $(b-a)$ and~$\ell$ are
bounded by $2D\log^2 n$, so the mixing time is $O((D\log^2 n)^2)$.
Thus $D^2\log^6 n$ single-site updates at position~$i$
suffice to simulate a single column update
with very small error $e^{-\Omega(\log^2 n)}$, which is the second term in the bound. 
The factor~$M$ comes from a union bound over steps of the column dynamics.

In order to use Claim~\ref{claim:oddeven}, we need to bound 
$\nu_s^{\mathrm{OE}}(B)$ (and, symmetrically, $\nu_s^{\mathrm{EO}}(B)$),
the probability of the dynamics creating a large gradient.
This is in general a non-trivial task because it requires detailed 
non-equilibrium information about the contours.  However, it is here that
our choice of the initial distribution $\nu_0=\mu(\cdot \tc A)$
becomes crucial.  Since $\mu$
remains invariant under any number of steps of $\moe$ (and of $\meo$), we can
write, for any~$s$,
 \begin{equation}  \label{eq:conditiononA}
  \nu_s^{\mathrm{OE}}(B) \le \mu(B)/\mu(A),
\end{equation}
with an identical bound for $\nu_s^{\mathrm{EO}}(B)$.  But the right-hand side
here is easy to evaluate as it is the ratio of the probabilities of two events {\it in equilibrium\/}!
In particular, the following straightforward  bound is proved in 
part~(c) of Lemma~\ref{Gaussian bound} in the appendix:
$$
\mu(B)\le n^a \nep{-(D\log^2 n)/c}
$$
for some constants $a,c>0$. Hence, thanks to the definition of $D$, 
\begin{equation}
  \label{eq:key 1bis}
\mu(B)/\mu(A)\le \nep{-\Omega(\log^2 n)}.
\end{equation}

We can now put everything together.  For each $\veci\in \Sigma$,  the 
quantity in~\eqref{eq:13} is bounded by
\begin{equation}\label{eq:17}
      \|\half(\nu_{\vecoe(\veci)} + \nu_{\veceo(\veci)}) - \nupar_{M}\|
              + \|\nupar_{M} -\mu\|,
\end{equation}
where $\nupar_s$ denotes the distribution obtained from~$\nu_0$ after $s$ steps 
of the parallel column dynamics.  By Claim~\ref{claim:oddeven} and
inequalities~\eqref{eq:conditiononA} and~\eqref{eq:key 1bis}, 
the first term in~\eqref{eq:17} is bounded by $\nep{-\Omega(\log^2 n)}$,
which is certainly $o(1/n^b)$ for any fixed~$b$,
while the second term is $o(1/n^b)$ by Theorem~\ref{thm:par} and 
Remark~\ref{rem:mixingtime} and the fact that 
$M\gg n^2\log n$ (the mixing time of~$\mpar$).  Hence the variation distance of
the dynamics is $o(1/n^b)$, as claimed in the lemma.
\end{proof}

\subsection{From maximal height to equilibrium}
\label{from n to equilibrium}
In this subsection we show that, after at most $\tilde O(n^{3.5})$ steps,
the single-site dynamics starting from the maximal configuration (in which
the contour has height~$n$ everywhere) reaches equilibrium.  For convenience
we will work throughout this subsection with a slightly modified model in which the 
set of allowed heights is~$\nset$ rather than $[0,n]$.  The equilibrium
distribution~$\mu$ for this model is defined exactly as in~\eqref{eq:gibbs},
where the partition function~$Z_\beta$ is appropriately defined.  (Note that~$Z_\beta$
remains bounded for any $\beta>0$.)  We will show that the variation distance
between the contour at height~$n$ and the equilibrium contour in this model
becomes very small in $\Otilde(n^{3.5})$ steps.  This immediately implies the
same result for our original model with
height set $[0,n]$ because of monotonicity and the fact that the variation distance
between the two equilibrium distributions is exponentially small in~$n$ 
(see Remark~\ref{rem:Gaussian bound} in the appendix). 
We will use $\Omega^\infty_n$ to denote the set of configurations with height set~$\nset$.
We note also that our basic building block, Lemma~\ref{first key lemma},
is easily seen to hold in this setting also.

The main ingredient in this subsection is the following lemma, which says roughly
that an initial contour at height~$h\le\sqrt{n}$ drops to height approximately~$h/2$ after 
$\Otilde(n^3)$ steps.
\begin{lemma}
\label{second  key lemma}
Let $C(\beta)\log n \le h\le \sqrt{n}$, where $C(\beta)$ is a specific constant
depending only on~$\beta$.  Let  $\nu_t$  be the distribution at time~$t$ of the single-site
dynamics started from $\mu$ conditioned on the event 
$A_{h}:=\{\h(i)\ge h \;\forall i\in [1,n]\}$. Then there exists a time
$t_n=\tilde O(n^3)$ such that, for any increasing function 
$f:\Omega^\infty_n\mapsto \mathbf R$ with $||f||_\infty \le 1$,
\begin{equation}
  \label{eq:key 2}
\nu_{t_n}(f)\le \mu(f\tc A_{h/2}) + o(1/n),
\end{equation}
where the term $o(1/n)$ is independent of~$f$.
\end{lemma}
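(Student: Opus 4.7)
The plan is to construct an auxiliary ``floored'' dynamics $\bar\eta_t$ that stochastically dominates the original dynamics $\eta_t$ from above and has $\mu(\cdot\tc A_{h/2})$ as its invariant measure, then to show that $\bar\eta_t$ mixes to its invariant in time $\Otilde(n^3)$.

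Specifically, I would take $\bar\eta_t$ to be the single-site Glauber dynamics modified so that any proposed update setting some height below $h/2$ is rejected; this restricted chain is easily seen to be reversible with respect to $\mu(\cdot\tc A_{h/2})$. Using the natural grand coupling (shared site $i$ and uniform $r$ at each step), and starting both $\eta_t$ and $\bar\eta_t$ from the common distribution $\nu_0 = \mu(\cdot\tc A_h) \subset A_{h/2}$, the floor on $\bar\eta_t$ only rejects downward moves, so a straightforward monotonicity argument yields $\bar\eta_t \ge \eta_t$ pointwise for every $t \ge 0$. Consequently, for any increasing $f$ with $\|f\|_\infty\le 1$,
\begin{equation*}
   \nu_{t_n}(f)\le \E[f(\bar\eta_{t_n})] \le \mu(f\tc A_{h/2}) + \|\bar\nu_{t_n}-\mu(\cdot\tc A_{h/2})\|,
\end{equation*}
where $\bar\nu_{t_n}$ denotes the law of $\bar\eta_{t_n}$.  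Thus the problem reduces to showing $\|\bar\nu_{t_n}-\mu(\cdot\tc A_{h/2})\|=o(1/n)$.

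To obtain this variation-distance bound I would adapt the blueprint of Lemma~\ref{first key lemma} to the floored setting.  The censoring inequality (Lemma~\ref{lem:censoring}) applies because $\nu_0/\mu(\cdot\tc A_{h/2})\propto \1_{A_h}$ is increasing, and this reduces the analysis to the corresponding floored parallel column dynamics.  The Wilson eigenfunction argument of Theorems~\ref{thm:col}--\ref{thm:par} extends to the floored setting since the additional entropy repulsion from the floor is compatible with the contraction identity of Corollary~\ref{cor:newht}, giving $\Otilde(n^2)$ mixing for the floored parallel column chain.  The censoring simulation of column updates by single-site updates is then controlled by a refined gradient analysis that separates bulk from boundary gradients: the rarity argument of Lemma~\ref{first key lemma}, applied in the floored state space, yields $\polylog n$ bounds on bulk gradients (affecting $n-O(1)$ columns), while the $\Theta(h)$ boundary gradients are unavoidable but affect only $O(1)$ columns.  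Using $h\le\sqrt{n}$, the total single-site overhead is $O(n\polylog n)+O(h^2)=\Otilde(n)$ per parallel sweep, yielding an overall mixing time of $\Otilde(n^3)$ for $\bar\eta_t$.

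The main obstacle lies in making the rarity argument work in the floored setting.  The initial density $\nu_0/\mu(\cdot\tc A_{h/2})=\1_{A_h}/\mu(A_h\tc A_{h/2})$ has supremum $\mu(A_{h/2})/\mu(A_h)$, which can be as large as $\exp(\Omega(\beta h))=\exp(\Omega(\sqrt{n}))$, far exceeding the naive rarity threshold $\exp(-\Omega(\log^2 n))$.  Closing this gap requires exploiting approximate independence between the global lift event $A_h$ and local gradient events, together with precise Gibbs estimates of the type provided by Lemma~\ref{Gaussian bound} in the appendix, to show that $\Pr_{\mu(\cdot\tc A_{h/2})}[\text{large bulk gradient}\mid A_h]$ is still $\exp(-\Omega(\log^2 n))$.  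The upper bound $h\le\sqrt n$ is used critically to keep the boundary-gradient contribution to the censoring overhead at most $O(n)$, and the lower bound $h\ge C(\beta)\log n$ ensures that the conditioning bias does not overwhelm the bulk rarity estimate.
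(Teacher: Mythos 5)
Your domination idea is sound: a ``floored'' Glauber dynamics with state space $A_{h/2}$ is reversible w.r.t.\ $\mu(\cdot\tc A_{h/2})$ and, under the standard grand coupling, stays above the original chain, so once you prove $\|\bar\nu_{t_n}-\mu(\cdot\tc A_{h/2})\|=o(1/n)$ the conclusion follows for increasing $f$. This is a genuinely different decomposition from the paper's.

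The gap is exactly where you flagged it, and it is not a loose end you can tidy up with the tools you invoke --- it is the whole problem. Running the argument of Lemma~\ref{first key lemma} verbatim in the floored state space gives, for the probability of a bad gradient at time~$s$, the bound $\nu_s(B)\le \mu^{h/2}(B)/\mu^{h/2}(A_h)=\mu(B\cap A_{h/2})/\mu(A_h)\le\mu(B)/\mu(A_h)$, so the relevant quantity is $\log(1/\mu(A_h))$ with the boundaries still pinned at~$0$, which is $\Theta(\beta h)$ (the contour must ramp from~$0$ to~$h$ at each end). For $h$ near $\sqrt n$ this makes $D=\Theta(\sqrt n)$, hence $t_{n,D}=\Otilde(n^4)$, and no version of the ``bulk vs.\ boundary'' split you sketch rescues this: the conditioning on $A_h$ \emph{forces} a jump of order $h$ across position~$0$ (resp.\ $n+1$), and the $\mu(B)/\mu(A_h)$ computation is not an overcount that approximate independence can remove --- it reflects the true cost of $A_h$. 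Your appeal to ``approximate independence between the global lift event and local gradient events'' is a hope, not a mechanism; making it precise would require exactly the kind of non-equilibrium shape control the paper says is notoriously hard, and that Lemma~\ref{first key lemma} was engineered to avoid.

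The paper's resolution is to \emph{move the boundaries away}: it introduces the dynamics on the enlarged interval $[-\ell+1,n+\ell]$ with zero boundary conditions at distance $\ell$, monotonically dominating the original. On the enlarged interval, part~(b) of Lemma~\ref{Gaussian bound} gives $\mu^{(\ell)}(A_h)\ge \frac1{cn^a}\nep{-ch^2/\ell}$, because the contour now has $\ell$ sites over which to climb to height~$h$, so no large gradient is forced. Choosing $\ell=\lfloor\delta h^2/\log n\rfloor$ makes $D=O(\log n)$, so Lemma~\ref{first key lemma} applies with $t_{n,D}=\Otilde(n^3 h^4/\ell^2)=\Otilde(n^3)$. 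The constraint $h\le\sqrt n$ guarantees $\ell\le n$, and $h\ge C(\beta)\log n$ guarantees $h/(2\ell)\le\beta/2$ as needed in Lemma~\ref{Gaussian bound}(a). One then passes from $\mu^{(\ell)}(f)$ to $\mu(f\tc A_{h/2})$ by conditioning on $\max(\eta(1),\eta(n))\le h/2$ and showing the complementary event has probability $o(1/n)$. Your floored-dynamics framing could in principle be combined with this enlarged-interval idea, but absent it your proof does not close.
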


\begin{proof}
In the proof, we will make use of the single-site dynamics on the enlarged
interval $[-\ell+1,n+\ell]$, with boundary conditions at positions $-\ell$
and~$n+\ell+1$.  The parameter $\ell\le n$ will be chosen later.  
We may construct a monotone coupling of this dynamics with our original 
one by choosing the position~$i$ to be updated from the enlarged interval 
$[-\ell+1,n+\ell]$, and doing nothing in the original dynamics if $i\notin [1,n]$.  
Plainly this slows down the original dynamics by at most a factor of~3 and
so does not affect our results.

Now consider the enlarged dynamics started in its equilibrium
distribution~$\mu^{(\ell)}$ conditioned on the event $A_{h}$. Denote its
distribution at time $t$ by $\nu^{(\ell)}_t$. 
By part~(b) of Lemma~\ref{Gaussian bound} in the appendix, we have
\begin{equation*}
  \mu^{(\ell)}(A_{h})\ge \frac{1}{cn^a}\nep{-c h^2/\ell}
\end{equation*}
for constants $a,c>0$.  
Moreover the event $A_h$ is clearly increasing and therefore 
Lemma~\ref{first key lemma} applied to the enlarged dynamics
implies that for a suitable time~$t=\Otilde(n^3\frac{h^4}{\ell^2})$ the variation
distance between  $\nu^{(\ell)}_t$ and $\mu^{(\ell)}$ is~$o(1/n)$. 
If we now set the free parameter $\ell$ equal to
$\inte{\frac{\delta h^2}{\log n}}$, where $\delta \ll 1$ will be fixed later,
we have that $t=\Otilde(n^3)$.  (Note also that $\ell\le n$, as stipulated
earlier; this is why we require the upper bound on~$h$.)  
Thus we may take $t_n:=t=\Otilde(n^3)$
and get that, for any $f$ as in the statement of the Lemma,
\begin{equation}
  \label{eq:key 3}
  \nu_{t_n}(f)\le \nu^{(\ell)}_{t_n}(f)\le \mu^{(\ell)}(f)+ o(1/n).
\end{equation}

We now bound~$\mu^{(\ell)}(f)$.
Let $E_{h}=\{\h\in\Omega^\infty_n:\ \max(\h(1),\h(n))\le h\}$. Then
\begin{eqnarray}
  \mu^{(\ell)}(f) &\le& \mu^{(\ell)}\bigl(f\tc E_{h/2}\bigr)
  +\mu^{(\ell)}\bigl(E^c_{h/2}\bigr)\nonumber\\
&\le& \mu(f\tc A_{h/2}) +\mu^{(\ell)}\bigl(E^c_{h/2}\bigr).  \label{eq:key 4}
\end{eqnarray}
Now by part~(a) of Lemma~\ref{Gaussian bound} in the appendix, provided
$h/2\ell \le \beta/2$ we have
$$
 \mu^{(\ell)}\big(E^c_{h/2}\bigr)\le
 n^a \nep{-h^2/c\ell}
$$
for constants $a,c>0$.  By choosing the constant~$\delta$ 
in our definition of~$\ell$ small enough, we can make this latter
quantity $o(1/n)$, which via~\eqref{eq:key 4} and~\eqref{eq:key 3}
yields the desired bound~\eqref{eq:key 2}.
Finally, the condition  $h/2\ell \le \beta/2$ translates to
$h\ge \frac{1}{\delta\beta} \log n$, i.e., $h\ge C(\beta)\log n$.
\end{proof}

A simple iterative application of Lemma~\ref{second key lemma} yields the following:
\begin{corollary}
\label{key corollary}
In the setting of Lemma~\ref{second  key lemma}, for any integer
$j$ such that $2^{-j}h \ge C(\beta)\log n$ we have $$
  \nu_{jt_n}(f)\le \mu(f\tc A_{h/2^j}) + o(j/n).  $$
\end{corollary}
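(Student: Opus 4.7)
The plan is to proceed by induction on $j$, with Lemma~\ref{second key lemma} itself supplying the base case $j=1$. The only real ingredient beyond Lemma~\ref{second key lemma} is the monotonicity of the single-site dynamics, which propagates the ``approximate stochastic domination'' statement~\eqref{eq:key 2} across one more epoch of length $t_n$.

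For the inductive step, let me write $P_{t_n}$ for the $t_n$-step transition operator of the single-site dynamics and $\nu^{(h)}_s$ for the distribution at time $s$ starting from $\mu(\cdot\tc A_h)$. Fix any increasing $f:\Omega^\infty_n\to\mathbf R$ with $\|f\|_\infty\le 1$. Since the single-site dynamics admits the monotone complete coupling of Section~\ref{sec:prelims}, the function $g:=P_{t_n}f$ is again increasing and satisfies $\|g\|_\infty\le 1$. Decomposing the $(j+1)t_n$-step evolution at time $jt_n$ gives
\begin{equation*}
  \nu^{(h)}_{(j+1)t_n}(f)\;=\;\nu^{(h)}_{jt_n}(g).
\end{equation*}
Applying the inductive hypothesis to the increasing function $g$ (valid because the $o(j/n)$ error in the hypothesis is uniform over increasing test functions of sup norm at most one),
\begin{equation*}
  \nu^{(h)}_{jt_n}(g)\;\le\;\mu\bigl(g\tc A_{h/2^j}\bigr)+o(j/n)\;=\;\nu^{(h/2^j)}_{t_n}(f)+o(j/n),
\end{equation*}
where the last equality is the definition of $g=P_{t_n}f$ together with stationarity of the relevant expectation. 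Finally, the assumption $2^{-j}h\ge C(\beta)\log n$ (together with $h\le\sqrt n$) lets us apply Lemma~\ref{second key lemma} with starting height $h/2^j$ in place of $h$, yielding
\begin{equation*}
  \nu^{(h/2^j)}_{t_n}(f)\;\le\;\mu\bigl(f\tc A_{h/2^{j+1}}\bigr)+o(1/n).
\end{equation*}
Adding the two error terms closes the induction.

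The argument is essentially bookkeeping once one notices the two easy facts used above: (i) $P_{t_n}$ maps increasing functions of sup norm at most one to functions of the same kind, by the monotone coupling; and (ii) the error $o(1/n)$ in Lemma~\ref{second key lemma} is uniform in the test function $f$, as explicitly stated there. Consequently I do not foresee a real obstacle; the only point requiring mild care is to verify that the condition on $h$ needed to invoke Lemma~\ref{second key lemma} at the $k$-th iteration, namely $h/2^{k-1}\ge C(\beta)\log n$, follows automatically from the hypothesis $h/2^j\ge C(\beta)\log n$ of the corollary for every $k\le j$, which is immediate since the sequence of thresholds is decreasing in $k$.
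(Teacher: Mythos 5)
Your proof is correct and uses the same two ingredients as the paper's own argument: Lemma~\ref{second key lemma} together with its error term being uniform over increasing test functions of sup norm at most one, and the fact that the monotone coupling makes $P_{t_n}$ preserve such test functions. The only difference is the direction of the peeling --- the paper isolates the first epoch, applying the lemma at height $h$ to $P_{(j-1)t_n}f$ and iterating on the remaining $(j-1)$ epochs, while you isolate the last epoch, applying the inductive hypothesis to $P_{t_n}f$ and then the lemma at height $h/2^j$ --- which is an entirely cosmetic re-ordering of the same induction.
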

\begin{proof}
Write $\nu^{(h)}_t$ for the distribution of the single-site dynamics at
time~$t$, started in $\mu(\cdot\tc A_{h})$.
Let $f^{(j-1)}$ be the function on $\Omega^\infty_n$ obtained by applying
the transition matrix of the single-site dynamics $(j-1)t_n$ times to
the original function~$f$. Clearly $f^{(j-1)}$ is still increasing with
$||f^{(j-1)}||\le 1$, so Lemma~\ref{second key lemma} yields 
\begin{eqnarray*}
  \label{eq:key 6}
\nu^{(h)}_{jt_n}(f) = \nu^{(h)}_{t_n}(f^{(j-1)}) &\le&
\mu(f^{(j-1)}\tc A_{h/2})+ o(1/n)\\
&=&\nu^{(h/2)}_{(j-1)t_n}(f) +o(1/n).
\end{eqnarray*}
Iterating over~$j$ completes the proof.
\end{proof}

We are now in a position to prove our first main result, which says that the
mixing time of the single-site dynamics starting at height~$\sqrt{n}$ is~$\Otilde(n^3)$.

\begin{theorem}
\label{root(n) mix}
Let $A=\{\h\in \Omega^\infty_n:\ \h(i)\ge \sqrt{n} \;\forall i\in[1,n]\}$ and let $\nu_t$ 
be the distribution at time $t$ of the single-site dynamics started in
the distribution $\mu(\cdot \tc A)$. Then for some time $t_n= \tilde O(n^3)$
we have
$$
||\nu_{t_n}-\mu|| = o(1/\sqrt{n}).
$$   
\end{theorem}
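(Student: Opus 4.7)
The plan is to combine the iterative Corollary~\ref{key corollary} with a single final application of Lemma~\ref{first key lemma}, and then to convert the resulting monotone-function bound into a total variation bound via the monotone coupling. Applying Lemma~\ref{first key lemma} directly to $A=A_{\sqrt n}$ is useless, because $\mu(A_{\sqrt n})\le \nep{-\Omega(n)}$ would force $D=\Theta(n)$ and $t_{n,D}=\Otilde(n^5)$; the role of the iteration is to lower the conditioning event from $A_{\sqrt n}$ to a much more probable event where $D$ is only polylogarithmic.

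First I would iterate Corollary~\ref{key corollary} starting from $h=\sqrt n$, halving $h$ at each round for $j=\lceil \log_2(\sqrt n/h')\rceil=O(\log n)$ rounds, where $h':=C(\beta)\log n$. The per-iteration error in Lemma~\ref{second  key lemma} is governed by the ``entropic leakage'' $\mu^{(\ell)}(E^c_{h/2})\le n^a\nep{-h^2/c\ell}$: with $\ell=\lfloor\delta h^2/\log n\rfloor$ this equals $n^{a-1/(c\delta)}$, and by choosing $\delta$ sufficiently small (depending on an arbitrary fixed $b>0$) it can be made $o(1/n^b)$ without changing the rate $t_n=\Otilde(n^3)$. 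Hence the iterated bound reads $\nu_{jt_n}(f)\le \mu(f\mid A_{h'})+o(\log n/n^b)$ for every increasing $f$ with $\|f\|_\infty\le 1$, at total cost $jt_n=\Otilde(n^3)$.

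Next I would apply Lemma~\ref{first key lemma} to $A_{h'}$: by part~(b) of Lemma~\ref{Gaussian bound}, $\mu(A_{h'})\ge n^{-a}\nep{-c\log^2 n}$, so $D=O(\log^2 n)$ and $t_{n,D}=\Otilde(n^3)$. Writing $P$ for the transition operator of $\mss$ and $T:=jt_n+t_{n,D}=\Otilde(n^3)$, the monotone coupling preserves monotonicity, so $g:=P^{t_{n,D}}f$ is again increasing with $\|g\|_\infty\le 1$ whenever $f$ is, and concatenation yields
\begin{equation*}
 \nu_T(f)=\nu_{jt_n}(g)\le \bigl(\mu(\cdot\mid A_{h'})P^{t_{n,D}}\bigr)(f)+o(\log n/n^b)\le \mu(f)+o(\log n/n^b).
\end{equation*}
The reverse inequality $\nu_T(f)\ge \mu(f)$ is immediate from $\nu_0\succeq\mu$ and monotone coupling, so $|\nu_T(f)-\mu(f)|=o(\log n/n^b)$ for every increasing $f$ with $\|f\|_\infty\le 1$.

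The hardest step will be converting this one-sided monotone bound into a genuine total variation bound. I would use the monotone coupling $(Y_T,X_T)$ of $\nu_T$ and $\mu$ with $Y_T\succeq X_T$, giving $\|\nu_T-\mu\|\le \Pr[Y_T\ne X_T]\le\sum_{i=1}^n \Pr[Y_T(i)>X_T(i)]$. For each $i$ and the cutoff $M:=C\sqrt n\log n$, one has $\Pr[Y_T(i)>X_T(i)]\le \E[(Y_T(i)\wedge M)-(X_T(i)\wedge M)]+\Pr_\mu[\eta(i)>M]$: the second term is $\nep{-\Omega(\log^2 n)}$ by the Gaussian tail of Lemma~\ref{Gaussian bound}, while decomposing $\eta(i)\wedge M=\sum_{k=1}^{M}\1_{\{\eta(i)\ge k\}}$ and applying the monotone-function bound to each indicator gives $\E[(Y_T(i)\wedge M)-(X_T(i)\wedge M)]\le M\cdot o(\log n/n^b)$. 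Summing over $i\in[1,n]$, the total is $o(nM\log n/n^b)=o(\log^2 n/n^{b-3/2})$, which is $o(1/\sqrt n)$ as soon as $b>2$.
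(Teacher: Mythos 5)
Your proposal is correct in outline and reaches the theorem, but it takes a substantially more laborious route than the paper for the final conversion from a monotone-function bound to total variation. Up through the iteration of Corollary~\ref{key corollary} and the final application of Lemma~\ref{first key lemma}, your argument tracks the paper's; the divergence is at what you call the hardest step. Your monotone-coupling, truncation at $M=C\sqrt n\log n$, and layer-cake decomposition do work (modulo one slip: $\Pr[Y_T(i)>M]$ is a $\nu_T$-probability, not a $\mu$-probability, so you need one more application of your own one-sided bound to the increasing indicator $\1_{\{\eta(i)>M\}}$), but they force you to strengthen Corollary~\ref{key corollary} to error $o(1/n^b)$ for a tunable $b>2$ and to absorb an $nM=\Theta(n^{3/2})$ factor. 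The paper avoids all of this by invoking the Peres--Winkler monotonicity preservation already recorded in Lemma~\ref{lem:censoring}: since $\nu_0/\mu=\chi_A/\mu(A)$ is increasing, the ratio $\nu_t/\mu$ stays increasing for all~$t$, so the set $U_t:=\{\eta:\nu_t(\eta)\ge\mu(\eta)\}$ is itself an increasing event, and the total variation distance is \emph{exactly} $\nu_t(U_t)-\mu(U_t)$. Applying the very monotone-function bound you derived to the single increasing function $\chi_{U_{t_n}}$ then gives $\|\nu_{t_n}-\mu\|=\nu_{t_n}(U_{t_n})-\mu(U_{t_n})=o((\log n)/n)\subset o(1/\sqrt n)$ in one line, with $b=1$ and no truncation. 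Your approach buys nothing here and is strictly more work; but it is instructive that the theorem does not in fact \emph{require} the increasingness of $\nu_t/\mu$, only the monotone ordering $\nu_T\succeq\mu$ plus a uniform monotone-function bound with a sufficiently high polynomial rate.
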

\begin{proof}
The event $A$ is increasing, so the relative density between the
initial distribution and the equilibrium one given by
$g(\h):= \frac{\mu(\h\tc A)}{\mu(\h)}= \frac{\chi(\h\in A)}{\mu(A)}$
is also increasing. As shown in \cite{peres} the same holds for
$g_t(\h)=\nu_t(\h)/\mu(\h)$. In particular the event $U_t=\{\h:\ \nu_t(\h)\ge
\mu(\h)\}$ is increasing.

By Corollary~\ref{key corollary} with $j=O(\log n)$
there exists  a time $s=\tilde O(n^3)$ such that 
$$
\nu_{s}(f)\le \mu(f\tc A_{\lceil C(\beta)\log n\rceil}) + o((\log n)/n)
$$
for any increasing function $f$ with $\|f\|_\infty\le 1$, where $A_h$
is defined as in Lemma~\ref{second key lemma}. Thus, for any $t>0$, we
can bound the probability $\nu_{s+t}(U_{s+t})$ by
$$
\nu_{s+ t}(U_{s+t})\le \tilde \nu_t(U_{s+t}) +o(1/\sqrt{n}),
$$
where $\tilde \nu_t$ is the distribution of the Glauber chain at time~$t$
started from the equilibrium distribution~$\mu$ conditioned on the event 
$A_{\lceil C(\beta)\log n\rceil}$.   It follows from the proof of part~(b) of 
Lemma~\ref{Gaussian bound} that
$\log\bigl(\frac{1}{\mu(A_{\lceil C(\beta)\log n\rceil})}\bigr)=O(\log n)$.
Therefore, by Lemma~\ref{first key lemma} we have that, for some
$t=\tilde O(n^3)$,
$$
 \tilde \nu_t(U_{s+t}) = \mu(U_{s+t})+ o(1/n).
$$
In conclusion, setting $t_n:= s+t = \Otilde(n^3)$ we get
$$
\|\nu_{t_n}-\mu\| = \nu_{t_n}(U_{t_n}) - \mu(U_{t_n}) = o(1\sqrt{n}),
$$
which completes the proof.
\end{proof}

A simple iterative application of the above theorem shows that,
starting at height $n$, the single-site dynamics reaches equilibrium
with a further $O(\sqrt{n})$ factor overhead, i.e., in total time $\Otilde(n^{3.5})$.
\begin{theorem}
\label{n mix} 
Let $B=\{\h\in \Omega^\infty_n:\ \h(i)\ge n \; \forall i\in[1,n] \}$ and let $\nu_t$ be the distribution 
at time $t$ of the single-site dynamics started
from $\mu(\cdot \tc B)$. Then for some time $t_n= \tilde O(n^{3.5})$ we have
$$
||\nu_{t_n}-\mu|| = o(1).
$$   
\end{theorem}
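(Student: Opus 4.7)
The plan is to iterate the mechanism of Theorem~\ref{root(n) mix} a total of $K := \sqrt{n}$ times, each iteration lasting $t^* := \tilde O(n^3)$ steps and lowering the effective ``floor'' by $\sqrt{n}$. Define $h_k := n - k\sqrt{n}$ for $k = 0, 1, \ldots, K$ (so that $h_0 = n$ and $h_K = 0$), and let $\nu_t$ denote the distribution at time $t$ of the single-site dynamics started from $\mu(\cdot \tc B)=\mu(\cdot \tc A_n)$. I would prove by induction on $k$ the stochastic domination
\begin{equation*}
\nu_{k t^*}(f) \le \mu(f \tc A_{h_k}) + o(k/\sqrt{n}) \qquad \text{for every bounded increasing } f \text{ on } \Omega^\infty_n.
\end{equation*}
The base case is immediate since $\nu_0=\mu(\cdot \tc A_n)$. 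Taking $k=K$ yields $\nu_{K t^*}(f) \le \mu(f) + o(1)$, and the ``monotone density'' argument from the end of the proof of Theorem~\ref{root(n) mix} (passing to the increasing event $U_t = \{\eta : \nu_t(\eta) \ge \mu(\eta)\}$, which is increasing because $\nu_t/\mu$ is increasing by Peres--Winkler) converts this into $\|\nu_{K t^*} - \mu\| = o(1)$. Since $K t^* = \tilde O(n^{3.5})$, this is the desired conclusion.

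The inductive step reduces, via the Markov property and monotonicity of the coupling, to the following auxiliary ``shifted'' version of Theorem~\ref{root(n) mix}: for every $h$ with $\sqrt{n} \le h \le n$, the dynamics started from $\mu(\cdot \tc A_h)$ has, at time $t^*$, distribution stochastically dominated above by $\mu(\cdot \tc A_{h-\sqrt{n}})$ up to variation error $o(1/\sqrt{n})$. I would prove this auxiliary claim along the blueprint of Lemma~\ref{second key lemma}: embed the dynamics into the enlarged single-site dynamics on $[-\ell + 1, n + \ell]$ with boundary heights $0$ and initial distribution $\mu^{(\ell)}(\cdot \tc A_h)$, where $\ell$ is tuned to the regime $h \in [\sqrt{n}, n]$. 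Part~(b) of Lemma~\ref{Gaussian bound} gives $\mu^{(\ell)}(A_h) \ge n^{-a}\nep{-c h^2/\ell}$, so $\ell$ can be chosen so that $D := \log(1/\mu^{(\ell)}(A_h)) = \tilde O(1)$; Lemma~\ref{first key lemma} applied to the enlarged dynamics then drives its distribution to $\mu^{(\ell)}$ within $t^* = \tilde O(n^3)$ steps. Finally, the Gaussian-tail estimate of Lemma~\ref{Gaussian bound}(a) shows that under $\mu^{(\ell)}$ the event ``the contour on $[1, n]$ fails to lie in the typical support of $\mu(\cdot \tc A_{h-\sqrt{n}})$'' has probability $o(1/\sqrt{n})$, and the monotone coupling between the enlarged and original dynamics transfers this control back to the single-site dynamics on the original interval.

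The main obstacle is the calibration of $\ell$ in the high-$h$ regime: the prescription $\ell = \Theta(h^2/\log n)$ used in Lemma~\ref{second key lemma} exceeds $n$ once $h \gg \sqrt{n \log n}$, so the factor-of-3 embedding slowdown becomes $O(\ell/n)$ and must be absorbed into the $\tilde O$ notation. Balancing this slowdown against the requirement $D = \tilde O(1)$ and the tail probability of $\mu^{(\ell)}$ outside the support of $\mu(\cdot \tc A_{h-\sqrt{n}})$ on $[1,n]$ is the technical heart of the inductive step; what makes it feasible---and what distinguishes the present theorem from Theorem~\ref{root(n) mix}---is that each iteration needs only to lower the floor by the modest amount $\sqrt{n}$ rather than halve it, which provides substantial slack in both the Gaussian-tail estimate and the relative-entropy budget and so keeps each stage within $\tilde O(n^3)$ time, closing the induction at the target cost of $\tilde O(n^{3.5})$ overall.
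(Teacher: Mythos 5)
Your high-level outline (iterate $\sqrt{n}$ times, each stage lowering the effective floor by $\sqrt{n}$, costing $\tilde O(n^3)$ per stage, then pass to the increasing set $U_t$) matches the paper, and your base case and closing argument are correct. However, the implementation of the inductive step contains a genuine gap that you yourself flag but do not resolve, and I do not believe it can be resolved within your framework.

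You propose, at stage $k$, to run the \emph{original} single-site dynamics (floor $0$, boundary $0$) conditioned on $A_{h}$ with $h=n-k\sqrt{n}$, and to invoke Lemma~\ref{first key lemma} via the enlarged-interval device of Lemma~\ref{second key lemma}. But for $h$ anywhere near $n$, the event $A_h$ under $\mu^{(\ell)}$ has probability roughly $e^{-ch^2/\ell}$ (Lemma~\ref{Gaussian bound}(b)), so $D=\log(1/\mu^{(\ell)}(A_h))\approx h^2/\ell$. Making $D$ polylogarithmic forces $\ell\gtrsim h^2/\log n$, which for $h\sim n$ is of order $n^2/\log n\gg n$. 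The enlarged interval then has length $N:=n+2\ell$ of order $n^2/\polylog(n)$, and the time bound furnished by Lemma~\ref{first key lemma} scales as $N^3 D^2\,\polylog(N)$. No choice of $\ell$ escapes this: if you instead cap $\ell$ at, say, $O(n)$ to keep $N=O(n)$, then $D\approx h^2/\ell\approx n$ is not polylogarithmic and the cost per stage becomes $\tilde O(n^3 D^2)=\tilde O(n^5)$; if you let $\ell$ grow beyond $n$, the $N^3$ factor blows up. Either way a single stage in the high-$h$ regime already exceeds $\tilde O(n^3)$, so the claim that ``this slowdown \ldots must be absorbed into the $\tilde O$ notation'' fails --- the slowdown is polynomial, not polylogarithmic. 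The slack you point to (decrement $\sqrt{n}$ rather than halving) is irrelevant to this difficulty: the problem is the size of $h$ itself, not the size of the decrement.

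The missing idea --- and what the paper actually does --- is to change the \emph{dynamics}, not the conditioning event. At stage $k$ the paper introduces a new single-site dynamics whose allowed heights are $[n-k\sqrt{n},\infty]$ and whose boundary conditions are raised to $n-k\sqrt{n}$, with equilibrium $\mu^{(k)}$. After subtracting $n-k\sqrt{n}$ from all heights, this is literally the original SOS model on $\nset$ with zero boundary, started from a configuration at height $\ge\sqrt{n}$ above its floor; Theorem~\ref{root(n) mix} therefore applies verbatim with cost $\tilde O(n^3)$ per stage, no enlarged interval and no large-$D$ conditioning. Monotonicity (raised floor and raised boundary dominate the original) gives $\nu_t(f)\le\nu^{(k)}_t(f)\le\mu^{(k)}(f)+o(1/\sqrt n)$ for increasing $f$, and iterating $\sqrt n$ times lands you at $\mu$. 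Because the floor and boundary move in tandem, the relative height of the contour above the current floor is always $O(\sqrt n)$, which is exactly the regime where Theorem~\ref{root(n) mix} (and the underlying Lemma~\ref{first key lemma}) has polylogarithmic relative-entropy cost. Your scheme, by keeping the floor and boundary fixed at $0$, forces the conditioning to absorb all of the height $n$, and the entropy cost of that conditioning is what blocks the argument.
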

\begin{remark}
By monotonicity, the theorem immediately implies the same conclusion for
the single-site dynamics started in the (maximal) configuration $\eta(i)=n$
for all $i\in[1,n]$.
\end{remark}
\begin{proof}
Let us define a new height set $H^{(1)}:= 
[n-\sqrt{n},\infty]$, let $\Omega^{(1)}_n:=\{\h\in \Omega^\infty_n:\  \h(i)\in
H^{(1)} \; \forall i\in[1,n] \}$ and let $\mu^{(1)}$ be the equilibrium
distribution on $\Omega^{(1)}_n$ given by $\mu(\cdot \tc \Omega^{(1)}_n)$. Let
$\nu^{(1)}_t$  be the distribution at time~$t$ of the single-site dynamics
on $\Omega^{(1)}_n$ started from $\mu^{(1)}(\cdot \tc B)$ and with boundary
conditions $\h(0)=\h(n+1)=n-\sqrt{n}$. 

By monotonicity and Theorem~\ref{root(n) mix} applied to $\nu^{(1)}_t$ 
we have that, after time $t=\Otilde(n^3)$, for any increasing
function $f$ with $\|f\|_\infty \le 1$,
$$
\nu_{t}(f)\le \nu^{(1)}_{t}(f)\le \mu^{(1)}(f) +o(1/\sqrt{n}).
$$
If we now define $H^{(2)}:= 
[n-2\sqrt{n},\infty]$ and $\Omega_n^{(2)},\mu^{(2)}$ analogously, we
get that after a further $t$ steps the distribution $\nu_{2t}$ of the 
original chain satisfies
$$
     \nu_{2t}(f) \le \mu^{(2)}(f) + o(1/\sqrt{n}) +o(1/\sqrt{n}).
$$  
Iterating $\sqrt{n}$ times shows that at time $t_n:=\sqrt{n}t = \Otilde(n^{3.5})$
we have
$$
     \nu_{t_n}(f) \le \mu(f) + o(1).
$$  
The proof is complete once we apply the above inequality to the
indicator of the increasing set $U=\{\h\in \Omega^\infty_n:\ \nu_{t_n}(\h) \ge \mu(\h)\}$. 
\end{proof}

\subsection{From zero height  to equilibrium} 
\label{from zero to equilibrium}
In this subsection we prove the complementary fact that the single-site
dynamics starting from the {\it minimal\/} configuration (in which all heights
are zero) reaches equilibrium in $\Otilde(n^3)$ steps.  In our argument
we will make use of auxiliary versions of the dynamics in which certain heights 
are fixed to be zero.  Specifically, for any integer~$m$, let $\mssm{m}$ 
denote the single-site dynamics defined as before, except that $\eta(i)$
is constrained always to be zero for $i\in\{jm:j=1,2,\ldots,\lfloor n/m\rfloor\}$.
Let $\mu^{(m)}$ denote its stationary distribution.  Clearly $\mu^{(m)}$
is equivalent to $\mu(\cdot \tc A_m)$, where $A_m$ is the event that
$\eta(i)=0$ at the above positions~$i$; moreover, $\mu^{(m)}$ is a product
of $\lfloor n/m\rfloor + 1$ unconditioned SOS Gibbs distributions each
on an interval of length (at most)~$m-1$.  

The idea in the proof is to control the evolution of the contour by coupling
it with the sequence of dynamics $\mssm{m}$ for $m=2,4,8,\ldots$, so that 
the number of positions with height fixed to zero is successively halved.
At each stage in the sequence, we will allow $\mssm{m}$ to reach its
equilibrium distribution~$\mu^{(m)}$.
Initially, in the minimum configuration, all heights are zero; ultimately we
will reach equilibrium with no heights fixed to zero, which is our desired
SOS equilibrium distribution~$\mu$.

The main ingredient in our proof is the following lemma, which says that
if we start in the equilibrium distribution $\mu^{(m)}$ (with every $m$th
height fixed to zero), then after $\Otilde(n^3)$ steps of $\mssm{2m}$ we will
reach the equilibrium distribution $\mu^{(2m)}$ (with every $(2m\hskip-0.01in)$th height
fixed to zero).

\begin{lemma}\label{lem:minmain}
Let $\nu_t$ denote the distribution at time~$t$ of $\mssm{2m}$ started
in the distribution $\mu^{(m)}=\mu^{(2m)}(\cdot \tc A_m)$.  Then for 
some time $s_n=\Otilde(n^3)$ we have $$
    \Vert \nu_{s_n} - \mu^{(2m)} \Vert = o (1/n).  $$
\end{lemma}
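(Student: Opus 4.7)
The plan is to exploit the product structure of $\mssm{2m}$. Since the heights at positions $2jm$ for $j=1,2,\ldots$ are frozen at zero, the dynamics decouples into independent dynamics on the blocks $I_j := [2(j-1)m+1, 2jm-1]$ of width $w := 2m-1$, each with zero boundary conditions. Correspondingly, $\mu^{(2m)}$ is a product of block equilibrium measures $\mu^{\mathrm{block}}_j$, and $\mu^{(m)} = \mu^{(2m)}(\cdot\tc A_m)$ restricts on block $I_j$ to $\mu^{\mathrm{block}}_j$ conditioned on the decreasing event $A^{\mathrm{block}}_j := \{\eta(c_j) = 0\}$, where $c_j$ denotes the midpoint of $I_j$.

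First I would estimate $\mu^{\mathrm{block}}_j(A^{\mathrm{block}}_j)$ from below. Since $\mu^{\mathrm{block}}_j$ coincides with the distribution of a random walk bridge with symmetric, exponentially-decaying increments conditioned to stay in $[0,n]$, a routine local central limit argument (or direct Gibbs bookkeeping analogous to Lemma~\ref{Gaussian bound} of the appendix) gives
\begin{equation*}
   \mu^{\mathrm{block}}_j(A^{\mathrm{block}}_j) \;\ge\; \frac{c}{\sqrt{m}},
\end{equation*}
so that $D^{\mathrm{block}} := \lceil \log(1/\mu^{\mathrm{block}}_j(A^{\mathrm{block}}_j)) \rceil = O(\log m)$.

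Next I would apply Lemma~\ref{first key lemma} separately on each block $I_j$, viewing $\mssm{2m}$ restricted to $I_j$ as the standard single-site SOS dynamics on a system of width $w=2m-1$ with zero boundary conditions. Because $A^{\mathrm{block}}_j$ is decreasing, the ratio $\mu^{\mathrm{block}}_j(\cdot\tc A^{\mathrm{block}}_j)/\mu^{\mathrm{block}}_j$ is decreasing, so the symmetric form of the lemma (noted in the remark following it) applies. With $D^{\mathrm{block}} = O(\log m)$, the lemma yields a time $t^{\mathrm{block}} = 2w^3 (D^{\mathrm{block}})^2 \log^8 n = \Otilde(m^3)$ after which the variation distance to $\mu^{\mathrm{block}}_j$ is $o(1/n^b)$ for any fixed $b$.

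Finally I would lift this to the full dynamics by a Chernoff-type counting argument. In $s_n := Cn m^2 \log^c n = \Otilde(nm^2)$ steps of $\mssm{2m}$, each block $I_j$ is updated $\ge t^{\mathrm{block}}$ times except on an event of probability $e^{-\Omega(\log^c n)}$. Conditioning on this high-probability event and using that the blocks evolve independently, the product measure has
\begin{equation*}
   \Vert \nu_{s_n} - \mu^{(2m)} \Vert \;\le\; \sum_{j=1}^{\lfloor n/(2m)\rfloor} \Vert \nu_{s_n}\!\!\restriction_{I_j} - \mu^{\mathrm{block}}_j \Vert
   \;\le\; \frac{n}{m}\cdot o(1/n^b) \;=\; o(1/n),
\end{equation*}
on choosing $b \ge 2$. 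Since $m \le n$, $s_n = \Otilde(nm^2) \le \Otilde(n^3)$, proving the lemma.

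The main technical point is the lower bound on $\mu^{\mathrm{block}}_j(A^{\mathrm{block}}_j)$; however the argument only requires $\mu^{\mathrm{block}}_j(A^{\mathrm{block}}_j) \ge e^{-\polylog(m)}$, so essentially any non-trivial bound obtained by integrating out the SOS partition function in the appendix will do. A secondary concern is that Lemma~\ref{first key lemma} is stated for the width equal to the height range $n$, whereas here width is $w$ and height range is $n$; but inspection of its proof shows the $\log^{O(1)} n$ factors are harmless and the cubic factor depends on width alone, so the $\Otilde(m^3)$ block mixing time is unaffected.
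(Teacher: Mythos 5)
Your proof is correct and follows essentially the same route as the paper's: decompose $\mssm{2m}$ into independent block dynamics on intervals of width $2m-1$, lower-bound the probability of the midpoint-equals-zero event (the paper uses the weaker but sufficient $\mutilde(\Atilde)\ge\frac1c m^{-c}$, citing random-walk estimates), apply Lemma~\ref{first key lemma} within each block to get $\Otilde(m^3)$ block mixing (with $\log n$ rather than $\log m$ so the per-block error is $o(1/n^2)$), and lift to the full chain by a Chernoff bound on the number of updates per block together with the product structure of $\mu^{(2m)}$, yielding $s_n=\Otilde(nm^2)=\Otilde(n^3)$. Your closing observation that the cubic factor in Lemma~\ref{first key lemma} comes from the \emph{width} and not the height range is a correct and useful clarification of the paper's passing remark that "a slight modification" of that lemma applies.
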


\begin{proof}
Note that we can view $\mssm{2m}$ as a collection of
$r=\lfloor n/(2m)\rfloor$ independent standard dynamics on
intervals of length $2m-1$, with zero boundary conditions at
positions $2mj, 2m(j+1)$.  Let us focus on the dynamics restricted
to one such interval~$I_j$.  
Let $\mutilde$ denote the stationary distribution within~$I_j$,
and $\Atilde$ the event $A_m$ restricted to~$I_j$
(i.e., $\Atilde$ is just the event $\eta(2mj+m)=0$).
Clearly $\Atilde$ is a decreasing event, and standard random
walk arguments~\cite{Spitzer} imply that $\mutilde(\Atilde)\ge \frac{1}{c} m^{-c}$
for a constant $c>0$.  Hence a slight modification of
Lemma~\ref{first key lemma} applied to the dynamics within~$I_j$
implies that, after $t=O(m^3\log^{10} n)$ steps of this dynamics,
the variation distance from~$\mutilde$ is $o(1/n^2)$.
(The presence of $\log n$ rather than $\log m$ here is to ensure
a variation distance that depends on~$n$.)

Returning now to the full dynamics $\mssm{m}$, suppose
we execute sufficiently many steps~$T$ that at least the above
number~$t$ updates
are performed within each interval~$I_j$ for $j=1,2,\ldots,r$.
Since $\mu^{(2m)}$ is a product distribution, this will ensure
that the variation distance $\Vert \nu_T - \mu^{(2m)}\Vert$ is
$o(r/n^2) = o(1/n)$.  But by a Chernoff bound it suffices to
take $T=2tr = \Otilde(nm^2) = \Otilde(n^3)$ in order to ensure
the above condition with probability $1-\exp(-\Omega(\log^{10} n))
=1-o(1/n)$.  Taking $s_n=T$ completes the proof.
\end{proof}

An iterative application of the above lemma now proves the main
result of this subsection, which is the analog of Theorem~\ref{n mix}
starting from the minimal configuration.
\begin{theorem}\label{thm:min}
Let $\nu_t$ be the distribution at time $t$ of the single-site dynamics started
from the minimal configuration. Then for some time $t_n= \tilde O(n^{3})$ we have
$$
||\nu_{t_n}-\mu|| = o(1).
$$   
\end{theorem}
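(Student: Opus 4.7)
The plan is to mirror the structure of Section~\ref{from n to equilibrium}: iterate Lemma~\ref{lem:minmain} $O(\log n)$ times, passing through the sequence of distributions $\mu^{(1)}, \mu^{(2)}, \mu^{(4)}, \ldots, \mu^{(m_K)} = \mu$ with $m_k := 2^k$ and $K := \lceil \log_2(n+1)\rceil$. The minimal configuration is precisely (the deterministic point-mass representing) $\mu^{(1)}$, and once $m_K > n$ there are no forced-zero positions left, so $\mssm{m_K}$ collapses to $\mss$ and $\mu^{(m_K)} = \mu$. Thus the chain of auxiliary distributions interpolates from the starting configuration to the target in only $O(\log n)$ stages, each of length $s_n = \tilde O(n^3)$.

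To convert this into a statement about the true dynamics $\mss$, I would invoke the censoring inequality (Lemma~\ref{lem:censoring}) with the following pre-specified, time-varying schedule: during the $k$-th epoch $t \in ((k-1)s_n, k s_n]$, censor every update at a position of the form $j m_k$. The effective dynamics in that epoch is then exactly $\mssm{m_k}$. Since $\nu_0 = \delta_{\bot}$ satisfies $\nu_0/\mu$ decreasing, the symmetric form of Lemma~\ref{lem:censoring} (see the remark following its statement) gives $\|\nu_t - \mu\| \le \|\tilde\nu_t - \mu\|$ at every time~$t$, where $\tilde\nu_t$ is the censored distribution. It therefore suffices to show $\|\tilde\nu_{K s_n} - \mu\| = o(1)$.

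For the censored process I would argue inductively that $\|\tilde\nu_{k s_n} - \mu^{(m_k)}\| \le \epsilon_k$ with $\epsilon_0 = 0$ and $\epsilon_{k+1} \le \epsilon_k + o(1/n)$. The induction step couples the actual start of epoch $k+1$ (distribution $\tilde\nu_{k s_n}$, by hypothesis $\epsilon_k$-close to $\mu^{(m_k)}$) with a companion $\mssm{m_{k+1}}$ chain initialized exactly at $\mu^{(m_k)} = \mu^{(m_{k+1})}(\cdot \mid A_{m_k})$. Markov-chain contractivity in total variation keeps the two coupled distributions at most $\epsilon_k$ apart throughout the epoch, while Lemma~\ref{lem:minmain} brings the companion within $o(1/n)$ of $\mu^{(m_{k+1})}$; the triangle inequality then closes the induction. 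Summing over $K = O(\log n)$ epochs yields $\epsilon_K = O((\log n)/n) = o(1)$. Combined with the censoring bound and the identity $\mu^{(m_K)} = \mu$, this gives the theorem with $t_n = K s_n = \tilde O(n^3)$.

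The main obstacle I anticipate is verifying that Lemma~\ref{lem:minmain} applies cleanly at the boundary values of $m_k$, namely the base case $m=1$ (where the intervals $I_j$ have length $1$ and one must double-check the bound $\tilde\mu(\tilde A) \gtrsim m^{-c}$ and the ``$O(m^3 \log^{10} n)$'' mixing-within-interval estimate) and the final stage where $2m$ may exceed $n$ so that only a single interval survives and the Chernoff-scheduling step degenerates. Both should go through verbatim, so the substantive content needed is already in Lemma~\ref{lem:minmain} together with the monotone propagation of variation distance along the $O(\log n)$ censored epochs.
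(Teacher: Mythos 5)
Your proposal mirrors the paper's proof closely: both iterate through the telescoping sequence of constrained equilibria $\mu^{(1)},\mu^{(2)},\mu^{(4)},\dots$ over $O(\log n)$ stages of length $s_n$ each, invoking Lemma~\ref{lem:minmain} at each stage. Your explicit treatment of the induction step---coupling with a companion chain started exactly at $\mu^{(m_k)}$ and using contractivity of total variation under the Markov kernel to propagate the error $\epsilon_k$---is a point the paper glosses over with ``applying Lemma~\ref{lem:minmain} iteratively,'' so you are actually somewhat more careful there.

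The one place where your argument diverges, and is slightly off as written, is the step relating the true dynamics $\nu_t$ to the staged (constrained) process $\tilde\nu_t$. You invoke Lemma~\ref{lem:censoring} to conclude $\|\nu_t-\mu\|\le\|\tilde\nu_t-\mu\|$ directly, but the censoring inequality is stated for a fixed update sequence $\veci$ with an a~priori chosen subsequence, and your censoring rule (``drop step $s$ iff $i_s\in\{jm_k\}$'') is data-dependent: the indices kept depend on $\veci$. Applying the lemma pointwise in $\veci$ and averaging yields $\|\nu_t-\mu\|\le\mathrm{Av}_{\veci}\|\nu_{\vecw(\veci)}-\mu\|$, which by convexity is an \emph{upper} bound on $\|\tilde\nu_t-\mu\|=\|\mathrm{Av}_{\veci}\nu_{\vecw(\veci)}-\mu\|$, not a lower one, so the desired inequality does not fall out of the lemma plus averaging alone. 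The paper sidesteps this entirely by a stochastic-ordering argument: the constrained chain and the unconstrained chain can be monotonely coupled so that $\tilde\nu_t\preceq\nu_t\preceq\mu$; and since $\nu_0/\mu$ is decreasing (and this is preserved along the dynamics, again by the Peres--Winkler machinery), the total variation $\|\nu_t-\mu\|$ is attained on a decreasing set $A^*$, whence $\|\nu_t-\mu\|=\nu_t(A^*)-\mu(A^*)\le\tilde\nu_t(A^*)-\mu(A^*)\le\|\tilde\nu_t-\mu\|$. Your claimed inequality is thus true, but the clean justification is monotone ordering rather than a direct appeal to the censoring lemma; either fix your citation to this ordering argument, or explicitly carry the bound in the form $\mathrm{Av}_{\veci}\|\nu_{\vecw(\veci)}-\mu\|$ as the paper does in the proof of Lemma~\ref{first key lemma}.
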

\begin{proof}
Let $\tilde\nu_t$ be the distribution at time~$t$ of the following dynamics,
starting from the minimal configuration.  For the first $s_n$ steps (where
$s_n=\Otilde(n^3)$ is as defined in Lemma~\ref{lem:minmain}), run the
dynamics $\mssm{2}$; for the next $s_n$ steps run the dynamics $\mssm{4}$;
and so on (i.e., run $s_n$ steps of each dynamics $\mssm{2^j}$ for
$j=1,2,\ldots$).  Note that the distribution of the initial configuration is
exactly $\mu^{(1)}$.  Thus by Lemma~\ref{lem:minmain} we have
$\Vert\tilde\nu_{s_n} - \mu^{(2)}\Vert = o(1/n)$.  Similarly, applying 
Lemma~\ref{lem:minmain} iteratively implies that $$
      \Vert\tilde\nu_{js_n} - \mu^{(2^j)}\Vert = o(j/n) $$
for $j=1,2,\ldots$.   Since $\mu^{(m)}=\mu$ for $m>n$, we may set
$j=\lceil\log n\rceil$ and $t_n:=\lceil\log n\rceil s_n=\Otilde(n^3)$ to conclude 
\begin{equation}\label{eq:minend}
      \Vert\tilde\nu_{t_n} - \mu\Vert = o((\log n)/n) = o(1).
\end{equation}
Finally, note that by monotonicity we have $\tilde\nu_t\preceq\nu_t\preceq\mu$
for all~$t$.  Hence~\eqref{eq:minend} implies $\Vert\nu_{t_n}-\mu\Vert =o(1)$,
as required.
\end{proof}

\subsection{Proof of Theorem~\ref{thm:main2}}
\label{proof of main thm}
The proof of Theorem~\ref{thm:main2} is now easily deduced from
Theorems~\ref{n mix} and~\ref{thm:min}.  Let $\numax_t$ and $\numin_t$
denote the distributions of the single-site dynamics at time~$t$ starting in 
the maximal and minimal configurations respectively.  
By Proposition~\ref{prop:cfp},
to prove the theorem it suffices to show that for some time $t=\Otilde(n^{3.5})$
we have $\Vert \numax_t - \numin_t\Vert = o(1)$.  But 
Theorem~\ref{n mix} shows that, for such a~$t$, 
$\Vert\numax_t-\mu\Vert = o(1/n)$,
and Theorem~\ref{thm:min} that $\Vert\numin_t-\mu\Vert = o(1/n)$.
Hence $\Vert \numax_t - \numin_t\Vert = o(1)$, which completes the proof.

\begin{remark}
It should be clear that our result generalizes to the SOS dynamics on a region
$[1,n]\times[0,h]$, in which the allowed height set is $[0,h]$.  The mixing time
for  the single-site dynamics is then $\Otilde(\max\{n^3,n^{2.5}h\})$.  The only
difference from the $h=n$ case is in the analysis of the maximal contour:
by an argument analogous to that in Theorem~\ref{root(n) mix},
we achieve a height reduction of $\sqrt{n}$ in $\Otilde(n^3)$ steps,
yielding the above bound.
\end{remark}

\appendix
\section{Proofs of Lemma \ref{lem:newht} and Corollary~\ref{cor:newht}}
\renewcommand{\theequation}{A.\arabic{equation}}
\begin{proof}[Proof of Lemma~\ref{lem:newht}]
Abusing notation, we write $\eta$ for $\eta_{t+1}(i)$
and abbreviate $\E[\;\cdot\mid a,b]$
and $\Pr[\;\cdot\mid a,b]$ to $\E[\cdot]$ and $\Pr[\cdot]$.  We also introduce
the notation $S_m$ for the sum $\sum_{j=1}^m e^{-2\beta j} = S_\infty(1-e^{-2\beta m})$, 
where $S_\infty := e^{-2\beta}/(1-e^{-2\beta})$, and $S_0=0$.  
Note that the normalizing factor in~(\ref{eq:trans}) can then be written as
\begin{eqnarray*}
   Z &=& \sum_{j=0}^{a-1} e^{-\beta(b-a)-2\beta(a-j)} + \sum_{j=a}^b e^{-\beta(b-a)} 
                  + \sum_{j=b+1}^{n} e^{-\beta(b-a)-2\beta(j-b)} \\
       &=& e^{-\beta(b-a)}(S_a + (b - a + 1) + S_{n-b}).
\end{eqnarray*}

Our goal is to evaluate $\E[\eta_{t+1}(i) \mid a,b] = \E[\eta]$, which we may write as $$
   \E[\eta] = \Pr[\eta\le a+b]\,\E[\eta\mid\eta\le a+b] + \Pr[\eta>a+b]\,\E[\eta\mid\eta>a+b].  $$
Since the conditional distribution $\Pr[\cdot \tc \eta \le a+b]$ is symmetric w.r.t. to the point $\frac{a+b}{2}$ then $\E[\eta\mid\eta\le a+b] = (a+b)/2$, while 
\begin{eqnarray*}
   \E[\eta\mid\eta>a+b] &=& \Biggl(\sum_{j=a+b+1}^n je^{-2\beta j +\beta(a+b)}\Biggr) \Bigl/
                                                 \Biggl(\sum_{j=a+b+1}^n e^{-2\beta j +\beta(a+b)}\Biggr) \\
                                        &=& \Biggl(\sum_{k=1}^{n-(a+b)} (a+b+k)e^{-2\beta k}\Biggr) \Bigl/ 
                                                 \Biggl(\sum_{k=1}^{n-(a+b)} e^{-2\beta k}\Biggr) \\
                                        &=&  a+b+T_{n-(a+b)},
\end{eqnarray*}
where $T_m := (\sum_{j=1}^m je^{-2\beta j}) /S_m = 1/(1-e^{-2\beta}) - me^{-2\beta m}/(1-e^{-2\beta m})$.  
Therefore, we have $$
    \E[\eta] = \frac{a+b}{2} + \varepsilon(a,b)  $$
where
\begin{equation}\label{eq:eps}
    \varepsilon(a,b) = \Pr[\eta>a+b] ((a+b)/2 + T_{n-(a+b)}).
\end{equation}
Since plainly $\varepsilon(a,b)\ge 0$, this gives the first part of the lemma.

To prove the second part, we claim it suffices to show that $\varepsilon(a,b)$ is a
decreasing function of $a$ and of~$b$ (subject to $a\le b$ and $a+b\le n$).  
To see this, consider any pair $(c,d)$ such that $c\le\min\{a,d\}\le\max\{a,d\}\le b$.  
(Note that this also implies $c+d\le n$.)  If $a\le d$ then using monotonicity in~$b$
and then in~$a$
we have $\varepsilon(a,b)\le\varepsilon(a,d)\le\varepsilon(c,d)$; if on the other
hand $d\le a$ then we have similarly $\varepsilon(a,b)\le\varepsilon(c,b)\le\varepsilon(c,d)$.

To show that $\varepsilon$ is decreasing with~$b$, note first that $$
   \Pr[\eta>a+b] = \frac{e^{-2\beta a}S_{n-(a+b)}}{S_a + (b-a+1) + S_{n-b}}.   $$
Plugging this into~(\ref{eq:eps}) and differentiating w.r.t.~$b$ gives
\begin{equation}\label{eq:partb}
   \partialb \varepsilon(a,b) = ({\textstyle\half} - T'_{n-(a+b)})\Pr[\eta>a+b]
                            +((a+b)/2 + T_{n-(a+b)})\partialb\Pr[\eta>a+b],
\end{equation}                            
where $\partialb\Pr[\eta>a+b]$ is given by the expression $$
   e^{-2\beta a}\frac{-S'_{n-(a+b)}(S_a + (b-a+1) + S_{n-b}) - (1-S'_{n-b})S_{n-(a+b)}} {[S_a + (b-a+1) + S_{n-b}]^2}.  $$
(Here we are viewing $S_m$ and $T_m$ as continuous functions of the parameter~$m$.)
Note that $S'$ is non-negative, and that
\begin{equation}\label{eq:sprime}
    S'_{n-(a+b)} S_{n-b} \ge S'_{n-b} S_{n-(a+b)}
\end{equation}    
because
\begin{eqnarray*}
    S'_{n-(a+b)} S_{n-b} &=& 2\beta S_\infty^2 e^{-2\beta(n-(a+b))}(1-e^{-2\beta(n-b)});\\
    S'_{n-b} S_{n-(a+b)} &=& 2\beta S_\infty^2 e^{-2\beta(n-b)}(1-e^{-2\beta(n-(a+b))}).
\end{eqnarray*}
Therefore, $$
   \partialb\Pr[\eta>a+b] \le -\frac{e^{-2\beta a}S_{n-(a+b)}}{[S_a + (b-a+1) + S_{n-b}]^2}
                                           =  -\frac{\Pr[\eta>a+b]}{S_a + (b-a+1) + S_{n-b}}.  $$
Plugging this into~(\ref{eq:partb}), and noting that $T'$ is also non-negative, we get
\begin{equation}\label{eq:partb2}
   \partialb\varepsilon(a,b) \le \half\Pr[\eta>a+b]\left( 1 - \frac{a+b+2T_{n-(a+b)}}{S_a + (b-a+1) + S_{n-b}} \right). 
\end{equation}
We will therefore be done if we can show
\begin{equation*}
   a+b+2T_{n-(a+b)} \ge S_a+(b-a+1)+S_{n-b}.
\end{equation*}
We do this in two steps, as follows:
\begin{equation}\label{eq:twosteps}
   a+b+2T_{n-(a+b)} \ge a+b+1+S_{n-(a+b)} \ge S_a+(b-a+1)+S_{n-b}.
\end{equation}
Note that we may assume $n-(a+b)\ge 1$ since if $a+b=n$ then $\varepsilon(a,b)=0$.
\par\medskip\noindent
{\it Proof of first inequality in~\eqref{eq:twosteps}:}\ \ Setting $m=n-(a+b)$, we need to show that
\begin{equation}\label{eq:pbg1}
   S_m + 1 \le 2T_m \quad\hbox{\rm for $1\le m\le n$.} 
\end{equation}
Recall that  $S_m = S_\infty(1-e^{-2\beta m})$, where $S_\infty=e^{-2\beta}/(1-e^{-2\beta})$,
and that $T_m = \sum_{j=1}^m je^{-2\beta j}/S_m$.  A simple calculation shows that $$
   T_m = \frac{1}{1-e^{-2\beta}} - \frac{me^{-2\beta m}}{1-e^{-2\beta m}}.  $$
Thus, writing $x=e^{-2\beta}$, the desired inequality in~\eqref{eq:pbg1} becomes $$
    \frac{x}{1-x}(1-x^m) + 1 \le \frac{2}{1-x} - \frac{2mx^m}{1-x^m}\quad\hbox{\rm for $0<x<1$.}  $$
Rearranging yields the equivalent expression
\begin{equation}\label{eq:pbg2}
    1-(2m+1)x^m + (2m+1)x^{m+1} - x^{2m+1} \ge 0.
\end{equation}
Differentiating the left-hand side w.r.t.~$x$ gives 
\begin{eqnarray}
    -m(2m+1)x^{m-1} + (m+1)(2m+1)x^m - (2m+1)x^{2m}\nonumber\\ 
          = -(2m+1)x^{m-1}[m-(m+1)x+x^{m+1}].\label{eq:pbg3}
\end{eqnarray}        
But the function $f(x):=m-(m+1)x+x^{m+1}$ is zero at $x=1$ and monotonically decreasing
for $0\le x<1$, so is always non-negative on this interval.  Therefore, the derivative
in~\eqref{eq:pbg3} is non-positive.  Thus the left-hand side of~\eqref{eq:pbg2} is a 
non-increasing function of~$x$, and is zero at $x=1$; hence the inequality holds,
and \eqref{eq:pbg1} is proved.
\par\medskip\noindent
{\it Proof of second inequality in~\eqref{eq:twosteps}:}\ \ We need to show
\begin{equation}\label{eq:pbg4}
   S_a+S_{n-b}-S_{n-(a+b)} \le 2a.
\end{equation}   
The left-hand side here is equal to $$
   S_\infty\bigl(1-e^{-2\beta a} + e^{-2\beta(n-b)}(e^{2\beta a} - 1)\bigr).  $$
For any fixed $a\ge 0$, this expression is maximized by taking $n-b$ as small as
possible, which means $n-b=a+1$ (since we are assuming $a+b\le n-1$).  Thus
the expression is bounded above by $$
   S_\infty\bigl(1-e^{-2\beta a} + e^{-2\beta a}(e^{2\beta a} - 1)\bigr) = 2S_\infty(1-e^{-2\beta a}).  $$
Plugging this back into~\eqref{eq:pbg4} means we need to prove 
$2S_\infty(1-e^{-2\beta a}) \le 2a$, or equivalently, $$
   \frac{e^{-2\beta}}{1-e^{-2\beta}} (1-e^{-2\beta a}) \le a.  $$
Writing $x=e^{-2\beta}$ and rearranging gives the equivalent inequality $$
   a - (a+1)x + x^{a+1} \ge 0\quad\hbox{\rm for $0<x<1$.}  $$
But the function on the LHS here is precisely the function $f(x)$ in the previous
part of the proof (with $m$ replaced by~$a$), so we know that it is non-negative
throughout the desired interval.  This completes the proof of~\eqref{eq:pbg4}, and
also the proof that $\partialb\varepsilon(a,b)\le 0$.

We now carry out a similar computation for $\partiala\varepsilon(a,b)$ to get
\begin{equation}\label{eq:parta}
   \partiala \varepsilon(a,b) = ({\textstyle\half} - T'_{n-(a+b)})\Pr[\eta>a+b]
                            +((a+b)/2 + T_{n-(a+b)})\partiala\Pr[\eta>a+b],
\end{equation}                            
where 
\begin{eqnarray*}
   &&\hskip-0.5in\partiala\Pr[\eta>a+b] \,=\, -2\beta\Pr[\eta>a+b] \,+ \\        
   &&e^{-2\beta a}\frac{-S'_{n-(a+b)}(S_a + (b-a+1) + S_{n-b}) - (S'_a-1)S_{n-(a+b)}} {[S_a + (b-a+1) + S_{n-b}]^2}.
\end{eqnarray*}
Using~\eqref{eq:sprime} and the fact that $S'$ is non-negative, we get
\begin{equation}\label{eq:parta2}
   \partiala\Pr[\eta>a+b] \le 
    \Pr[\eta>a+b]\left(-2\beta +\frac{1-S'_a-S'_{n-b}}{S_a+(b-a+1)+S_{n-b}}\right).
\end{equation}    

We claim that the term in parentheses here is bounded above by $-\beta$.  To see this,
note first that $S'_m = 2\beta(S_\infty-S_m)$ for all~$m>0$.  Thus the term in parentheses
can be written $$
   \frac{-4\beta S_\infty -2\beta (b-a+1) +1}{S_a + S_{n-b} +(b-a+1)}.  $$
This is bounded by $-\beta$ provided $$
   4\beta S_\infty +2\beta(b-a+1) -\beta(S_a+S_{n-b}+(b-a+1)) \ge 1.  $$
Using the facts that $S_\infty\ge S_a,S_b$, and that $b-a+1\ge 1$, is is sufficient to show $$
   \beta(2S_\infty + 1) \ge 0.  $$
Plugging in $S_\infty=e^{-2\beta}/(1-e^{-2\beta})$ and rearranging, this is equivalent to $$
   f(\beta):= (\beta-1)e^{2\beta} + \beta + 1\ge 0.  $$
The function~$f$ is~0 at $\beta=0$, and its first and second derivatives are
$e^{2\beta}(2\beta-1) + 1$ and $4\beta e^{2\beta}$ respectively.  Since the first derivative is~0
at $\beta=0$, and the second derivative is non-negative for all $\beta\ge 0$, $f$ must
indeed be $\ge 0$ throughout this range.  
   
Replacing the parenthesis in~\eqref{eq:parta2} by the upper bound~$-\beta$, and plugging
this into~\eqref{eq:parta}, we get 
\begin{eqnarray*}
   \partiala \varepsilon(a,b) &\le& \half\Pr[\eta>a+b]\left(1-2T'_{n-(a+b)}-\beta(a+b+2T_{n-(a+b)})\right)\\
            &\le&\half\Pr[\eta>a+b]\left(1-2T'_m-2\beta T_m\right),
\end{eqnarray*}
where $m=n-(a+b)$.   We will thus be done if we can show $$
    T'_m + \beta T_m \ge \half.  $$
Substituting for $T_m$ and $T'_m$, writing $y=2\beta m$ and rearranging, we see that 
this is equivalent to $$
 \frac{e^{-y}}{2(1-e^{-y})^2}\left(y-2+(2+y)e^{-y})\right) + \frac{\beta}{1-e^{-2\beta}} \ge \frac{1}{2}.$$
But since $\beta/(1-e^{-2\beta})\ge1/2$ for all $\beta>0$, it is sufficient to show that the
function $g(y):=y-2+(2+y)e^{-y}$ is non-negative for all $y\ge 0$.  But $g(y)$ is
zero at $y=0$, and its first and second derivatives are $1-e^{-y}-ye^{-y}$ and $ye^{-y}$
respectively.  Since the first derivative is~0 at $y=0$ and the second derivative is 
non-negative for all $y\ge 0$, $g$ must indeed be non-negative for all $y\ge 0$.

This completes the proof that $\partiala\varepsilon(a,b)\le 0$, and hence also the proof of
the lemma.
\end{proof}  

\begin{proof}[Proof of Corollary~\ref{cor:newht}]
Note first that, by symmetry, if $a+b\ge n$ then we get a complementary statement 
to Lemma~\ref{lem:newht} with $\varepsilon(a,b)$ replaced by $-\varepsilon(b',a')$, 
where $a'=n-a$, $b'=n-b$.  (This corresponds to exchanging the roles of the top and 
bottom barriers.)  We shall use both versions below.

Consider first the case $a+b\le n$ and $c+d\le n$.  
Note that $\eta_t\lessthan\xi_t$ implies that $c\le\min\{a,d\}\le\max\{a,d\}\le b$.
Then we have $$
   \E[\xi_{t+1}(i)\mid a,b] -\E[\eta_{t+1}(i)\mid c,d] = \frac{a+b}{2} - \frac{c+d}{2} + \varepsilon(a,b) - \varepsilon(c,d), $$
and by Lemma~\ref{lem:newht} we can conclude that $\varepsilon(a,b)\le\varepsilon(c,d)$,
as required.

Now consider the case $a+b\ge n$ and $c+d\ge n$.  In this case we can change 
variables to $\eta'(i) = n-\eta(i)$, $\xi'(i)=n-\xi(i)$, $a'=n-b$, $b'=n-a$, $c'=n-d$, $d'=n-c$.
Note that $a'\le\min\{b',c'\}\le\max\{b',c'\}\le d'$.
Then we have
\begin{eqnarray*}
    \E[\xi_{t+1}(i)\mid a,b]  -   \E[\eta_{t+1}(i)\mid c,d] =  \E[\eta'_{t+1}(i)\mid c',d'] - \E[\xi'_{t+1}(i)\mid a',b'] \\
       = \frac{a+b}{2} - \frac{c+d}{2} + \varepsilon(c',d') - \varepsilon(a',b'),
\end{eqnarray*}
and by Lemma~\ref{lem:newht} we can conclude that $\varepsilon(c',d')\le\varepsilon(a',b')$, as
required.

Finally, if $a+b\ge n$ but $c+d\le n$ then we apply the above change of variables only
to $\eta$, $a$ and~$b$ to get
\begin{eqnarray*}
    \E[\xi_{t+1}(i)\mid a,b]  -   \E[\eta_{t+1}(i)\mid c,d] &=&  n - \E[\xi'_{t+1}(i)\mid b',a'] - \E[\eta_{t+1}(i)\mid c,d] \\
       &=& \frac{a+b}{2} - \frac{c+d}{2} - \varepsilon(a',b') - \varepsilon(c,d),
\end{eqnarray*}
and use the fact that both $\varepsilon(a',b')$ and $\varepsilon(c,d)$ are non-negative.
\end{proof}

\section{Proof of Claim \ref{claim:oddeven}}
Let $\tilde\mu$ be any probability distribution
on~$\Omega_n$ or~$\Omega^{(\infty)}_n$, and let $\vecw=\{i_1,\dots, i_{k}\}$ be
a sequence of even positions in which every even position $i\in[1,n]$
appears at least $D^2\log^6 n$ times.
Let $\tilde\mu_{\rm even}$ be the new 
distribution obtained by applying to $\tilde \mu$ 
one parallel update~$E$ of the even positions, and $\tilde\mu_{\vecw}$
the distribution obtained by applying to~$\tilde \mu$ the single-site updates at the 
sequence of positions~$\vecw$.

We claim that
\begin{equation}
  \label{eq:15}
  \|\tilde\mu_{\rm even}- \tilde\mu_{\vecw}\|\le \tilde\mu(B)+\nep{-\Omega(\log^2 n)},
\end{equation}
where the $\Omega(\log^2 n)$ term is independent of~$\tilde\mu$.
An entirely analogous statement will clearly also hold for a parallel update~O
of the odd positions.  This will complete the proof by a simple induction over
the number of epochs~$M$, since it
ensures that every step of $\moe$ and $\meo$ (the constituent chains of~$\mpar$) 
is simulated correctly with error at most $\tilde\mu(B) + \nep{-\Omega(\log^2 n)}$,
where $\tilde\mu$ is the current distribution of the respective chain and clearly
satisfies $\tilde\mu(B)\le \max_s\{\nu_s^{\mathrm{OE}}(B)+\nu_s^{\mathrm{EO}}(B)\}$.

To prove~\eqref{eq:15}, let $\eta\notin B$ be a configuration without large gradients,
which accounts for the term $\tilde\mu(B)$ on the right-hand side.  We need to show that,
for each even position $i\in[1,n]$, applying at least $D^2\log^6 n$ single-site updates
to~$i$ is equivalent to applying one column update at~$i$, except for an error
$e^{-\Omega(\log^2 n)}$.  Recall that a column update at~$i$ replaces the height~$\eta(i)$  
by a random height drawn from the distribution~\eqref{eq:trans}.  Note that this distribution
is uniform on the interval $[a,b]$, where $a=\min\{\eta(i-1),\eta(i+1)\}$
and $b=\max\{\eta(i-1),\eta(i+1)\}$, and decays exponentially outside this interval
(at a rate that depends on~$\beta$).  On the other hand, under the  single-site updates
the height at~$i$ performs a nearest neighbor random walk on~$\nset$ that is
reversible w.r.t.\ this distribution, and of course converges to it.  If we can show that the
random walk starting at~$\eta(i)$ is very close to equilibrium after $D^2\log^6 n$
steps, we will have proved the Claim.

Now we claim that the mixing time of the above random walk, starting at a position
at distance~$\ell$ from the interval $[a.b]$, is $O((b-a)^2+\ell)$.  To see this, note
that the random walk is symmetric on $[a,b]$ and has a uniform drift towards this
interval from everywhere outside it.  The term $O(\ell)$ in the mixing time reflects 
the time to reach $[a,b]$, which is linear because of the drift, and the term $O((b-a)^2)$
is the mixing time starting inside the interval, which is essentially the same as that
of symmetric random walk on $[a,b]$.  This can be verified formally by showing that
the walk started at distance~$\ell$ from $[a,b]$ can be coupled with a stationary
walk so that the two walks coalesce with constant probability in the interval $[a,b]$
after $O((b-a)^2+\ell)$ steps.   But the fact that $\eta\notin B$ implies that
$(b-a)\le 2D\log^2 n$ and $\ell\le D\log^2 n$, so the above coupling time is
$O((D\log^2 n)^2)$.  Hence $D^2\log^6 n$ steps suffice to simulate a column
update with error $\nep{-\Omega(\log^2 n)}$.  This completes the proof 
of~\eqref{eq:15} and hence of the Claim.\qed

\section{Equilibrium bounds}\label{app:equilm}
The lemma below contains bounds on the probabilities of various events under
the equilibrium distribution~$\mu$ that are used extensively in Section~\ref{sec:single}.
We state and prove the lemma for the case where $\mu$ is the equilibrium
distribution of the SOS model with height set~$\nset$ (which we use in
Section~\ref{from n to equilibrium}).  However, essentially the same bounds
hold for the SOS model with height set $[0,n]$ once we observe that the variation 
distance between the two equilibrium distributions is exponentially small in~$n$
(see Remark~\ref{rem:Gaussian bound} below for details).
\begin{lemma}
\label{Gaussian bound}
Let $\mu$ be the equilibrium distribution for the SOS model on $[1,n]$
with height set~$\nset$.  
There exist positive constants $a,c$ such that:
\\[4pt]
(a) for any $h\ge 0$ and $\ell \in [1,n]$ with $h/\ell \le \beta/2$,
$$
\mu(\h(\ell)\ge h)\le n^a \nep{-h^2/c \ell}\,;
$$  
(b) for any $h\ge 0$, $0\le\ell\le\lfloor n/2\rfloor$,
and $A=\{\h\in \Omega_n:\ \h(i)\ge h \ \forall i\in[\ell+1,n-\ell]\}$,
$$
\mu(A) \ge \frac{1}{c\,n^a}\nep{-ch^2/\ell}\,;
$$
(c) if $B=\{\h : |\h(i+1)-\h(i)|\ge d\text{ for some } i\in
[0,n]\}$ then
$$
\mu(B)\le c n^a\nep{-d/c}\,;
$$ 
(d) if $C=\{\forall i:\ \h(i) \le n \}$ then
$$
\mu(C^c)\le cn^{a+1}\nep{-n/c}.
$$
\end{lemma}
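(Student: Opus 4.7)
The plan is to exploit the standard representation of $\mu$ as a random walk bridge. Under the unconditioned law $P_{RW}$, the increments $X_i := \eta(i)-\eta(i-1)$ are i.i.d.\ with $P_{RW}(X=k)\propto e^{-\beta|k|}$ on $\mathbb{Z}$ (mean zero, finite variance, moment generating function finite on $(-\beta,\beta)$), and $\mu$ is the law of the partial sums $S_j$ conditioned on $\{S_{n+1}=0\}\cap\{S_j\ge 0\;\forall j\}$. By a local CLT for $S_{n+1}$ together with a standard ballot/reflection bound for a symmetric bridge to stay non-negative, the probability of this conditioning event is at least $n^{-a_0}$ for a constant $a_0>0$, so
\begin{equation*}
\mu(E) \;\le\; n^{a_0}\,P_{RW}\bigl(E \cap \{S_{n+1}=0\}\bigr)
\end{equation*}
for any event $E$. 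This single mechanism produces the polynomial prefactor $n^a$ in all four upper bounds.

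Parts (a), (c), (d) then reduce to routine random walk estimates. For (a), a Chernoff bound in the Gaussian regime (which is why we need $h/\ell \le \beta/2$, well inside the radius of convergence of the mgf) gives $P_{RW}(S_\ell\ge h)\le e^{-h^2/c\ell}$. For (c), $P_{RW}(|X_i|\ge d) = O(e^{-\beta d})$ combined with a union bound over the $n+1$ edges yields the claim. For (d), I would either apply (a) with $h=n+1$ and union-bound over $\ell$, or use the true linear large deviation rate $P_{RW}(S_\ell \ge n) \le e^{-cn}$ (which holds for all $\ell\le n$ since we are well outside the Gaussian window); either way one loses only a further factor of $n$.

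The only real work is the \emph{lower} bound in (b). Here I apply the spatial Markov property at the positions $\ell$ and $n-\ell+1$:
\begin{equation*}
\mu(A) \;\ge\; \mu\bigl(\eta(\ell)=h,\,\eta(n-\ell+1)=h\bigr)\cdot \mu\bigl(A \mid \eta(\ell)=\eta(n-\ell+1)=h\bigr).
\end{equation*}
For the second factor, the translation $\tilde\eta(i):=\eta(i+\ell)-h$ turns the middle segment into an SOS bridge from $0$ to $0$ on an interval of length $n-2\ell+1$; the original positivity constraint $\eta\ge 0$ relaxes to $\tilde\eta\ge -h$ (strictly weaker), while the event $A$ becomes $\tilde\eta\ge 0$. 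By the classical ballot lower bound, the probability that an SOS bridge of length $\le n$ stays non-negative is at least $n^{-a_2}$, so the conditional factor is also $\ge n^{-a_2}$.

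The main obstacle is therefore the first factor, a two-point lower-bound local CLT for the positive bridge. I would obtain it directly from the partition function decomposition
\begin{equation*}
\mu\bigl(\eta(\ell)=h,\,\eta(n-\ell+1)=h\bigr) = \frac{Z^{[0,\ell]}_{0,h}\,Z^{[\ell,n-\ell+1]}_{h,h}\,Z^{[n-\ell+1,n+1]}_{h,0}}{Z^{[0,n+1]}_{0,0}},
\end{equation*}
bounding each numerator factor from below by exhibiting a near-extremal configuration (a linear ramp from $0$ to $h$ on each outer interval contributes weight $e^{-\beta h}$, and nearby configurations restore a polynomial pre-factor) and bounding the denominator from above using known asymptotics for the unconstrained SOS partition function. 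This yields a lower bound of the form $n^{-a_1}e^{-ch^2/\ell}$ for the pinning probability, which multiplied by the second factor gives (b).
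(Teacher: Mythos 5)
Parts (a), (c), and (d) of your proposal follow the paper's route essentially verbatim: reduce to the unconditioned walk at polynomial cost $n^{a}$ (a lower bound on the probability of the conditioning event $\{S_{n+1}=0\}\cap\{S_j\ge 0\}$), then apply a Chernoff bound for (a), a union bound over the $n+1$ increments for (c), and a large-deviation estimate for (d). Your treatment of the conditional factor in (b) is also correct: since $\{\tilde\eta\ge 0\}\subseteq\{\tilde\eta\ge -h\}$, the conditional probability is at least the unconditional ballot probability $\ge n^{-a_2}$.

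The gap is in the two-point pinning factor of (b), and it is a genuine one. Each constrained segment partition function is a sum over exponentially many paths: in random-walk language, $Z^{[0,\ell]}_{0,h}=Z_\beta^\ell\,\Pr[S_\ell=h,\ S_j\ge 0\ \forall j]$ where $Z_\beta=\sum_{k\in\zset}e^{-\beta|k|}>1$. The weight $e^{-\beta h}$ of a single linear-ramp configuration therefore underestimates $Z^{[0,\ell]}_{0,h}$ by a factor of order $Z_\beta^\ell$, and ``nearby configurations'' can recover at best a polynomial, not that exponential. In your decomposition the $Z_\beta$ powers of the three numerator segments exactly cancel the $Z_\beta^{n+1}$ in the denominator, so plugging in single-path lower bounds for the two outer segments leaves a spurious $Z_\beta^{-2\ell}$ and produces a pinning estimate that is exponentially small in $\ell$ rather than the required $n^{-a}e^{-ch^2/\ell}$. (Sanity check with $h=0$: your estimate gives something exponentially small, whereas $\mu(\eta(\ell)=0,\eta(n-\ell+1)=0)$ is only polynomially small.) What is actually needed is a local large-deviation lower bound $\Pr[S_\ell=h,\ \text{stay nonnegative}]\ge n^{-O(1)}e^{-ch^2/\ell}$, and in the Gaussian regime $h\le\ell$ this is obtained by tilting the increment law to $\Pr^\lambda(Z=z)\propto e^{-\beta|z|+\lambda z}$ with $\lambda\asymp h/\ell$ and then applying a CLT-type estimate under the tilted measure. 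This is precisely the step the paper carries out (in the form of a lower bound on the tail $\Pr[X(\ell)\ge h]$, summing over endpoint heights $h_1,h_2\ge h$ rather than pinning exactly at $h$); the tilting is the mathematical crux of part (b), and it cannot be replaced by counting a single near-extremal path.
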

\begin{remark}\label{rem:Gaussian bound}
It should be clear that the equilibrium distribution with height set $[0,n]$ is nothing other than 
the distribution $\mu$ conditioned on the event~$C$ above. Thus the variation distance 
between these two distributions is no larger than $2\mu(C^c)$, which by part~(d) of the
lemma is exponentially small in~$n$.  In particular the bounds~(a)
and~(c) above hold (possibly with different constants $a',c'$) for height set $[0,n]$. 
The lower bound~(b) also holds for height set $[0,n]$ under the additional restriction
that $h^2/\ell\ll n$ (so that the variation distance between the two distributions is negligible
compared to the r.h.s.\ in~(b)).
\end{remark}
\begin{proof}\ 
\par\smallskip\noindent
(a) Let $Z_0=0$, and let $\{Z_j\}_{j\ge 1}$ be i.i.d.\ geometric random variables
with random signs, so that 
$\bbP[Z_j=z] = \bbP[Z_j=-z]\propto \nep{-\beta z}$.
Define $X(i):=\sum_{j=0}^i Z_i$ to be the symmetric random walk on $\zset$ started at
the origin whose increments are the~$Z_j$.

Then the SOS equilibrium distribution $\mu$ on~$[1,n]$ 
with height set~$\nset$ can be written as
\begin{equation*}
  \mu(\h)=\bbP[\{X(i)\}_{i=0}^{n+1}=\h \tc X(n+1)=0\,;\, X(i)\ge 0\, \forall i\in [1,n]\,].
\end{equation*}
Standard random walk bounds show that 
$$
\bbP[X(n+1)=0,\, X(i)\ge 0\, \forall i\in [1,n]]\ge 1/n^a
$$ 
for some constant~$a$. Therefore, by Markov's inequality applied to the
random variable $\exp(\lambda\sum_i Z_i)$, for any $\lambda \in [0,\beta)$ 
we have
\begin{eqnarray}
\mu(\h(\ell)\ge h)&\le& n^a \bbP[X(\ell)\ge h] \nonumber \\
&\le& n^a \nep{-\lambda h}\,\mathbf E(\nep{\lambda Z_1})^\ell \nonumber\\
&\le& n^a \nep{-\lambda h}\Bigl(1+ \frac 12 \lambda^2 \,\mathbf
E(\nep{\lambda |Z_1|}Z_1^2)\Bigr)^\ell, 
\label{last}
\end{eqnarray}
where we used the inequality $\nep{\lambda x}\le 1 +\lambda x +\frac
12 \lambda^2 x^2 \nep{\max(\lambda x, 0)}$ together with the fact that
$\mathbf E(Z_1)=0$. 

Now choose $\lambda = \delta h/\ell$ with $0\le \delta\le 1$. Since
$\lambda\le \beta/2$ the term $\mathbf E(\nep{\lambda |Z_1|}Z_1^2)$ is
bounded by a constant $c=c(\beta)$ so that
$$
\mu(\h(\ell)\ge h)\le n^a \nep{-(\delta  - \frac c2 \delta^2) h^2/\ell}.
$$
By choosing $\delta$ small enough we complete the proof of part~(a).
\par\bigskip\noindent
(b) With the same notation as in part~(a), and using the FKG inequality,
we may write
\begin{eqnarray*}
 \mu(A)\hskip-0.1in &\ge& \hskip-0.2in \sum_{h_1,h_2\ge h}\hskip-0.1in\bbP[X(i)\ge 0 \, \forall i\notin [\ell+1,n-\ell]\, ;\, X(\ell+1)=h_1, X(n-\ell)=h_2\tc \\[-10pt]
  &&\hskip0.15in\qquad\qquad\qquad\qquad\qquad\qquad\qquad\qquad\qquad X(0)=X(n+1)=0\,]\\
&&\hskip0.1in\times\,\bbP[X(i)\ge h \ \forall i\in [\ell+1,n-\ell]\tc X(\ell+1)=h_1, X(n-\ell)=h_2]\\[8pt]
&\ge&\hskip-0.1in \bbP[X(i)\ge 0 \, \forall i\in [0,\ell] ;\, X(\ell+1)\ge h\tc X(0)=0]^2\\
&&\hskip0.05in\times 
\bbP\bigl[X(i)\ge 0\, \forall i\in [0,n+1-2\ell]\tc X(0)=X(n-\ell)=0\bigr]  \\
&\ge& \hskip-0.05in\frac{1}{n^{3\gamma}}\bbP[X(\ell) \ge h\tc X(0)=0]^2,
\end{eqnarray*}
where in the last step we used standard random walk estimates~\cite{Spitzer}
to get 
$$
\bbP[X(i)\ge 0 \, \forall i\in [0,n]\, \tc X(0)=0]\ge 1/n^\gamma
$$
for some constant $\gamma$. The proof is easily finished by standard
techniques for proving large deviation lower bounds for sums of i.i.d.\
random variables. If $h/\ell >1$ we can write
\begin{eqnarray*}
\bbP[X(\ell) \ge h\tc X(0)=0] &\ge& \bbP[X(\ell-1)>0; Z_\ell\ge h \tc
X(0)=0]\\ 
&\ge& c\nep{-\beta h}\ge c \nep{-\beta h^2/\ell}
\end{eqnarray*}
for a suitable constant $c$. If instead $h/\ell\leq 1$ 
we introduce the tilted distribution
$\bbP^\lambda(Z_1=z)\propto \nep{-\beta |z| +\lambda z}$ with
$\lambda$ such that $\mathbf E^\lambda(Z_1)=h/\ell$. It is easy to
see that $\lambda =O(h/\ell)$. Using
Jensen's inequality we can write
\begin{eqnarray*}
\bbP[X(\ell) \ge h\tc X(0)=0] &\ge& \mathbf E(\nep{\lambda Z_1})^\ell
\mathbf E^\lambda\Bigl(\nep{-\lambda
  \sum_{i=1}^\ell Z_i}\,\chi\bigl(\sum_{i=1}^\ell Z_i \in
[h,2h]\bigr)\Bigr)\\
&\geq& \nep{-2\lambda h}{\bbP}^\lambda\Bigl[\sum_{i=1}^\ell Z_i \in
[h,2h]\Bigr]\ge \nep{-c h^2/\ell}
\end{eqnarray*}
for another suitable constant~$c$.
\par\bigskip\noindent
(c) It is enough to write $$
  \mu(B)\le n\max_i\frac{\bbP(|Z_i|\ge d)}{\bbP[X(i)\ge 0 \, \forall i\in [1,n]\, ;\, X(n)=0\tc X(0)=0]} 
\le \nep{-\beta d}n^{\gamma +\half},   $$
again by standard random walk bounds.
\par\bigskip\noindent
(d) After a union bound over the index $i\in [1,n]$ it is enough to repeat the arguments used
in the proof of part~(a) up to~\eqref{last}, and then choose the free parameter~$\lambda$
small enough (independent of $n$).
\end{proof}


\bigskip\bigskip\noindent
Authors' Addresses
\par\medskip\noindent


\begin{thebibliography}{99}
   
\bibitem{Aldous}
{\sc D. Aldous}.
Random walks on finite groups and rapidly mixing Markov chains.
{\it S\'eminaire de Probabilit\'es XVII},
Springer Lecture Notes in Mathematics {\bf 986}, 1981/82, pp.~243--297.

\bibitem{BKMP} 
{\sc N.~Berger, C.~Kenyon, E.~Mossel} and {\sc Y.~Peres}.
Glauber dynamics on trees and hyperbolic graphs.
{\it Probability Theory and Related Fields\/}~{\bf 131} (2005),
pp.~311--340.

\bibitem{Bianchi} 
{\sc A.~Bianchi}. 
Glauber dynamics on non-amenable graphs: boundary conditions and mixing time.
{\it Electronic Journal of Probability}~{\bf 13} (2008), pp.~1980--2012.

\bibitem{BM}
{\sc T.~Bodineau} and {\sc F.~Martinelli}.
Some new results on the kinetic Ising model in a pure phase.
{\it Journal of Statistical Physics\/}~{\bf 109}~(1), 2002.

\bibitem{CMT}
{\sc P.~Caputo, F.~Martinelli} and {\sc F. L.~Toninelli}.
On the approach to equilibrium for a polymer with adsorption and repulsion.
{\it Electronic Journal of Probability\/}~{\bf 13} (2008).

\bibitem{Cesi} 
{\sc F.~Cesi}.
Quasi-factorization of the entropy and logarithmic Sobolev inequalities for
Gibbs random fields.
{\it Probability Theory and Related Fields\/}~{\bf 120} (2001), 
pp.~569--584.

\bibitem{Chayes}
{\sc L.~Chayes, R.H.~Schonmann} and {\sc G.~Swindle}.
Lifshitz Law for the Volume of a 2-Dimensional Droplet at Zero Temperature.
 {\it Journal of Statistical Physics\/}~{\bf 79} (1995), pp.~821--831.

\bibitem{DS-C}
{\sc P.~Diaconis} and {\sc L.~Saloff-Coste}.
Comparison theorems for reversible Markov chains.
{\it Annals of Applied Probability\/~\bf3} (1993), pp.~696--730.

\bibitem{DKS}
{\sc R.L.~Dobrushin, R.~Kotecky} and {\sc S.~Shlosman}.
{\it  The Wulff construction: a global shape from local interactions}.
AMS, Translations Of Mathematical Monographs~{\bf 104}, Providence, 1992.

\bibitem{Dunlop}
{\sc F.~Dunlop, P.~Ferrari} and {\sc L.~Fontes}.
A dynamic one-dimensional interface interacting with a wall.
{\it Journal of Statistical Physics\/}~{\bf 107} (2002), pp.~705--727.

\bibitem{Funaki}
{\sc T.~Funaki}.
Stochastic interface models.
{\it Lectures on Probability Theory and Statistics (Saint-Flour, 2005)},
Lecture notes in Mathematics~{\bf 1869}, pp.~103--294, 
Springer, Berlin, 2005.

\bibitem{Giacomin}
{\sc G.~Giacomin}.
{\it Random polymer models}.
Imperical College Press, World Scientific, London, 2007.

\bibitem{HF}
{\sc D.~Huse} and {\sc D.~Fisher}.
Dynamics of droplet fluctuations in pure and random Ising systems.
{\it Physics Review~B\/}~{\bf 35}~(13), 1987.

\bibitem{Peresmeanfield}
{\sc D.~Levin, M.~Luczak} and {\sc Y.~Peres}. 
Glauber dynamics for the mean-field Ising model: Cut-off, critical power law, and metastability.
{\it Probability Theory and Related Fields}, in press, 2009.

\bibitem{LRS}
{\sc M.~Luby, D.~Randall} and {\sc A.~Sinclair}.
Markov chain algorithms for planar lattice structures.
{\it SIAM Journal on Computing\/}~{\bf 31} (2001), pp.~167--192. 

\bibitem{Mart} 
{\sc F.~Martinelli}.
Lectures on Glauber dynamics for discrete spin models.
{\it Lectures on Probability Theory and Statistics (Saint-Flour, 1997)},
Lecture notes in Mathematics~{\bf 1717}, pp.~93--191,
Springer, Berlin, 1998.

\bibitem{MO1}
{\sc F.~Martinelli} and {\sc E.~Olivieri}.
Approach to equilibrium of Glauber dynamics in the one phase region I: The
attractive case.
{\it Communications in Mathematical Physics\/}~{\bf 161} (1994), pp.~447--486.

\bibitem{MSW}
{\sc F.~Martinelli, A.~Sinclair} and {\sc D.~Weitz}. 
Glauber dynamics on trees: Boundary conditions and mixing time. 
{\it Communications in Mathematical Physics\/}~{\bf 250} (2004), pp.~301--334.

\bibitem{FabioFabio}
{\sc F.~Martinelli} and {\sc F.L.~Toninelli}
On the mixing time of the 2D stochastic Ising model with ``plus'' boundary 
conditions at low temperature. 
{\it Communications in Mathematical Physics\/} {\bf 296} (2010), pp.~175-213.

\bibitem{peres}
{\sc Y.~Peres}.  
{\it Mixing for Markov chains and spin systems.}  Lecture Notes for PIMS Summer
School at UBC, August 2005.  Available at www.stat.berkeley.edu/$\sim$peres/ubc.pdf

\bibitem{Posta}
{\sc G.~Posta}.
Spectral gap for an unrestricted Kawasaki type dynamics.
{\it ESAIM Probability \& Statistics\/}~{\bf 1} (1995), pp.~145--181

\bibitem{Privman1}
{\sc V.~Privman} and {\sc N.~{\v S}vraki{\'c}}.
Difference equations in statistical mechanics II: Solid-on-solid models in two dimensions.
{\it Journal of Statistical Physics\/}~{\bf 51} (1988), pp.~1111--1126.

\bibitem{Privman2}
{\sc V.~Privman} and {\sc N.~{\v S}vraki{\'c}}.
Line interfaces in two dimensions: Solid-on-solid models.
{\it  Lecture Notes in Physics\/}~{\bf 338}, Springer, 1989, pp.~32--60.

\bibitem{pw}
{\sc J.G.~Propp} and {\sc D.B.~Wilson}.
Exact sampling with coupled Markov chains and applications to statistical mechanics.
{\it Random Structures \& Algorithms\/}~{\bf 9} (1996), pp.~223--252.

\bibitem{RT}
{\sc D.~Randall} and {\sc P.~Tetali}.
Analyzing Glauber dynamics by comparisons of Markov chains.
{\it Journal of Mathematical Physics\/}~{\bf 41} (2000), pp.~1598--1615.

\bibitem{Spitzer}
{\sc F.~Spitzer}.
{\it Principles of Random Walks}. 
Van~Nostrand, Princeton, 1964.

\bibitem{SZ} 
{\sc D.W.~Stroock} and {\sc B.~Zegarlinski}.
The logarithmic Sobolev inequality for discrete spin systems on a lattice.
{\it Communications in Mathematical Physics\/}~{\bf 149} (1992), pp.~175--194.

\bibitem{wilson}
{\sc D.B.~Wilson}.
Mixing times of lozenge tiling and card shuffling Markov chains.
{\it Annals of Applied Probability\/}~{\bf 14} (2004), pp.~274--325.
 

\end{thebibliography}
\end{document}